\newif\ifarxiv
\newif\ifblind
\newif\ifspiderweb
\newcommandx{\set}[2][1=1]{\ensuremath{\{#1,\ldots,#2\}}}
\newcommandx{\tlog}[3][1=,3=]{\log_{#1}^{#3}(#2)}
\newcommandx{\dist}[2][1=]{\ensuremath{\operatorname{dist}_{#1}(#2)}}
\newcommandx{\ith}[2][1=th]{#2\nobreakdash-#1}
\newtheorem{observation}{Observation}
\theoremstyle{definition}
\newtheorem{rrule}{Reduction Rule}
\newcommand{\cqed}{\hfill$\diamond$}
\newenvironment{algorithm*}
{\pushQED{\cqed{}}\algorithm}
 {\popQED\endalgorithm}
\newenvironment{construction*}
{\pushQED{\cqed{}}\construction}
 {\popQED\endconstruction}
\crefname{observation}{Observation}{Observations}
\crefname{rrule}{Reduction Rule}{Reduction Rules}
\crefname{construction}{Construction}{Constructions}
\crefname{theorem}{Theorem}{Theorems}
\Crefname{theorem}{Thm.}{Thms.}
\crefname{corollary}{Corollary}{Corollaries}
\crefname{lemma}{Lemma}{Lemmata}
\Crefname{corollary}{Cor.}{Cors.}
\crefname{proposition}{Proposition}{Propositions}
\Crefname{proposition}{Prop.}{Props.}
\newcommand{\yes}{\texttt{yes}}
\newcommand{\no}{\texttt{no}}
\newcommand{\RD}{$(\Rightarrow)\quad$}
\newcommand{\LD}{$(\Leftarrow)\quad$}
\newcommand{\problemdef}[3]{
  \begin{center}
  \begin{minipage}{0.99\textwidth}
    \noindent
    \textsc{#1}
    \begin{compactdesc}
    \item[Input:] #2
    \item[Question:] #3
    \end{compactdesc}
  \end{minipage}
  \end{center}
}
\newcommand{\N}{\mathbb{N}}
\renewcommand{\O}{\mathcal{O}}
\newcommand{\prob}[1]{\textsc{#1}}
\newcommand{\msp}{\prob{MstP}}
\newcommand{\vmsp}{\prob{V$\triangle$V-MstP}}
\newcommand{\emsp}{\prob{E$\triangle$E-MstP}}
\newcommand{\vimsp}{\prob{V$\cap$V-MstP}}
\newcommand{\eimsp}{\prob{E$\cap$E-MstP}}
\newcommand{\MSP}{\prob{Multistage $s$-$t$~Path}}
\newcommand{\TG}{\mathcal{G}}
\newcommand{\ltime}{\tau}
\newcommand{\TGfull}{\TG=(V, E_1, E_2, \ldots, E_\ltime)}
\newcommand{\TGcompact}{\TG=(V, (E_i)_{i=1}^\ltime)}
\newcommand{\ug}[1]{\ensuremath{#1_{\downarrow}}}
\newcommand{\cocl}[1]{\ensuremath{\operatorname{#1}}}
\newcommand{\W}[1]{\cocl{W[#1]}}
\newcommand{\NP}{\cocl{NP}}
\newcommand{\FPT}{\cocl{FPT}}
\newcommand{\fpt}{fixed-parameter tractable}
\newcommand{\coNP}{\cocl{coNP}}
\newcommand{\XP}{\cocl{XP}}
\newcommand{\poly}{\cocl{poly}}
\newcommand{\pNPh}{p-\NP-h.}
\newcommand{\NPincoNPslashpoly}{\ensuremath{\NP\subseteq \coNP/\poly}}
\newcommand{\NPnotincoNPslashpoly}{\ensuremath{\NP\not\subseteq \coNP/\poly}}
\newcommand{\per}{polynomial equivalence relation}
\newcommand{\calC}{\mathcal{C}}
\newcommand{\calP}{\mathcal{P}}
\newcommand{\calR}{\mathcal{R}}
\newcommand{\symdif}[2]{\ensuremath{#1\triangle#2}}
\newcommand{\alert}[1]{\emph{#1}}%
\newcommand{\tref}[1]{{\scriptsize(\Cref{#1})}}
\newcommand{\vmod}[1]{\ensuremath{\widehat{#1}}}
\newcommand{\etal}{et~al.}
\newcommand{\ceq}{\ensuremath{\coloneqq}}
\newcommand{\appsymb}{$\bigstar$}
\newcommand{\appref}[1]{\hyperref[proof:#1]{\appsymb}}
\newcommand{\appendixsection}[1]{%
  \ifarxiv{}\else{}
	\gappto{\appendixProofText}{\section{Additional Material for Section~\ref{#1}}\label{app:#1}} %
	\fi{}
}
\newcommand{\toappendix}[1]{%
  \ifarxiv{}#1\else{}
  \gappto{\appendixProofText}
  {{
    #1
  }}
  \fi{}
}
\newcommand{\appendixproof}[2]{%
  \ifarxiv{}#2\else{}\gappto{\appendixProofText}
  {
    \subsection{Proof of \autoref{#1}}\label{proof:#1}
    #2
  }
  \fi{}
}
\newcommand{\thetitle}{Multistage $s$-$t$ Path: Confronting Similarity with Dissimilarity}
\newcommand{\thetitle}{Multistage $s$-$t$ Path: Confronting Similarity with Dissimilarity in Temporal Graphs}
\title{\thetitle} %
\titlerunning{Multistage $s$-$t$ Path} %
 \author{$ $}{$ $}{}{}{}
 \authorrunning{Multistage $s$-$t$ Path}
  \author{Till Fluschnik}
  {Technische Universit\"at Berlin, Algorithmics and Computational Complexity, Berlin, Germany}
  {till.fluschnik@tu-berlin.de}
  {https://orcid.org/0000-0003-2203-4386}
  {Supported by DFG, project TORE, NI~369/18.}%
  \author{Rolf Niedermeier}
  {Technische Universit\"at Berlin, Algorithmics and Computational Complexity, Berlin, Germany}
  {rolf.niedermeier@tu-berlin.de}
  {https://orcid.org/0000-0003-1703-1236}
  {}%
  \author{Carsten Schubert}
  {Technische Universit\"at Berlin, Algorithmics and Computational Complexity, Berlin, Germany}
  {carsten.gm.schubert@campus.tu-berlin.de}
  {}
  {}%
  \author{Philipp Zschoche}
  {Technische Universit\"at Berlin, Algorithmics and Computational Complexity, Berlin, Germany}
  {zschoche@tu-berlin.de}
  {https://orcid.org/0000-0001-9846-0600}
  {}
  \authorrunning{T.~Fluschnik, R.~Niedermeier, C.~Schubert, P.~Zschoche} %
\keywords{Temporal graphs, shortest paths, consecutive similarity, consecutive dissimilarity, parameterized complexity, kernelization, representative sets in temporal graphs} %
\definecolor{lilla}{HTML}{750787}
\begin{document}
\maketitle

\begin{abstract}
Addressing a quest by~Gupta~\etal~[ICALP'14],
we provide a first, comprehensive study of 
finding a short $s$-$t$ path 
in the multistage graph model,
referred to as the \textsc{Multistage $s$-$t$ Path} problem.
Herein, 
given a sequence of graphs over the same vertex set but changing edge sets,
the task is to find short~$s$-$t$ paths in each graph (``snapshot'') such 
that in the found path sequence the
consecutive $s$-$t$ paths are ``similar''.
We measure similarity by the size of the symmetric difference of either the vertex set (vertex-similarity) or the edge set (edge-similarity) of any two consecutive paths.
We prove that these two variants of \textsc{Multistage $s$-$t$ Path} are already \NP-hard for an input sequence of only two graphs and maximum vertex 
degree four.
Motivated by this fact and natural applications of this scenario 
e.g.\ in traffic route planning,
we perform a parameterized complexity analysis. 
Among other results, for both variants,
vertex- and edge-similarity, 
we prove parameterized hardness (\W{1}-hardness) regarding the parameter path length (solution size) for both variants,
vertex- and edge-similarity.
As a further conceptual study,
we then modify the multistage model by asking for \emph{dissimilar} 
consecutive paths.
	As one of the main technical results (employing so-called representative sets known from non-temporal settings), we prove that dissimilarity allows for fixed-parameter tractability for the parameter solution size, contrasting our W[1]-hardness proof of the corresponding
similarity case. We also provide partially positive results concerning efficient
	and effective data reduction (kernelization).
\end{abstract}
\section{Introduction}
Finding short paths is perhaps the most fundamental task in algorithmic graph
theory and network analysis. There are numerous applications, including 
operations research, robotics, social network analysis, traffic and transportation, and VLSI design.
More specifically, we are concerned with
finding a short path connecting two designated vertices~$s$ and~$t$.
It is fair to say that for \emph{static} graphs the algorithmics 
(also from a practical side) of finding short(est) 
paths is very well understood.
This is much less so when considering path finding in \emph{temporal}
graphs, that is, 
graphs whose edge sets change over
time\footnote{Holme and Saram\"{a}ki~\cite{HS13,HS19} 
and~Michail~\cite{Michail16} survey algorithmic aspects 
of temporal graphs.}, a framework that in recent years received more 
and more attention in the field of network
science. For instance, models concerned with disease spreading or traffic 
routing typically are more realistic when taking into account that links 
between network nodes change over time.
In this work, we study path finding in temporal graphs with the 
additional (``multistage'') assumption that $s$-$t$-paths for consecutive
snapshots of the temporal graph shall be sufficiently ``similar''. 
We confront this 
with the opposite view that $s$-$t$-paths for consecutive
snapshots of the temporal graph shall be significantly ``dissimilar''. 
Herein, similarity can naturally be measured both by comparing the edge sets 
of the $s$-$t$~paths or the vertex sets of the $s$-$t$~paths.
Altogether, we end up with four natural problem variants.

A few words on motivation.
Both scenarios address different aspects of robustness in an
environment changing over time.
Let us first look at the dissimilarity scenario.
Here one may think of a situation where because of necessary recovery
or cleansing costs (in pandemic times one may think
of disinfection measures) one wants to avoid that subsequent
``agents'' on the way from start to goal share too many parts of their
routing paths. Moreover, one may also think of applications in the
context of so-called VIP~routing, which address 
security aspects~\cite{FKNS19,FMS19}.
As to the similarity scenario, one may think of robustness in the
sense of ``path maintenance'': every deviation from the path used
before causes additional costs (set up, preparation, checking) and
thus shall be kept at a minimum. This can be interpreted in the
spirit of incremental changes (evolutionary rather than radical
changes)~\cite{CCFM04,HN13}.

Formally,
a temporal graph~$\TGfull$ consists of a set~$V$ of 
vertices and lifetime~$\tau$ many edge sets~$E_1,E_2,\dots,E_\tau$ over~$V$.
Finding an~$s$-$t$ path over time,
also known as temporal~$s$-$t$ path,
has already been studied~\cite{WuCKHHW16,HimmelBNN19}. There, however, 
a path may use edges from $\bigcup_{i=1}^{\tau} E_i$, while in our setting 
we search for path sequences consisting of $\tau$~paths, one for each~$E_i$.
With focusing on similar \emph{and} dissimilar paths here, however, 
we introduce a new view on finding paths in temporal graphs.
More specifically, addressing a quest of Gupta~\etal~\cite{GuptaTW14}, 
one of the first studies on multistage 
problems,
this paper initiates a study of finding short~$s$-$t$ paths 
in the~\emph{multistage} model,
that is, 
finding a short~$s$-$t$ path in each \emph{snapshot}~$(V,E_i)$ of the temporal graph~$\TG$ such that consecutive $s$-$t$ paths do not differ too much; 
formally, we have the following (where $\Pi$ refers to a requested property of a solution path):

\problemdef{$\Pi$ \MSP{} ($\Pi$-\msp)}
{A temporal graph~$\TGfull$, two distinct vertices~$s,t\in V$, and two integers~$k,\ell\in \N_0$.
}
{Is there a sequence~$(P_1, P_2, \ldots,P_\ltime)$ such that~$P_i$ is an~$s$-$t$~path in~$(V,E_i)$ 
	with~$|V(P_i)|\leq k$ for all~$i\in\set{\ltime}$, 
	and~$\dist[\Pi]{P_i,P_{i+1}}\leq \ell$ for all~$i \in \set{\ltime-1}$?
}

The multistage model requests snapshot solutions such that 
(in time) consecutive ones are similar to each other.
Herein,
similarity is measured by the symmetric difference of the sets 
describing the consecutive snapshot solutions.
For paths,
there are two natural choices for comparing: the sets of vertices and 
the sets of edges. 
Thus, we obtain two distance measures defined as follows.
\begin{align*}
 \dist[\text{V$\triangle$V}]{P_i,P_{i+1}}&\ceq |\symdif{V(P_i)}{V(P_{i+1})}| && \text{(\vmsp{}),} \\
 \dist[\text{E$\triangle$E}]{P_i,P_{i+1}}&\ceq |\symdif{E(P_i)}{E(P_{i+1})}| && \text{(\emsp{}).} 
\end{align*}
Confronting the similarity request of the multistage framework 
with a dissimilarity request instead leads to the following.
\begin{align*}
 \dist[\text{V$\cap$V}]{P_i,P_{i+1}}&\ceq |(V(P_i)\cap V(P_{i+1}))\setminus\{s,t\}| && \text{(\vimsp{}),} \\
 \dist[\text{E$\cap$E}]{P_i,P_{i+1}}&\ceq |E(P_i)\cap E(P_{i+1})| && \text{(\eimsp{}).}
\end{align*}
Note that we can easily compute each of the four distance measures in linear time. 

In the following,
we study the classical and parameterized complexity of 
all four variants~\emsp{}, \vmsp{}, \vimsp{}, and \eimsp{}.
With performing a parameterized complexity analysis,
we do not only aim for a better understanding of the influence of 
several natural problem parameters like path length~$k-1$ 
or the upper bound~$\ell$ on the distance values between consecutive snapshots,
but we also want to find out where (and why) the problem variants 
are potentially different from each other; in particular, this means 
confronting the similarity (a.k.a.\ as classical multistage) view with 
the dissimilarity view.

\subparagraph*{Our contributions.}
We introduce four natural variants of the \textsc{Multistage $s$-$t$ Path} 
problem by employing four different ways to measure the distance between 
consecutive solutions.
Doing so, seemingly for the first time for multistage models in general,
we provide a seemingly first systematic study on the impact on the algorithmic complexity when 
switching between edge and vertex distances on the one hand,
and similarity versus dissimilarity distance measurements on the other hand. 

We prove all four problems to be~\NP-complete, 
even in the restricted case of only two snapshots, 
each snapshot being series-parallel and the underlying graph being 
of maximum degree four.
We provide an extensive study on the parameterized complexity landscape 
of the problems
regarding the parameters $k$~(path length), $\ell$~(maximum path distance between consecutive snapshots), $\tau$~(lifetime), $n$~(number of graph vertices),
$\ug{\nu}$~(vertex cover number of the ``underlying graph''), 
and~$\ug{\Delta}$~(maximum vertex degree in the underlying graph);
see~\cref{tab:results} for an overview.
\ifspiderweb{}%
  \begin{figure}
    \centering
    \begin{tikzpicture}
    \usetikzlibrary{shapes}
    \def\xr{1}
    \def\yr{1}

    \def\swRLA{\ensuremath{a}};
    \def\swRLB{\ensuremath{b}};
    \def\swRLC{\ensuremath{c}};
    \def\swRLD{\ensuremath{d}};
    \def\swRLE{\ensuremath{e}};
    \def\swRLF{\ensuremath{f}};
    \def\swRLG{\ensuremath{g}};
    \def\swRLH{\ensuremath{h}};
    \def\swRLI{\ensuremath{i}};
    \def\swRLJ{\ensuremath{j}};
    \def\swRLK{\ensuremath{p}};
    \def\swRLL{\ensuremath{q}};
    \def\swRLM{\ensuremath{r}};
    
    \def\swColA{lilla}
    \def\swColB{cyan}
    \def\swColC{magenta} 
    
    \begin{scope}[xshift=\xr*0cm,yshift=\yr*0cm]
          \newcommand{\swD}{11} %
          \newcommand{\swC}{6} %
          \newcommand{\swU}{6} %
          \newcommand{\swFs}{\small\color{darkgray!70!white}} %
          \newcommand{\swFsL}{\normalsize} %

          \newdimen\swR %
          \swR=3.5cm 
          \newdimen\swL %
          \swL=3.8cm

          \newcommand{\swA}{360/\swD} %

		  \begin{scope}[rotate=220]
          \draw[fill=orange!50!red!15!white,draw=none] circle [radius=\swR*6/\swC];
          \draw[fill=orange!15!white,draw=none] circle [radius=\swR*5/\swC];
          \draw[fill=yellow!20!white,draw=none] circle [radius=\swR*4/\swC];
          \draw[fill=green!20!white,draw=none] circle [radius=\swR*3/\swC];
          \draw[fill=green!12!white,draw=none] circle [radius=\swR*2/\swC];
          \draw[fill=green!5!white,draw=none] circle [radius=\swR*1/\swC];
        
          \path (0:0cm) coordinate (O); %
            \foreach \swX in {1,...,\swD}{
              \draw[thin,color=lightgray] (\swX*\swA:0) -- (\swX*\swA:\swR);
            }

            \foreach \swY in {1,...,\swU}{
              \foreach \swX in {1,...,\swD}{
                \path (\swX*\swA:\swY*\swR/\swU-0.5*\swR/\swU) coordinate (D\swX-\swY);
              }
            }
            \foreach \x/\y in {
            1/$\tau$,
            2/$\ell$,
            3/~~~~$\ell+\tau$,
            4/$k$,
            5/~~~~$k+\tau$,
            6/$\nu_{\downarrow}$,
            7/$\nu_{\downarrow}+\tau$,
            8/$n$,
            9/$n+\tau$~~~,
            10/$\Delta_{\downarrow}+k$~~~~~~~,
            11/$\tau+\Delta_{\downarrow}$~~~~~~
            }{\path (\x*\swA:\swL) node (L\x) {\swFsL \y};}
      \end{scope}

          \path (270:6*\swR/\swU-0.4*\swR/\swU) node[] {\swFs p-NP-h};
          \path (270:5*\swR/\swU-0.5*\swR/\swU) node {\swFs W[1]-h};
          \path (270:4*\swR/\swU-0.5*\swR/\swU) node {\swFs XP, W[1]-h};
          \path (270:3*\swR/\swU-0.5*\swR/\swU) node {\swFs FPT, noPK};
          \path (270:2*\swR/\swU-0.5*\swR/\swU) node {\swFs FPT};
          \path (270:1*\swR/\swU-0.5*\swR/\swU) node {\swFs FPT, PK};
          
          \newcommand{\swPolygonL}[3]{
            \foreach \x/\y/\pos/\z in {#1}{\node (t\x) at (D\x-\y)[scale=0.5,draw,#2,label={\pos:\scriptsize\z}]{};}
            \foreach[evaluate=\x as \y using int(\x+1)] \x in {1,2,...,10}{\draw[#3] (t\x) to (t\y);}\draw[#3] (t\swD) to (t1);
          }
          \newcommand{\swPolygon}[3]{
              \foreach \x/\y in {#1}{\node (t\x) at (D\x-\y)[scale=0.5,draw,#2]{};}
                \foreach[evaluate=\x as \y using int(\x+1)] \x in {1,2,...,10}{\draw[#3] (t\x) to (t\y);}\draw[#3] (t\swD) to (t1);
          }

          \swPolygonL{
            1/6/left/{$\swRLA$,$\swRLB$,$\swRLC$},
            2/6/right/{$\swRLA$,$\swRLB$,$\swRLC$},
            3/5/left/{$\swRLE$}, %
            4/4/right/{$\swRLD$,$\swRLE$},
            5/4/right/{$\swRLE$,$\swRLF$},
            6/4/above/{$\swRLD$,$\swRLH$},
            7/2/right/{$\swRLJ$},
            8/3/above/{$\swRLD$,$\swRLK$},
            9/1/left/{$\swRLM$},
            10/3/above/{$\swRLD$,$\swRLK$},
            11/6/right/{$\swRLA$,$\swRLB$}
            }{fill=\swColA,draw=\swColA!50!black,circle,scale=1.45}{thick,\swColA}
          \swPolygonL{
            1/6/above/{},
            2/6/above/{},
            3/6/above right/{$\swRLA$,$\swRLB$,$\swRLC$},
            4/4/above/{},
            5/4/above/{},
            6/4/above/{},
            7/3/above/{$\swRLJ$,$\swRLL$},
            8/3/above/{},
            9/1/above/{},
            10/3/above/{},
            11/6/above/{}
            }{fill=\swColB,draw=\swColB!50!black,diamond,scale=1}{thick,dotted,\swColB}
          \swPolygonL{
            1/6/above/{},
            2/6/above/{},
            3/6/above/{},
            4/3/below left/{$\swRLG$,$\swRLK$},
            5/2/right/{$\swRLG$},
            6/3/right/{$\swRLI$,$\swRLK$},
            7/1/right/{$\swRLI$},
            8/3/above/{},
            9/1/above/{},
            10/3/above/{},
            11/6/above/{}
            }{fill=\swColC,draw=\swColC!50!black,scale=0.95}{thick,dashed,\swColC}
      \end{scope}

      \begin{scope}[transform canvas={xshift = \xr*5cm,yshift=\yr*1.1cm,scale=0.3}]
        \path[dashed,thick,draw=black] (D1-6) to (D2-6) to (D3-6) to (D4-4) to (D5-4) to (D6-4) to (D7-3) to (D8-3) to (D9-1) to (D10-3) to (D11-6) to cycle;
      \end{scope}
      
      \begin{scope}[transform canvas={xshift = \xr*6.6cm,yshift=\yr*1.1cm,scale=0.3}]
        \path[dashed,thick,draw=black] (D1-6) to (D2-6) to (D3-6) to (D4-4) to (D5-4) to (D6-4) to (D7-3) to (D8-3) to (D9-1) to (D10-3) to (D11-6) to cycle;
      \end{scope}
      
      \begin{scope}[transform canvas={xshift = \xr*8.2cm,yshift=\yr*1.1cm,scale=0.3}]
        \path[dashed,thick,draw=black] (D1-6) to (D2-6) to (D3-6) to (D4-4) to (D5-4) to (D6-4) to (D7-3) to (D8-3) to (D9-1) to (D10-3) to (D11-6) to cycle;
      \end{scope}

      \begin{scope}[transform canvas={xshift = \xr*5cm,yshift=\yr*1.1cm,scale=0.3}]
        \path[fill=\swColA,opacity=0.2,draw=\swColA] (D1-6) to (D2-6) to (D3-5) to (D4-4) to (D5-4) to (D6-4) to (D7-2) to (D8-3) to (D9-1) to (D10-3) to (D11-6) to cycle; 
      \end{scope}
      
      \begin{scope}[transform canvas={xshift = \xr*6.6cm,yshift=\yr*1.1cm,scale=0.3}]
        \path[fill=\swColB,opacity=0.2,draw=\swColB] (D1-6) to (D2-6) to (D3-6) to (D4-4) to (D5-4) to (D6-4) to (D7-3) to (D8-3) to (D9-1) to (D10-3) to (D11-6) to cycle;
      \end{scope}
      
      \begin{scope}[transform canvas={xshift = \xr*8.2cm,yshift=\yr*1.1cm,scale=0.3}]
        \path[fill=\swColC,opacity=0.2,draw=\swColC] (D1-6) to (D2-6) to (D3-6) to (D4-3) to (D5-2) to (D6-3) to (D7-1) to (D8-3) to (D9-1) to (D10-3) to (D11-6) to cycle;
      \end{scope}

      \begin{scope}[xshift=\xr*5.25cm,yshift=\yr*3.75cm,font=\footnotesize]
        \node (t1) at (0,0)[fill=\swColA, draw=\swColA!50!black,circle,scale=0.6]{};
        \node (t2) at (\xr*1,0)[anchor=west]{\emsp{}};
        \draw[very thick,\swColA] (t1) to (t2);
        \node (t1) at (0,-0.5*\yr)[fill=\swColB,draw=\swColB!50!black,diamond,scale=0.5]{};
        \node (t2) at (\xr*1,-0.5*\yr)[anchor=west]{\vmsp{}};
        \draw[very thick,dotted,\swColB] (t1) to (t2);
        \node (t1) at (0,-1*\yr)[fill=\swColC,draw=\swColC!50!black,scale=0.6]{};
        \node (t2) at (\xr*1,-1*\yr)[anchor=west,align=left]{\eimsp{} and};
        \node (t2x) at (\xr*1,-1.4*\yr)[anchor=west,align=left]{\vimsp{}};
        \draw[very thick,dashed,\swColC] (t1) to (t2);
      \end{scope}
      \begin{scope}[xshift=\xr*7.0cm,yshift=\yr*-2.15cm]
        \node[text width=3.75cm] at (0,0) 
        {
        \begin{tabular}{@{}ll@{}}
            $^\swRLA$\,\tref{thm:emspnphard} 	& $^\swRLB$\,\tref{cor:vertexmspnphardtau} \\
            $^\swRLC$\,\tref{thm:vmsppNPhelltau} & $^\swRLD$\,\tref{thm:allxp}  \\
            $^\swRLE$\,\tref{thm:w1hard}			& $^\swRLF$\,\tref{cor:vmspw1hardktau} \\
            $^\swRLG$\,\tref{thm:eimspfptk} & $^\swRLH$\,\tref{prop:vmspvcwhard} \\
            $^\swRLI$\,\tref{thm:interseckernel-vc} & $^\swRLJ$\,\tref{thm:symdifexpkernelnutau} \\
            $^\swRLK$\,\tref{thm:empvincompr} & $^\swRLL$\,\tref{thm:vmspnopkvctau} \\
            $^\swRLM$\,{\scriptsize(trivial)}
          \end{tabular}
        };
      \end{scope}
    \end{tikzpicture}
    \caption{Overview of our results. ``p-\NP-h.'', ``\W{1}-h.'', ``FPT'',
      ``PK'', and ``noPK'' respectively abbreviate para-\NP-hard,
      \W{1}-hard,
      \fpt{},
      polynomial kernel,
      and ``no polynomial kernel unless~\NPincoNPslashpoly''.
      Note that~$\ell\leq 2k$ and~$k\leq 2\ug{\nu}+1$.\\
	  }
    \label{tab:results}
    \end{figure}
\else{}%
  \begin{table}
  \newcommand{\ssep}{$^,$}
  \centering
  \caption{Overview of our parameterized complexity results. ``p-\NP-h.'', ``\W{1}-h.'',
	  ``PK'', and ``noPK'' respectively abbreviate para-\NP-hard (that is, \NP-hard for constant parameter values),
  \W{1}-hard,
  polynomial kernel,
  and ``no polynomial kernel unless~\NPincoNPslashpoly''.
  Note that~$\ell\leq 2k$ and~$k\leq 2\ug{\nu}+1$.\\
  $^\ast$\,\tref{thm:emspnphard} 
  $^\$$\,\tref{cor:vertexmspnphardtau} 
  $^\dagger$\,\tref{thm:allxp} 
  $^\|$\,\tref{thm:interseckernel-vc}
  $^\mathsection$\,\tref{thm:symdifexpkernelnutau}
  $^\ddagger$\,\tref{thm:empvincompr} 
  $^a$\,($\ug{\Delta}=4$)
  }
  \label{tab:results}
    \begin{tabular}{@{}lllll@{}}\toprule
            & \multicolumn{2}{c}{\textsc{Similarity}} & \multicolumn{2}{c}{\textsc{Dissimilarity}} \\
            & \emsp{} & \vmsp{} & \eimsp{} &  \vimsp{} \\ \midrule\midrule
    $\tau$       & \multicolumn{2}{c}{\pNPh{}$^\ast$\ssep$^\$$}   & \multicolumn{2}{c}{\pNPh{}$^\ast$\ssep$^\$$} \\\midrule
    $\ell$       & \multicolumn{2}{c}{\pNPh{}~\tref{thm:w1hard},~\tref{thm:vmsppNPhelltau}}   & \multicolumn{2}{c}{\pNPh{}$^\ast$\ssep$^\$$} \\
    $\quad \ell+\tau$  &  \alert{XP~open}, \W{1}-h.         & \pNPh{}~\tref{thm:vmsppNPhelltau}         & \multicolumn{2}{c}{\pNPh{}$^\ast$\ssep$^\$$} \\  
    $k$          & \multicolumn{2}{c}{\XP$^\dagger$, \W{1}-h.} & \multicolumn{2}{c}{FPT~\tref{thm:eimspfptk}, noPK}  \\
    $\quad k+\tau$     & \multicolumn{2}{c}{\XP, \W{1}-h.~\tref{thm:w1hard},~\tref{cor:vmspw1hardktau}} & \multicolumn{2}{c}{FPT, \alert{PK open}} \\
    $\ug{\nu}$ & \multicolumn{2}{c}{XP, \W{1}-h.~\tref{thm:symdifvcw1hardnes}} & \multicolumn{2}{c}{FPT, noPK$^\dagger$} \\
    $\quad \ug{\nu}+\tau$ & FPT$^\mathsection$, \alert{PK~open}
                          & FPT$^\mathsection$, noPK~\tref{thm:vmspnopkvctau}
                          & \multicolumn{2}{c}{FPT, PK$^\|$} \\
    $n$          & \multicolumn{2}{c}{FPT$^\dagger$, noPK$^\ddagger$}
                  & \multicolumn{2}{c}{FPT$^\dagger$, noPK$^\ddagger$}
                  \\
    $\quad n+\tau$     & \multicolumn{2}{c}{PK (trivial)} & \multicolumn{2}{c}{PK (trivial)} \\ \midrule
    $\ug{\Delta}+\tau$       & \multicolumn{2}{c}{\pNPh{}$^{\ast,\$,a}$} & \multicolumn{2}{c}{\pNPh{}$^{\ast,\$,a}$} \\
    $\ug{\Delta}+k$ & \multicolumn{2}{c}{FPT$^\dagger$, noPK$^\dagger$} & \multicolumn{2}{c}{FPT$^\dagger$, noPK$^\dagger$} \\
    \bottomrule
  \end{tabular}
  \end{table}
\fi{}%
The results of our parameterized complexity analysis reveal
a clear distinction between similarity and dissimilarity.
When parameterized by the maximum number~$k$ of vertices in each $s$-$t$~path,
while~\emsp{} and~\vmsp{} are \W{1}-hard,  %
\eimsp{} and~\vimsp{} are \fpt{}. To this end, we develop one of the first
uses of the technique of representative sets~\cite{monien1985find,FLPS16} in the context of 
temporal graphs.
In addition, we show that, under standard complexity-theoretic 
assumptions, the similarity problem \vmsp{} parameterized 
by the number of vertices 
has no polynomial kernel, while the dissimilarity problem \vimsp{}
has one.

\subparagraph*{Related work.}
Our studies are within algorithmic temporal graph theory and, 
more specifically, contribute and extend a series of studies on the 
multistage model. Notably, all previous studies (on various basic computational
problems) within the multistage framework adhere to the ``similarity view'';
we extend this by introducing also a ``dissimilarity view''.

To the best of our knowledge, the \emph{multistage} model (which is a 
temporal model 
not necessarily only applying to graph problems) 
first appeared in~2014 in works of Eisenstat~\etal~\cite{EMS14} and Gupta~\etal~\cite{GuptaTW14}.
In a nutshell, the model
considers a sequence~$(I_1,\ldots,I_\tau)$ of instances 
of some problem~$P$ as input,
and it asks for a ``robust'' sequence of solutions to the instances in the sense that any two consecutive solutions are similar.
Several classical problems have been studied in the multistage model, 
both from an approximate~\cite{BEK19,BampisET19,BampisELP18,BampisEST19} 
and from a 
parameterized~\cite{FluschnikNRZ19,HeegerHKNRS19} algorithmics point of view.
While~\emsp{} and~\vmsp{} adhere to the original multistage model,
our two problems \eimsp{} and \vimsp{} can be seen as a novel and natural variation of the multistage model by replacing the goal of consecutive similarity with consecutive dissimilarity.

Several basic temporal graph problems are closely 
related to the task of finding a (short) temporal~$s$-$t$ path
(finding an~$s$-$t$ path over time,
that is,
an $s$-$t$~path where the edges along the path have non-decreasing 
time stamps)~\cite{HimmelBNN19,CasteigtsHMZ19,ErlebachS18,ErlebachKLSS19,EnrightMMZ19,EnrightM18,KempeKK02,WuCKHHW16,ZschocheFMN20}.
While these problems typically are concerned  with temporal $s$-$t$~paths that may span over several snapshots of the temporal graph,
in our multistage-inspired framework we aim for finding 
an~$s$-$t$ path in \emph{each} snapshot.

We mention in passing that there is also somewhat related work on 
short paths in multiplex networks (also known as multilayer or 
multimodal networks)~\cite{GSMJ17}. The main difference to 
our scenario is that the temporal aspect imposes an ordering 
of the layers whereas the multiplex view does not; in addition, 
Ghariblou et al.~\cite{GSMJ17} perform a multiobjective optimization,
being particularly interested in Pareto efficiency.

\ifarxiv{}\else{}
  Due to the lack of space, many details had to be moved to an appendix (marked by~\appsymb).
\fi{}

\section{Preliminaries}
\label{sec:prelim}
\appendixsection{sec:prelim}

We denote by~$\N$ and~$\N_0$ the natural numbers excluding and including~$0$, respectively.
By~$\log(\cdot)$ we denote the logarithm to base two.
We use basic notation from graph theory %
and parameterized algorithmics%
\ifarxiv{}.\else{},
see \cref{app:sec:prelim} for details.\fi{}

\smallskip\noindent
\textbf{Graph theory.}
An undirected graph~$G=(V,E)$ is a tuple consisting of a set~$V$ of vertices and a set~$E\subseteq \{\{v,w\}\mid v,w\in V,v\neq w\}$ of edges.
For a graph~$G$, we also denote by~$V(G)$ and~$E(G)$ the vertex and edge set of~$G$,
respectively.
For a vertex set~$W\subseteq V$,
the induced  subgraph~$G[W]$ is defined as the graph~$(W,\{\{v,w\}\in E\mid v,w\in W\})$.
A path~$P=(V,E)$ is a graph with a set~$V=\{v_1,\dots,v_k\}$ of distinct vertices and edge set~$E=\{\{v_i,v_{i+1}\}\mid 1\leq i<k\}$
(we often represent path~$P$ by the tuple~$(v_1,v_2,\dots,v_k)$);
we say that~$P$ is a~$v_1$-$v_k$ path.
The length of a path is its number of edges.
For two vertices~$s,t\in V(G)$,
an~$s$-$t$ separator~$S\subseteq V(G)\setminus\{s,t\}$ is a set of vertices such that there is no~$s$-$t$ path in~$G-S$,
where~$G-S=G[V\setminus S]$.
We denote by~$N_G(v)=\{w\in V\mid \{w,v\}\in E\}$ the neighborhood of a vertex~$v$ in~$G$,
and by~$\deg(v)=|N_G(v)|$ the degree of~$v$ in~$G$.
Moreover,
we denote by~$\Delta$ (or~$\Delta(G)$) the maximum vertex-degree of~$G$,
that is,
$\Delta(G)=\max_{v\in V}\deg(v)$.
A \emph{vertex cover} of~$G$ is a set~$W$ of vertices such that~$G-W$ contains no edge;
we denote by~$\nu$ (or~$\nu(G)$) the smallest size of a vertex cover in~$G$.
A graph with distinct terminal vertices~$s,t$ is series-parallel if it can be turned into a single edge by a sequence of contractions of degree-two vertices except~$s$ and~$t$ while removing any parallel edge that appears~\cite{Duffin65}.

\smallskip\noindent
\textbf{Temporal graph theory.}
A temporal graph~$\TGfull$ consists of a set~$V$ of vertices and lifetime~$\tau$ many edge sets~$E_1,E_2,\dots,E_\tau$ over~$V$.
We also denote by~$\tau(\TG)$ the lifetime of~$\TG$.
The size of $\TG$ is $|\TG| \ceq  |V| + \sum_{i=1}^\tau |E_i|$.
The static graph~$(V,E_i)$ is called the~\ith{$i$} snapshot.
The \emph{underlying graph}~$\ug{\TG}$ of~$\TG$ is the static graph~$(V,E_1\cup\dots\cup E_\tau)$.
The underlying vertex cover number~$\ug{\nu}$ is~$\nu(\ug{\TG})$.
The underlying maximum degree~$\ug{\Delta}$ is~$\Delta(\ug{\TG})$.

\toappendix{
\subparagraph*{Parameterized complexity.}
Let~$\Sigma$ denote a finite alphabet.
A parameterized problem~$L\subseteq \{(x,k)\in \Sigma^*\times \N_0\}$ is a subset of all instances~$(x,k)$ from~$\Sigma^*\times \N_0$,
where~$k$ denotes the parameter.
A parameterized problem~$L$ is 
\begin{inparaenum}[(i)]
 \item \fpt{} if there is an algorithm that decides every instance~$(x,k)$ for~$L$ in~$f(k)\cdot |x|^{O(1)}$ time,
 \item contained in the class~\XP{} if there is an algorithm that decides every instance~$(x,k)$ for~$L$ in~$|x|^{f(k)}$ time, and
 \item para-\NP-hard if the problem for some constant value of the parameter is~\NP-hard,
\end{inparaenum}
where~$f$ is some computable function only depending on the parameter.
For two parameterized problems~$L,L'$,
an instance~$(x,k)\in \Sigma^*\times\N_0$ of~$L$ is equivalent to an instance~$(x',k')\in \Sigma^*\times\N_0$ for~$L'$ if~$(x,k)\in L\iff (x',k')\in L'$.
A problem~$L$ is hard for the class~\W{1} (\W{1}-hard) if for every problem~$L'\in \W{1}$ there is an algorithm that maps any instance~$(x,k)$
in~$f(k)\cdot |x|^{O(1)}$ time to an equivalent instance~$(x',k')$ with~$k'=g(k)$ for some computable functions~$f,g$.
It holds true that~$\FPT\subseteq \W{1}\subseteq \XP$,
where~\FPT{} denotes the class of all \fpt{} parameterized problems.
It is believed that~$\FPT\neq \W{1}$,
and that hence no~\W{1}-hard problem is~\fpt{}.
A problem kernelization for a parameterized problem~$L$ is a polynomial-time algorithm that maps any instance~$(x,k)$ of~$L$ to an equivalent instance~$(x',k')$ of~$L$ (the kernel)
such that~$|x'|+k\leq f(k)$ for some computable function~$f$;
If~$f$ is a polynomial, we say that the problem kernelization (and kernel) is polynomial.
It is well-known that a decidable parameterized problem is \fpt{} if and only if it admits a problem kernelization. %
}

\section{Relation between distance measures: from edges to vertices}
\label{sec:relation}
\appendixsection{sec:relation}

We show that there are polynomial-time algorithms that,
given an instance of~\emsp{} or of~\eimsp{},
construct an equivalent instance of the respective vertex-counterpart.%

\ifarxiv{}
\begin{proposition}
\else{}
\begin{proposition}[\appref{prop:edgetovertex}]
\fi{}
 \label{prop:edgetovertex}
 There is an algorithm that,
 on every input~$(\TG,s,t,k,\ell)$ to~\emsp,
 computes in~$\O(|\TG|\cdot\ell)$ time an equivalent instance~$(\TG',s,t,k',\ell')$ of~\vmsp{}
 such that~$k'\in O(k\cdot \ell)$, 
 $\ell'\in O(\ell^2)$,
 $\Delta(\ug{\TG})=\Delta(\ug{\TG'})$,
 and~$\tau(\TG)=\tau(\TG')$.
\end{proposition}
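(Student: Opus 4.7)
The plan is to encode edge-symmetric difference as vertex-symmetric difference by subdividing every underlying edge sufficiently often. For each edge~$e=\{u,v\}$ of the underlying graph~$\ug{\TG}$, I will introduce $r\ceq \ell+1$ fresh ``subdivision vertices'' $w_e^1,\dots,w_e^r$ and replace~$e$ by the path $u\,w_e^1\,w_e^2\,\dots\,w_e^r\,v$ of length~$r+1$. Crucially, these subdivision vertices are re-used across snapshots: in snapshot~$i$ of~$\TG'$, all edges of the subdivision path of~$e$ appear if $e\in E_i$, and none otherwise. I will set $k'\ceq 1+(k-1)(r+1) = O(k\cdot\ell)$ and $\ell'\ceq (r+1)\ell = O(\ell^2)$.

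The structural guarantees follow immediately from the construction. The lifetime is unchanged; each original vertex keeps the number of underlying neighbors it had (one per incident edge, now routed through a subdivision path), and every subdivision vertex has degree~$2$ in~$\ug{\TG'}$, preserving~$\Delta$ under the natural non-triviality assumption $\Delta(\ug{\TG})\geq 2$. Since we add $O(\ell)$ vertices and edges per snapshot-edge, the construction fits into the required $O(|\TG|\cdot \ell)$ time.

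The central step is equivalence. The unlocking observation is that every subdivision vertex has degree exactly~$2$ in every snapshot in which its subdivision path is present; hence any $s$-$t$ path in a snapshot of~$\TG'$ either traverses the subdivision path of an underlying edge completely or avoids it entirely. This yields a natural bijection, via contraction/expansion, between $s$-$t$ paths in snapshot~$i$ of~$\TG'$ and $s$-$t$ paths in snapshot~$i$ of~$\TG$. Under this bijection, $|V(P_i')|=1+(r+1)\,|E(P_i)|$, which makes the length bounds~$|V(P_i)|\leq k$ and~$|V(P_i')|\leq k'$ equivalent by the choice of~$k'$. Since original and subdivision vertices are disjoint,
\begin{align*}
 |V(P_i')\,\triangle\, V(P_{i+1}')| \;=\; |V(P_i)\,\triangle\, V(P_{i+1})| + r\cdot|E(P_i)\,\triangle\, E(P_{i+1})|.
\end{align*}
I will then use the elementary bound $|V(P_i)\,\triangle\, V(P_{i+1})|\leq |E(P_i)\,\triangle\, E(P_{i+1})|$ for two paths sharing endpoints (charge each vertex in the vertex-symmetric difference to an incident edge of the edge-symmetric difference) to upper-bound this by $(r+1)\cdot|E(P_i)\,\triangle\, E(P_{i+1})|\leq (r+1)\ell=\ell'$ in the forward direction.

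The main (modest) obstacle is calibrating~$r$ so that the reverse direction is not spoiled by an off-by-one. Dropping the nonnegative vertex-term from the identity above yields $r\cdot|E(P_i)\,\triangle\, E(P_{i+1})|\leq \ell'$. With $r=\ell+1$, the hypothetical case $|E(P_i)\,\triangle\, E(P_{i+1})|\geq \ell+1$ would force $r\cdot(\ell+1)=(\ell+1)^2>(\ell+2)\ell=\ell'$, a contradiction; this gives the desired $|E(P_i)\,\triangle\, E(P_{i+1})|\leq \ell$. A smaller~$r$ leaks an extra unit in this inequality, while a substantially larger~$r$ would push~$k'$ or~$\ell'$ out of the promised~$O(k\ell)$ and~$O(\ell^2)$; so $r=\ell+1$ is the tight balancing point on which the whole reduction hinges.
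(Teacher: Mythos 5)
Your proposal is correct and follows essentially the same route as the paper: subdivide each underlying edge with $\ell+1$ fresh vertices reused across snapshots, set $k'$ and $\ell'$ to exactly the values the paper uses (your $1+(k-1)(r+1)$ and $(r+1)\ell$ with $r=\ell+1$ equal the paper's $k+(k-1)(\ell+1)$ and $(\ell+1)^2-1$), and argue equivalence via the degree-two contraction/expansion correspondence together with the same $(\ell+1)^2>\ell'$ threshold argument in the reverse direction. Your explicit identity for the vertex symmetric difference and the bound $|V(P_i)\triangle V(P_{i+1})|\leq|E(P_i)\triangle E(P_{i+1})|$ make the calibration slightly more transparent than the paper's inline estimates, but the underlying argument is the same.
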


\appendixproof{prop:edgetovertex}
{
    \begin{proof}
    Let~$I=(\TG=(V,E_1,\ldots,E_\ltime),s,t,k,\ell)$ be an instance of~\emsp.
    Let initially~$V'=V$.
    For each edge~$e\in E\ceq E_1\cup\dots\cup E_\ltime$, 
    add the set~$V_e=\{v_e^1,\dots,v_e^{\ell+1}\}$ of~$\ell+1$ vertices to~$V'$.
    For each~$i\in\set{\ltime}$,
    set~$E_i'$ to~$\bigcup_{e\in E_i} P_e$, 
    where~$P_e=\{\{a,v_e^1\},\{v_e^{\ell+1},b\}\}\cup \bigcup_{1\leq j\leq \ell} \{\{v_e^j,v_e^{j+1}\}\}$.
    This finishes the construction of~$\TG'=(V',E_1',\dots,E_\ltime')$.
    Finally,
    set~$k'=k+(k-1)(\ell+1)$ and~$\ell'=(\ell+1)^2-1$.
    We claim that~$I$ is a \yes-instance if and only if~$I'\ceq (\TG',s,t,k',\ell')$ is a \yes-instance.
    
    \RD{}
    Let~$\calP=(P_1,\ldots,P_\ltime)$ be a solution to~$I$.
    For each~$i\in\set{\ltime}$,
    construct~$P_i'$ with~$V(P_i')=V(P_i)\cup \{V_e\mid e\in E(P_i)\}$ and~$E(P_i')=\{P_e\mid e\in E(P_i)\}$.
    Clearly~$P_i'$ is an~$s$-$t$ path in~$(V',E_i')$.
    Moreover,~$|V(P_i')|=|V(P_i)|+|\{V_e\mid e\in E(P_i)\}| \leq k+(k-1)\cdot (\ell+1)=k'$
    and~$|\symdif{V(P_i')}{V(P_{i+1}')}|\leq \ell+(\ell+1)\cdot|\symdif{E(P_i)}{E(P_{i+1})}|\leq \ell+(\ell+1)\ell = \ell'$.
    
    \LD{}
    Let~$\calP'=(P_1',\ldots,P_\ltime')$ be a solution to~$I'$.
    For each~$i\in\set{\ltime}$,
    construct~$P_i$ with~$V(P_i)=V(P_i')\setminus \{V_e\mid P_e\subseteq E(P_i')\}$ and~$E(P_i)=\{e\mid P_e\subseteq E(P_i')\}$.
    Clearly~$P_i$ is an~$s$-$t$ path in~$(V,E_i)$.
    Moreover,
    note that~$k^*=|V(P_i')\cap V|\leq k$,
    since otherwise we have more than~$(k+1)+k^*(\ell+1)$ vertices in~$P_i'$,
    contradicting~$\calP'$ to be a solution.
    Hence,
    we have that~$|V(P_i)|=|V(P_i')\cap V|\leq k$.
    Further
    note that~$|\{e\in E\mid V_e\subseteq \symdif{V(P_i')}{V(P_{i+1}')}\}|\leq \ell$,
    since otherwise~$|\symdif{V(P_i')}{V(P_{i+1}')}|\geq (\ell+1)\cdot(\ell+1)>\ell'$.
    Hence,
    $|\symdif{E(P_i)}{E(P_{i+1})}|=|\{e\in E\mid V_e\subseteq \symdif{V(P_i')}{V(P_{i+1}')}\}|\leq \ell$.
    \end{proof}
}

\noindent

\ifarxiv{}
\begin{proposition}
		\else{}
\begin{proposition}[\appref{prop:edgetovertexintersect}]
\fi{}
 \label{prop:edgetovertexintersect}
 There is an algorithm that,
 on every input~$(\TG,s,t,k,\ell)$ to~\eimsp,
 computes in $\O(|\TG|)$ time an equivalent instance~$(\TG',s,t,k',\ell')$ of~\vimsp{}
 such that~$k'=2k-1$, 
 $\ell'=\ell$,
 $\Delta(\ug{\TG})=\max\{\Delta(\ug{\TG'}),4\}$, %
 and~$\tau(\TG)=\tau(\TG')$.
\end{proposition}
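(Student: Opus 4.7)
I plan to combine a single-edge subdivision with a snapshot-parity-dependent splitting of non-terminal vertices. Specifically, let
\[ V' \ceq \{s, t\} \cup \{v^{(0)}, v^{(1)} : v \in V \setminus \{s, t\}\} \cup \{w_e : e \in \textstyle\bigcup_{i=1}^\tau E_i\}, \]
and, for every $i \in \set{\tau}$ and every edge $e = \{u, v\} \in E_i$, put the two edges $\{u', w_e\}$ and $\{w_e, v'\}$ into $E_i'$, where $u' \ceq u$ if $u \in \{s,t\}$ and otherwise $u' \ceq u^{(i \bmod 2)}$, and analogously for $v'$. This is computable in $\O(|\TG|)$ time and preserves the lifetime $\tau$.

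Next, I will observe that each snapshot $(V', E_i')$ is bipartite between $\{s, t\} \cup \{v^{(i \bmod 2)} : v \in V \setminus \{s,t\}\}$ and $\{w_e : e \in E_i\}$, so every $s$-$t$ path in $(V', E_i')$ alternates between the two sides and is in bijection with an $s$-$t$ path in $(V, E_i)$: an original path $P_i = (s, v_1, \dots, v_{|V(P_i)|-2}, t)$ corresponds to $P_i' = (s, w_{e_1}, v_1^{(i\bmod 2)}, w_{e_2}, \dots, w_{e_{|V(P_i)|-1}}, t)$ with $|V(P_i')| = 2|V(P_i)| - 1$, justifying $k' = 2k - 1$. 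The crucial point is that since $i$ and $i+1$ have opposite parities, $P_i'$ and $P_{i+1}'$ use disjoint sets of internal vertex copies; hence the only non-$\{s,t\}$ vertices they can both visit are subdivision vertices $w_e$, and such a $w_e$ is shared precisely when $e \in E(P_i) \cap E(P_{i+1})$. This gives the identity
\[ |(V(P_i') \cap V(P_{i+1}')) \setminus \{s, t\}| = |E(P_i) \cap E(P_{i+1})|, \]
from which the equivalence of the two instances under $\ell' = \ell$ follows in both directions by a direct check on the above bijection.

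Finally, I will verify the degree bound: each subdivision vertex $w_e$ with $e = \{u, v\}$ is adjacent in $\ug{\TG'}$ only to (a subset of) the even- and odd-parity copies of $u$ and $v$ (with $s$ and $t$ serving as both of their own copies), so $\deg_{\ug{\TG'}}(w_e) \leq 4$, while each $v^{(p)}$ inherits at most $\deg_{\ug{\TG}}(v)$ neighbours—one per edge of $\ug{\TG}$ incident to $v$. Hence $\Delta(\ug{\TG'}) \leq \max\{\Delta(\ug{\TG}), 4\}$, which matches the claim. The main conceptual step is the parity trick, which is what decouples shared original vertices from the vertex-intersection count; once it is in place, both directions of the equivalence reduce to a one-line computation, so I expect the bookkeeping in the bipartiteness-induced bijection (in particular, the treatment of $s$ and $t$) to be the most error-prone part.
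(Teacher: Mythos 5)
Your proof is correct and rests on the same gadget as the paper's: subdivide every edge by a vertex $w_e$ and keep two parity copies $v^{(0)},v^{(1)}$ of every non-terminal vertex, so that consecutive paths can only meet in subdivision vertices, which exactly count edge intersections. The one place you deviate is in which edges go into which snapshot. The paper puts \emph{both} parity copies $E_e^0\cup E_e^1$ of every edge gadget into every snapshot $E_i'$ and only uses the parity-$(i\bmod 2)$ copies when building a solution; you put only the parity-$(i\bmod 2)$ copies into $E_i'$. Your variant is arguably cleaner: because each snapshot $(V',E_i')$ is bipartite with one side being a single parity class, every $s$-$t$ path in it automatically uses at most one copy of each original vertex, so the backward direction is immediate. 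The paper instead restricts attention to solutions satisfying $|V_v\cap V(P_i')|\le 1$ for all $v$ and $i$ --- a normalization it asserts but does not justify (in the paper's snapshot a path could a priori visit both $v^0$ and $v^1$, in which case contracting the subdivision vertices yields a walk rather than a path). Your construction sidesteps that issue entirely while delivering the same parameters $k'=2k-1$, $\ell'=\ell$, unchanged lifetime, linear running time, and the degree bound $\Delta(\ug{\TG'})\le\max\{\Delta(\ug{\TG}),4\}$, which is the direction actually needed to transfer the maximum-degree-four hardness.
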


\appendixproof{prop:edgetovertexintersect}
{
    \begin{proof}
    Let $I=(\TG=(V,E_1,\ldots,E_\tau),s,t,k,\ell)$ be an instance of~\eimsp{},
    and denote by~$E=E_1\cup\dots\cup E_\tau$.
    Define~for each~$v\in V\setminus\{s,t\}$ the set~$V_v=V_v^0\cup V_v^1$, 
    where~$V_v^i=\{v^i\}$ for each~$i\in\{0,1\}$, 
    and define~$V_s=\{s\}$ and~$V_t=\{t\}$.
    Set~$V^*=\bigcup_{v\in V} V_v$.
    We set~$V'=V^* \cup \{x_e\mid e\in E\}$.
    Next,
    for each edge~$e=\{v,w\}\in E$ with~$v,w\not\in\{s,t\}$, 
    let~$E_e^0=\{\{v^0,x_e\},\{w^0,x_e\}\}$ and~$E_e^1=\{\{v^1,x_e\},\{w^1,x_e\}\}$,
    and for each edge~$e=\{v,w\}\in E$ with~$v\in\{s,t\}$ and~$w\not\in\{s,t\}$, 
    let~$E_e^0=\{\{v^0,x_e\},\{s,x_e\}\}$ and~$E_e^1=\{\{v^1,x_e\},\{s,x_e\}\}$.
    If~$e=\{s,t\}\in E$, then set $E_e^0=E_e^1=\{\{\{s,x_e\},\{x_e,t\}\}$.
    Finally,
    let~$E_e=E_e^0\cup E_e^1$ and
    $E_i'=\bigcup_{e\in E_i} E_e$.
    Set $k'=2k-1$ and $\ell'=\ell$.
    This finishes the construction of instance~$I'\ceq (\TG'=(V',E_1',\ldots,E_\tau'),s,t,k',\ell')$ of~\vimsp{}.
    Note that~$I'$ can be constructed in~$\O(|\TG|)$ time.
    We claim that~$I$ is a \yes-instance if and only if~$I'$ is a \yes-instance.
    
    \RD{}
    Let~$(P_1,\dots,P_\tau)$ be a solution to~$I$.
    We claim that~$(P_1',\ldots,P_\tau')$ with~$V(P_i')=\bigcup_{v\in V(P_i)} V_v^{i\bmod 2} \cup \{x_e\mid e\in E(P_i)\}$ and~$E(P_i')=\bigcup_{e\in E(P_i)} E_e^{i\bmod 2}$
    is a solution to~$I'$.
    First, observe that each~$P_i'$ is an~$s$-$t$ path,
    and~$|V(P_i')|=|V(P_i)|+|E(P_i)|=2k-1$.
    Moreover, 
    $|(V(P_i')\cap V(P_{i+1}'))\setminus\{s,t\}|=|\{x_e\mid e\in E(P_i)\cap E(P_{i+1})\}|\leq \ell=\ell'$.
    
    \LD{}
    Let~$(P_1',\ldots,P_\tau')$ be a solution to~$I'$ such that for each~$P_i'$ it holds true that~$|V_v\cap V(P_i')|\leq 1$.
    Note that~$V(P_i')=\{s,t\}\uplus W_i\uplus X_i$ with~$W_i\subseteq V^*$ and~$X_i\subseteq \{x_e\mid e\in E\}$.
    We claim that~$(P_1,\dots,P_\tau)$ with~$V(P_i)=\{v\mid v^i\in W_i\}\cup\{s,t\}$ and~$E(P_i)=\{e\mid x_e\in X_i\}$ is a solution to~$I$.
    First, observe that each~$P_i$ is an~$s$-$t$ path,
    and~$|V(P_i)|\leq k$.
    Moreover,
    $|E(P_i)\cap E(P_{i+1})|\leq |X_i\cap X_{i+1}|\leq \ell'=\ell$.
\end{proof}
}

\noindent
Due to~\cref{prop:edgetovertex,prop:edgetovertexintersect},
often we just may prove lower bounds for \emsp{} and~\eimsp{}, 
and upper bounds for~\vmsp{} and~\vimsp{},
and transfer the results to their respective counterparts.

\section{NP-hardness even for two snapshots of maximum degree four}
\label{sec:nphardness}
\appendixsection{sec:nphardness}

In this section,
we prove that all four problems are \NP-hard even for only two snapshots and the maximum underlying vertex-degree being four.

\ifarxiv{}
\begin{theorem}
		\else{}
\begin{theorem}[\appref{thm:emspnphard}]
\fi{}
 \label{thm:emspnphard}
	\emsp{} and~\eimsp{},
	the latter with~$\ell=0$,
	are \NP-hard even if~$\TG$ consists of two snapshots both being series-parallel graphs and~$\Delta(\ug{\TG})=4$. 
\end{theorem}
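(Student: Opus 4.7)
The plan is a polynomial-time reduction from \prob{Partition}, which asks, given positive integers $a_1,\dots,a_n$ with $\sum_i a_i = 2S$, whether some subset sums to~$S$. I first rescale by $a_i' \coloneqq (n+1) a_i$ so that additive ``counting'' terms of size at most $n$ become negligible against item weights; this preserves the Partition answer. The temporal graph is built on a spine $s = u_0, u_1, \dots, u_n = t$ by attaching, between each consecutive pair $u_{i-1}, u_i$, a parallel-composition gadget $G_i$ consisting of two internally vertex-disjoint $u_{i-1}$-$u_i$ paths: a \emph{short} branch of length $2$ (one fresh intermediate vertex) and a \emph{long} branch of length $a_i'+1$ (through $a_i'$ fresh intermediate vertices). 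Both snapshots receive the full chain.

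By construction, each snapshot is a series composition of parallel compositions and is therefore series-parallel; every spine vertex $u_i$ has exactly four neighbours (two from $G_i$, two from $G_{i+1}$), while every intermediate vertex has degree two, so $\Delta(\ug{\TG}) = 4$. Any $s$-$t$ path in a snapshot picks exactly one branch in every $G_i$. For \eimsp with $\ell = 0$, the two paths $P_1, P_2$ must be edge-disjoint and hence take opposite branches in each gadget; writing $T \subseteq \set{n}$ for the gadgets where $P_1$ chooses the short branch yields $|E(P_1)| = 2|T| + \sum_{i \notin T}(a_i'+1)$ and $|E(P_2)| = 2(n-|T|) + \sum_{i \in T}(a_i'+1)$. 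Setting $k - 1 \coloneqq (n+1)S + 2n$, the two length bounds together with the $(n+1)$-scaling force $\sum_{i \notin T} a_i \le S$ and $\sum_{i \in T} a_i \le S$, so both equal $S$ and $T$ is a valid partition; the converse direction is immediate by reading the paths off of a Partition solution.

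For \emsp the plan is to adapt the same gadget chain so that the symmetric-difference budget $\ell$ plays the role previously played by edge-disjointness: one natural option is to place only the short branches in one snapshot and only the long branches in the other, together with a shared ``backbone'' segment per gadget that couples the two choices, so that $|E(P_1) \triangle E(P_2)|$ is determined exactly by which gadgets are ``flipped'' between snapshots, with $k$ and $\ell$ chosen as in the \eimsp case. The main obstacle will be exactly this forcing step: unlike edge-disjointness, which forces complementary choices on a per-gadget basis for free, the similarity measure pushes the two paths to agree, so the construction must arrange matters so that the only way to simultaneously meet both length caps is to disagree on a prescribed set of gadgets. Once this forcing is engineered, the Partition-encoding calculation goes through verbatim, and the gadgets remain series-parallel with underlying maximum degree four.
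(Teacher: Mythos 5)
There is a genuine gap, and it is fatal to the whole approach: \prob{Partition} is only \emph{weakly} \NP-hard, and your reduction encodes each number $a_i$ in unary as the length of the long branch (a path through $a_i'=(n+1)a_i$ fresh vertices). The constructed temporal graph therefore has $\Theta(n\sum_i a_i)$ vertices, which is exponential in the binary encoding length of the \prob{Partition} instance; the reduction is only pseudo-polynomial. Since \prob{Partition} admits a pseudo-polynomial algorithm, a pseudo-polynomial reduction from it proves nothing about \NP-hardness. You cannot repair this by switching to unary-encoded \prob{Partition} (that problem is in \classP{}), and the natural strongly \NP-hard replacement, \prob{3-Partition}, does not fit your gadget structure, which only supports one binary (this-side/that-side) choice per item via the two branches and the two snapshots. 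The internal logic of your \eimsp{} gadget (edge-disjointness with $\ell=0$ forces complementary branch choices, and the two length caps force both subset sums to equal $S$) is fine as far as it goes, and the series-parallel and degree-four claims check out, but the complexity-theoretic foundation is missing. Separately, your \emsp{} argument is only a sketch: you explicitly defer the key forcing step (making the symmetric-difference budget play the role of edge-disjointness), which is exactly the part that needs a concrete construction and proof.

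For comparison, the paper avoids numbers entirely and reduces from \prob{3-SAT} with a single construction used for both variants: the first snapshot routes the $s$-$t$ path through either an ``$a$-path'' or a ``$b$-path'' for each variable (selecting a truth assignment, or its complement in the \eimsp{} case), and the second snapshot strings together clause gadgets that can only be traversed via an edge consistent (respectively, disjoint) with the first path's choice. If you want to salvage a number-theoretic route, you would need a strongly \NP-hard source problem whose choices are binary per element, which essentially pushes you back toward a satisfiability-style reduction.
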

\appendixproof{thm:emspnphard}
{
\begin{proof}
		The theorem follows directly from \cref{prop:emspnphardtau,prop:eimspnphardtau}.
\end{proof}
}
\noindent
We give two polynomial-time many-one reductions from the \NP-complete \prob{3-SAT}, 
each employing the following.

\begin{construction*}
 \label{constr:emspnphard}
 Let~$(X=\{x_1,\ldots,x_n\},\calC=(C_1,\ldots,C_n))$ be an instance of~\prob{3-SAT} where w.l.o.g.\ the number~$n$ of variables equals the number of clauses, 
 and let~$d\geq 2$ denote the most frequent appearance (along the clause sequence) of any literal of some variable in~$X$.
 We construct a temporal graph~$\TG=(V,E_1,E_2)$ as follows (see~\cref{fig:emspnphard} for an illustration).
 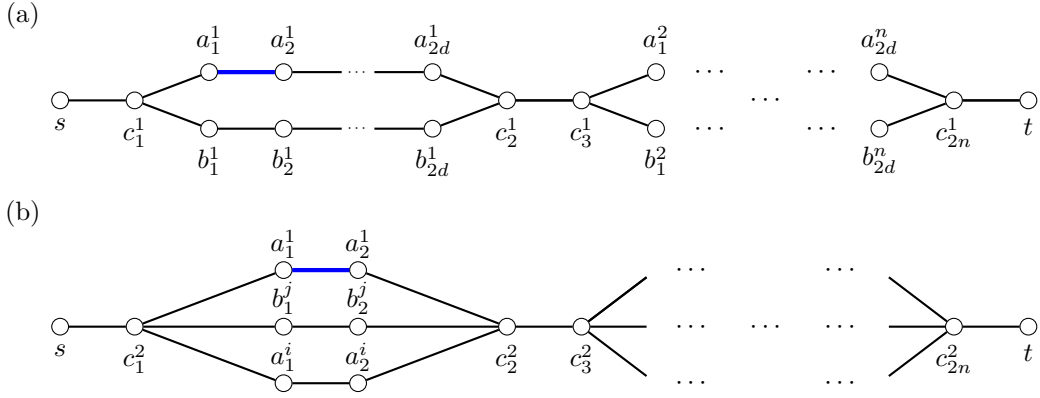
\begin{figure}[t]
  \begin{tikzpicture}
    \tikzstyle{xnode}=[circle, draw,scale=2/3]
    \tikzstyle{xedge}=[thick]
    \def\xr{0.98}
    \def\yr{0.75}
    \def\ys{0.5}
	\begin{scope}
    \node (s) at (-0.5*\xr,1.5*\yr)[]{(a)};
    \node (s) at (0,0)[xnode,label=-90:{$s$}]{};
    \node (c11) at (1*\xr,0)[xnode,label=-90:{$c_1^1$}]{};
    \node (a11) at (2*\xr, \ys*\yr)[xnode,label=90:{$a_1^1$}]{};
    \node (b11) at (2*\xr, -\ys*\yr)[xnode,label=-90:{$b_1^1$}]{};
    \node (a12) at (3*\xr, \ys*\yr)[xnode,label=90:{$a_2^1$}]{};
    \node (b12) at (3*\xr, -\ys*\yr)[xnode,label=-90:{$b_2^1$}]{};
    \node (a1x) at (4*\xr, \ys*\yr)[scale=0.6]{$\cdots$};
    \node (b1x) at (4*\xr, -\ys*\yr)[scale=0.6]{$\cdots$};
    \node (a12d) at (5*\xr, \ys*\yr)[xnode,label=90:{$a_{2d}^1$}]{};
    \node (b12d) at (5*\xr, -\ys*\yr)[xnode,label=-90:{$b_{2d}^1$}]{};
    \node (c12) at (6*\xr,0)[xnode,label=-90:{$c_2^1$}]{};
    \node (c13) at (7*\xr,0)[xnode,label=-90:{$c_3^1$}]{};
    \node (a21) at (8*\xr, \ys*\yr)[xnode,label=90:{$a_1^2$}]{};
    \node (b21) at (8*\xr, -\ys*\yr)[xnode,label=-90:{$b_1^2$}]{};
    \node at (8.75*\xr, \ys*\yr)[]{$\cdots$};
    \node at (8.75*\xr, -\ys*\yr)[]{$\cdots$};
    \node at (9.5*\xr, 0*\yr)[]{$\cdots$};
    \node at (10.25*\xr, \ys*\yr)[]{$\cdots$};
    \node at (10.25*\xr, -\ys*\yr)[]{$\cdots$};
    \node (an2d) at (11*\xr, \ys*\yr)[xnode,label=90:{$a_{2d}^n$}]{};
    \node (bn2d) at (11*\xr, -\ys*\yr)[xnode,label=-90:{$b_{2d}^n$}]{};
    \node (c12n) at (12*\xr,0)[xnode,label=-90:{$c_{2n}^1$}]{};
    \node (t) at (13*\xr,0)[xnode,label=-90:{$t$}]{};
    \draw[xedge] (s) to (c11) to (a11) to (a12) to (a1x) to (a12d) to (c12) to (c13) to (a21);
    \draw[xedge] (c11) to (b11) to (b12) to (b1x) to (b12d) to (c12) to (c13) to (b21);
    \draw[xedge] (an2d) to (c12n) to (t);
    \draw[xedge] (bn2d) to (c12n) to (t);
    \draw[ultra thick,blue] (a11) to (a12);
    \end{scope}

    \begin{scope}[yshift=\yr*-4.0cm]
    \node (s) at (-0.5*\xr,2*\yr)[]{(b)};
    \node (s) at (0,0)[xnode,label=-90:{$s$}]{};
    \node (c11) at (1*\xr,0)[xnode,label=-90:{$c_1^2$}]{};
    \node (a11) at (3*\xr, 1*\yr)[xnode,label={[yshift=\yr*-0.1cm]90:{$a_1^1$}}]{};
    \node (bj1) at (3*\xr, 0*\yr)[xnode,label={[yshift=\yr*-0.1cm]90:{$b_1^j$}}]{};
    \node (ai1) at (3*\xr, -1*\yr)[xnode,label={[yshift=\yr*-0.1cm]90:{$a_1^i$}}]{};
    \node (a12) at (4*\xr, 1*\yr)[xnode,label={[yshift=\yr*-0.1cm]90:{$a_2^1$}}]{};
    \node (bj2) at (4*\xr, 0*\yr)[xnode,label={[yshift=\yr*-0.1cm]90:{$b_2^j$}}]{};
    \node (ai2) at (4*\xr, -1*\yr)[xnode,label={[yshift=\yr*-0.1cm]90:{$a_2^i$}}]{};
    \node (c12) at (6*\xr,0)[xnode,label=-90:{$c_2^2$}]{};
    \node (c13) at (7*\xr,0)[xnode,label=-90:{$c_3^2$}]{};
    \node (ax) at (8*\xr, 1*\yr)[]{};
    \node (bxx) at (8*\xr, 0*\yr)[]{};
    \node (bx) at (8*\xr, -1*\yr)[]{};
    \node at (8.5*\xr, 1*\yr)[]{$\cdots$};
    \node at (8.5*\xr, 0*\yr)[]{$\cdots$};
    \node at (8.5*\xr, -1*\yr)[]{$\cdots$};
    \node (temp) at (9.5*\xr, 0*\yr)[]{$\cdots$};
    \node at (10.5*\xr, 1*\yr)[]{$\cdots$};
    \node at (10.5*\xr, 0*\yr)[]{$\cdots$};
    \node at (10.5*\xr, -1*\yr)[]{$\cdots$};
    \node (anx) at (11*\xr, 1*\yr)[]{};
    \node (bnxx) at (11*\xr,0*\yr)[]{};
    \node (bnx) at (11*\xr, -1*\yr)[]{};
    \node (c12n) at (12*\xr,0)[xnode,label=-90:{$c_{2n}^2$}]{};
    \node (t) at (13*\xr,0)[xnode,label=-90:{$t$}]{};
    \draw[xedge] (s) to (c11) to (a11) to (a12) to (c12) to (c13) to (ax);
    \draw[xedge] (c11) to (bj1) to (bj2) to (c12) to (ai2) to (ai1) to (c11);
    \draw[xedge] (ax) to (c13);
    \draw[xedge] (bx) to (c13);
    \draw[xedge] (bxx) to (c13);
    \draw[xedge] (anx) to (c12n) to (t);
    \draw[xedge] (bnx) to (c12n);
    \draw[xedge] (bnxx) to (c12n);
    \draw[ultra thick,blue] (a11) to (a12);
    \end{scope}

    \end{tikzpicture}
  \caption{Illustration to~\cref{constr:emspnphard} with~(a) illustrating the first snapshot and (b) illustrating the second snapshot, exemplified for clause~$C_1=(x_1 \lor \overline{x_j} \lor x_i)$.
  The edge~$\{a_1^1,a_2^1\}$ is highlighted in both~(a) and~(b).}
  \label{fig:emspnphard}
 \end{figure}

 Let 
 $V:=\{s,t\}\cup \{c_1^i,\ldots,c_{2n}^i\mid i\in\{1,2\}\}\cup\{a_1^i,\ldots,a_{2d}^i\mid x_i\in X\}\cup\{b_1^i,\ldots,b_{2d}^i\mid x_i\in X\}$.
 Let~$E_{i,a}:=\bigcup_{1\leq j< 2d}\{\{a_{j}^i,a_{j+1}^i\}\}$ and~$E_{i,b}:=\bigcup_{1\leq j< 2d}\{\{b_{j}^i,b_{j+1}^i\}\}$.
 Then~$E_1$ contains the edge~$\{s,c_1^1\}$, the edge set~$\bigcup_{1\leq i \leq n} \{\{c_{2i-1}^1,a_1^i\},\{c_{2i-1}^1,b_1^i\}\}$,
 the edge set~$\bigcup_{1\leq i \leq n} \{\{c_{2i}^1,a_{2d}^i\},\allowbreak\{c_{2i}^1,b_{2d}^i\}\}$,
 the edge~$\{t,c_{2n}^1\}$,
 the edge set~$\bigcup_{1\leq i < n} \{\{c_{2i}^1,c_{2i+1}^1\}\}$,
 and the edge sets~$\bigcup_{1\leq i\leq n} E_{i,a}$ and~$\bigcup_{1\leq i\leq n} E_{i,b}$.
 For~$E_2$,
 for each clause~$C_q\in \calC$ we define the vertex set~$V_{C_q}$ and edge set~$E_{C_q}$ as follows.
 If~${C_q}$ contains the \ith{$j$} appearance of the positive literal~$x_i$, 
 then add~$a_{2j-1}^i,a_{2j}^i$ to~$V_{C_q}$ 
 and the edges~$\{a_{2j-1}^i,a_{2j}^i\},\{c_{2q-1}^2,a_{2j-1}^i\},\{c_{2q}^2,a_{2j}^i\}$ to~$E_{C_q}$.
 If~${C_q}$ contains the \ith{$j$} appearance of the negative literal~$\overline{x_i}$, 
 then add~$b_{2j-1}^i,b_{2j}^i$ to~$V_{C_q}$ 
 and the edges~$\{b_{2j-1}^i,b_{2j}^i\},\{c_{2q-1}^2,b_{2j-1}^i\},\{c_{2q}^2,b_{2j}^i\}$ to~$E_{C_q}$.
 Then,
 $E_2$ contains the edges~$\{s,c_1^2\}$, 
 $\{t,c_{2n}^2\}$, 
 the edge set~$\bigcup_{1\leq i < n} \{\{c_{2i}^2,c_{2i+1}^2\}\}$,
 and
 $E_{C_q}$ for each~$q\in\set{n}$.
 This finishes the construction of~$\TG$.
 It is not difficult to see that~$(V,E_1)$ and~$(V,E_2)$ are series-parallel graphs.
 Moreover,
 $\Delta(\ug{\TG})=4$.
 Set~$k=2+2n+2d\cdot n$.
\end{construction*}

\noindent
Intuitively,
if an instance constructed using~\cref{constr:emspnphard} is a \yes-instance for \emsp{},
then the $s$-$t$~path in the first snapshot selects setting variables to true or false such that
the $s$-$t$~path in the second snapshot can pass a literal for each clause.
It follows that~\cref{constr:emspnphard} is a polynomial-time many-one reduction.

The next two propositions, 
\cref{prop:emspnphardtau,{prop:eimspnphardtau}},
together prove~\cref{thm:emspnphard}.

\ifarxiv{}
\begin{proposition}
		\else{}
\begin{proposition}[\appref{prop:emspnphardtau}]
\fi{}
 \label{prop:emspnphardtau}
 \emsp{} is \NP-hard even if~$\TG$ consists of two snapshots both being series-parallel graphs and~$\Delta(\ug{\TG})=4$. 
\end{proposition}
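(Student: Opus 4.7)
The plan is to use \cref{constr:emspnphard} as a polynomial-time many-one reduction from \prob{3-SAT}, taking an instance $(X, \calC)$ with $n$ variables and $n$ clauses, letting $d$ denote the maximum number of appearances (across $\calC$) of any literal, applying the construction to obtain $\TG = (V, E_1, E_2)$, and setting $k = 2 + 2n + 2dn$ (as in the construction) together with $\ell \ceq 2dn + 4n + 2$. The structural conditions---two snapshots, each series-parallel, and $\Delta(\ug{\TG}) = 4$---are already noted in \cref{constr:emspnphard} and are immediate from inspection of the gadgets.

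First I would establish a tight structural characterization of $s$-$t$ paths in each snapshot. In snapshot~$1$, the only way from $c_{2i-1}^1$ to $c_{2i}^1$ is via the full $a$-chain or the full $b$-chain of variable~$x_i$, so every $s$-$t$ path $P_1$ has exactly $k$ vertices and naturally encodes a truth assignment~$\alpha$ by setting $x_i$ to true precisely when $P_1$ uses the $a$-chain of $x_i$. In snapshot~$2$, any $s$-$t$ path $P_2$ must, for each clause $C_q$, pick one of the three literal gadgets attached to $c_{2q-1}^2$ and $c_{2q}^2$, so $P_2$ corresponds to a selection of one literal per clause; moreover $|V(P_2)| = 4n + 2 \leq k$ since $d \geq 2$.

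Next I would analyze the symmetric difference. Because all $c$-vertices carry a snapshot label, the only edges in $E_1 \cap E_2$ are the literal edges $\{a_{2j-1}^i, a_{2j}^i\}$ and $\{b_{2j-1}^i, b_{2j}^i\}$. A routine count yields $|E(P_1)| = 2dn + 2n + 1$ and $|E(P_2)| = 4n + 1$. The appearance-index~$j$ makes the $n$ literal edges selected by $P_2$ pairwise distinct, and a literal edge chosen for clause $C_q$ lies in $P_1$ iff $\alpha$ satisfies the corresponding literal; hence $|E(P_1) \cap E(P_2)|$ equals the number of clauses satisfied by the literal selection of $P_2$ under $\alpha$, and is at most $n$. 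With $\ell = |E(P_1)| + |E(P_2)| - 2n$, the constraint $|E(P_1) \triangle E(P_2)| \leq \ell$ becomes $|E(P_1) \cap E(P_2)| = n$, equivalently every clause is satisfied by $\alpha$, i.e., $(X, \calC)$ is satisfiable.

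The main obstacle is the careful bookkeeping needed to pin $\ell$ down tightly so that both implications of the reduction hold; once the forced structure of $P_1$ and $P_2$ and the snapshot-disjointness of all non-literal edges are established, the equivalence with satisfiability follows directly. Membership in \NP{} is routine, completing the \NP-hardness proof.
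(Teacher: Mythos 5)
Your proposal is correct and follows essentially the same route as the paper: the same \cref{constr:emspnphard}, the same correspondence between $s$-$t$ paths in the first snapshot and truth assignments, and the same counting argument forcing $|E(P_1)\cap E(P_2)|=n$ so that every clause gadget is traversed via a literal edge already used by~$P_1$. Your choice $\ell=2dn+4n+2=|E(P_1)|+|E(P_2)|-2n$ is in fact the cleaner bookkeeping (the paper's stated $\ell=5n+2dn+4$ is computed somewhat loosely), and the rest of the argument matches the paper's proof.
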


\appendixproof{prop:emspnphardtau}
{
    \begin{proof}
    Let~$I=(X=\{x_1,\ldots,x_n\},\calC=(C_1,\ldots,C_n))$ be an instance of~\prob{3-SAT} such that the number~$n$ of variables equals the number of clauses, 
    and let~$d$ denote the largest appearance of any literal of some variable in~$X$.
    Let~$I'=(\TG=(V,E),s,t,k,\ell)$ with~$\ell=5n+2dn+4$ and $k=2+2n+2d\cdot n$ be the instance of~\emsp{} obtained from~$I$ using \cref{constr:emspnphard}.
    We claim that~$I$ is a \yes-instance if and only if~$I'$ is a \yes-instance.

    \RD{}
    Let~$X'\subseteq X$ be a solution.
    We construct the paths~$(P_1,P_2)$ as follows.
    Vertex set~$V(P_1)$ contains~$\{s,t\}\cup \{c_1^1,\ldots,c_{2n}^1\}$
    and~$V(P_2)$ contains~$\{s,t\}\cup \{c_1^2,\ldots,c_{2n}^2\}$.
    For each~$i\in\set{n}$,
    if~$x_i\in X'$, then~$V(P_1)$ contains~$\{a_1^i,\ldots,a_{2d}^i\}$, and
    if~$x_i\not\in X'$, then~$V(P_1)$ contains~$\{b_1^i,\ldots,b_{2d}^i\}$.
    Set~$E(P_1)=E(G[V(P_1)])$.
    Note that
    $P_1$ is an~$s$-$t$ path and~$|V(P_1)|=2+2n+2d\cdot n=k$.
    Observe that $V_{C_q}\cap V(P_1)\neq \emptyset$,
    since~$X'$ is a solution.
    For~$E(P_2)$,
    for each~$q\in\set{n}$,
    if~$a_{2j-1}^i,a_{2j}^i \in V_{C_q}\cap V(P_1)$,
    then~$E(P_2)$ contains the edges~$\{a_{2j-1}^i,a_{2j}^i\},\{c_{2q-1}^2,a_{2j-1}^i\},\{c_{2q}^2,a_{2j}^i\}$ (analogously for~$b$).
    Note that
    $P_2$ is an~$s$-$t$ path in~$(V,E_2)$ with~$|V(P_2)|=2+2n+2n<k$.
    It remains to consider~$\symdif{E(P_1)}{E(P_2)}$.
    Let~$B=\{\{v,w\}\mid v,w\in V_{C_q}\cap V(P_2)\}$.
    Observe that~$E(P_1)\cap E(P_2)=B$,
    since for all other edges in~$E(P_2)\setminus B$ we have that at least one endpoint is in~$\{c_1^2,\dots,c_{2d}^2\}$,
    which is disjoint from~$V(P_1)$.
    Hence~$|\symdif{E(P_1)}{E(P_1)}|\leq |E(P_1)\cup E(P_2)|-|E(P_1)\cap E(P_2)|=(2+2n+2dn+2+2n+2n-2)-n=5n+dn+2=\ell$.

    \LD{}
    Let~$(P_1,P_2)$ be a solution to~$I'$.
    Observe that for all~$i\in\set{n}$,
    $V(P_1)$ contains as a subset either the set~$\{a_1^i,\dots,a_{2d}^i\}$ or the set~$\{b_1^i,\dots,b_{2d}^i\}$.
    Let~$X'=\{x_i\in X\mid a_1^i,\dots,a_{2d}^i\in V(P_1) \}$.
    We claim that~$X'$ is a solution to~$I$.
    Let~$C_q$ be an arbitrary clause from~$\calC$.
    Let~$\{c_{2q-1}^2,v,w,c_{2q}^2\}$ be the vertices on the subpath from~$P_2$ connecting~$c_{2q-1}^2$ with~$c_{2q}^2$,
    where~$v,w\in V_{C_q}$.
    Note that~$\{v,w\}\in E(P_1)$, 
    since otherwise~$|E(P_1)\cup E(P_2)|-|E(P_1)\cap E(P_2)|> (2+2n+2dn+2+2n+2n-2)-n=\ell$.
    Hence,
    if~$\{v,w\}=\{a_{2j-1}^i,a_{2j}^i\}$ for some~$i\in\set{n}$ and~$j\in\set{2d-1}$, then
    $x_i\in X'$, setting~$C_q$ to true.
    Otherwise,
    if~$\{v,w\}=\{b_{2j-1}^i,b_{2j}^i\}$ for some~$i\in\set{n}$ and~$j\in\set{2d-1}$, then
    $x_i\not\in X'$, setting~$C_q$ to true ($x_i$ is negated in~$C_q$).
    Since~$C_q$ was chosen arbitrarily,
    it follows that~$X'$ is a solution to~$I$.
    \end{proof}
}

\noindent
Interestingly,
\cref{constr:emspnphard} also gives a polynomial-time many-one reduction for \eimsp{}.
Here the intuition is opposite:
the first snapshot path selects setting the variables to the \emph{complement} of a satisfying assignment such that the second snapshot path can pass the ``clause gadgets'' without passing any edge contained in the first snapshot path.

\ifarxiv{}
\begin{proposition}
		\else{}
\begin{proposition}[\appref{prop:eimspnphardtau}]
\fi{}
 \label{prop:eimspnphardtau}
 \eimsp{} is \NP-hard even if~$\TG$ consists of two snapshots both being series-parallel graphs, $\Delta(\ug{\TG})=4$, and~$\ell=0$. 
\end{proposition}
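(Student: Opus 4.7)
The plan is to show that \cref{constr:emspnphard} already serves as a polynomial-time many-one reduction from \prob{3-SAT} to \eimsp{} with the additional constraint $\ell=0$; the series-parallel and maximum-degree-four properties are inherited verbatim from the construction. Let $I$ be the resulting instance with $k=2+2n+2dn$ and $\ell=0$. The decisive structural observation is that $E_1\cap E_2$ consists only of the ``literal edges'' of the form $\{a_{2j-1}^i,a_{2j}^i\}$ and $\{b_{2j-1}^i,b_{2j}^i\}$: every other edge in $E_2$ is incident to some $c_q^2$ vertex (absent from $E_1$), and the ``even-to-odd'' edges $\{a_{2j}^i,a_{2j+1}^i\}$, $\{b_{2j}^i,b_{2j+1}^i\}$ of the chains appear only in $E_1$. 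Hence $E(P_1)\cap E(P_2)$ can contain only literal edges, which is the whole source of the combinatorial constraint.

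For the forward direction, given a satisfying assignment $X'\subseteq X$ for $I$, I would build $P_1$ by routing through the $b$-chain of $x_i$ whenever $x_i\in X'$ and through the $a$-chain whenever $x_i\notin X'$ (that is, $P_1$ walks along the \emph{complementary} chain with respect to the truth value). Then $P_1$ is an $s$-$t$ path in $(V,E_1)$ of exactly $k$ vertices. For $P_2$, for each clause $C_q$ pick any literal satisfied by $X'$ and route through its literal gadget, yielding an $s$-$t$ path in $(V,E_2)$ with $|V(P_2)|=2+4n\le k$. If $P_2$ traverses $\{a_{2j-1}^i,a_{2j}^i\}$, this corresponds to the positive literal $x_i$ being true, so $x_i\in X'$, so $P_1$ chose the $b$-chain for $x_i$, so this edge is not in $E(P_1)$; the symmetric argument handles negative literals. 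Thus $E(P_1)\cap E(P_2)=\emptyset$.

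For the backward direction, let $(P_1,P_2)$ be a solution. Because $|V(P_1)|\le k$ and $P_1$ is forced to pass through every $c_q^1$, each variable $x_i$ contributes either the entire $a$-chain or the entire $b$-chain to $P_1$. Define $X':=\{x_i : P_1 \text{ uses the } b\text{-chain for } x_i\}$. Fix any clause $C_q$; the subpath of $P_2$ from $c_{2q-1}^2$ to $c_{2q}^2$ traverses some literal edge $e_q\in E_{C_q}$. By $\ell=0$, $e_q\notin E(P_1)$: if $e_q=\{a_{2j-1}^i,a_{2j}^i\}$ (the $j$-th positive appearance of $x_i$) then $P_1$ did not use the $a$-chain for $x_i$, so $x_i\in X'$ and $C_q$ is satisfied; if $e_q=\{b_{2j-1}^i,b_{2j}^i\}$ (the $j$-th negative appearance of $x_i$) then $P_1$ used the $a$-chain, so $x_i\notin X'$, and $C_q$ is again satisfied by $\overline{x_i}$. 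Hence $X'$ satisfies every clause.

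The main obstacle is conceptual rather than technical: one must flip the intuitive meaning of ``$P_1$ chooses the $a$-chain'' compared to the proof of \cref{prop:emspnphardtau}, because now $P_1$ must \emph{avoid} the literal edges that $P_2$ uses instead of sharing them. Once the correct correspondence ($x_i$ true $\Leftrightarrow$ $P_1$ on the $b$-chain) is fixed, the case analysis of the two literal types drives both directions, and the graph-theoretic properties (series-parallel, $\Delta(\ug{\TG})=4$, two snapshots) are immediate from \cref{constr:emspnphard}.
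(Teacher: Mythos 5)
Your proposal is correct and follows essentially the same route as the paper's proof: it reuses \cref{constr:emspnphard} unchanged and applies the same ``complement'' trick, letting $P_1$ traverse the $b$-chain exactly when $x_i$ is true so that $P_2$ can pick up a satisfied literal edge disjointly from $E(P_1)$, with the identical case analysis in the backward direction. The explicit observation that $E_1\cap E_2$ consists only of the literal edges is a nice way of making precise what the paper leaves implicit, but it does not change the argument.
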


\appendixproof{prop:eimspnphardtau}
{
    \begin{proof}
    Let~$I=(X=\{x_1,\ldots,x_n\},\calC=(C_1,\ldots,C_n))$ be an instance of~\prob{3-SAT} such that the number~$n$ of variables equals the number of clauses, 
    and let~$d$ denote the largest appearance of any literal of some variable in~$X$.
    Let~$I'=(\TG=(V,E),s,t,k,\ell)$ with~$\ell=0$ and $k=2+2n+2d\cdot n$ be the instance of~\eimsp{} obtained from~$I$ using \cref{constr:emspnphard}.
    We claim that~$I$ is a \yes-instance if and only if~$I'$ is a \yes-instance.
    The proof works analogously to the proof of \cref{prop:edgetovertex},
    except for the fact that~$P_1$ selects the complement of a satisfying assignment.
    
    \RD{}
    Let~$X'\subseteq X$ be a solution.
    We construct the paths~$(P_1,P_2)$ as follows.
    Vertex set~$V(P_1)$ contains~$\{s,t\}\cup \{c_1^1,\ldots,c_{2n}^1\}$
    and~$V(P_2)$ contains~$\{s,t\}\cup \{c_1^2,\ldots,c_{2n}^2\}$.
    Let~$H$ be an auxiliary, initially empty vertex set.
    For each~$i\in\set{n}$,
    if~$x_i\in X'$, then~$V(P_1)$ contains~$\{b_1^i,\ldots,b_{2d}^i\}$ and~$H$ contains~$\{a_1^i,\ldots,a_{2d}^i\}$, and
    if~$x_i\not\in X'$, then~$V(P_1)$ contains~$\{a_1^i,\ldots,a_{2d}^i\}$ and~$H$ contains~$\{b_1^i,\ldots,b_{2d}^i\}$.
    Note that~$H\cap V(P_1)=\emptyset$.
    Set~$E(P_1)=E(G[V(P_1)])$.
    Note that
    $P_1$ is an~$s$-$t$ path and~$|V(P_1)|=2+2n+2d\cdot n=k$.
    Observe that $V_{C_q}\cap H\neq \emptyset$,
    since~$X'$ is a solution.
    For~$P_2$,
    for each~$q\in\set{n}$,
    if~$a_{2j-1}^i,a_{2}^i \in V_{C_q}\cap H$ with~$i$ smallest,
    then~$V(P_2)$ contains~$a_{2j-1}^i,a_{2}^i$ and~$E(P_2)$ contains the edges~$\{a_{2j-1}^i,a_{2j}^i\}$, $\{c_{2q-1}^2,a_{2j-1}^i\}$, and~$\{c_{2q}^2,a_{2j}^i\}$ (analogously for~$b$).
    Note that
    $P_2$ is an~$s$-$t$ path in~$(V,E_2)$ with~$|V(P_2)|=2+2n+2n<k$.
    It remains to consider~${E(P_1)}\cap {E(P_2)}$.
    Note that~$E(P_1)\cap E(P_2)=\emptyset$,
    since~$V(P_1)\cap H=\emptyset$, and~$V(P_2)\cap V_{C_q}\subseteq H$ for all~$q\in\set{n}$.

    \LD{}
    Let~$(P_1,P_2)$ be a solution to~$I'$.
    Observe that for all~$i\in\set{n}$,
    $P_1$~contains as a subset either the set~$\{a_1^i,\dots,a_{2d}^i\}$ or the set~$\{b_1^i,\dots,b_{2d}^i\}$.
    Let~$X'=\{x_i\in X\mid b_1^i,\dots,b_{2d}^i\in V(P_1) \}$.
    We claim that~$X'$ is a solution to~$I$.
    Let~$C_q$ be an arbitrary clause from~$\calC$.
    Let~$\{c_{2q-1}^2,v,w,c_{2q}^2\}$ be the vertices on the subpath from~$P_2$ connecting~$c_{2q-1}^2$ with~$c_{2q}^2$,
    where~$v,w\in V_{C_q}$.
    Note that~$\{v,w\}\not\in E(P_1)$, 
    since otherwise~$|E(P_1)\cap E(P_2)|> 0$.
    Hence,
    if~$\{v,w\}=\{a_{2j-1}^i,a_{2j}^i\}$ for some~$i\in\set{n}$ and~$j\in\set{2d-1}$,
    then~$\{b_{2j-1}^i,b_{2j}^i\}\subseteq V(P_1)$ and hence~$x_i\in X'$, setting~$C_q$ to true.
    Otherwise,
    if~$\{v,w\}=\{b_{2j-1}^i,b_{2j}^i\}$ for some~$i\in\set{n}$ and~$j\in\set{2d-1}$,
    then~$\{a_{2j-1}^i,a_{2j}^i\}\subseteq V(P_1)$ and hence~$x_i\not\in X'$ setting~$C_q$ to true ($x_i$ is negated in~$C_q$).
    Since~$C_q$ was chosen arbitrarily,
    it follows that~$X'$ is a solution to~$I$.
    \end{proof}
}

\noindent
Due to~\cref{prop:edgetovertex,prop:edgetovertexintersect},
we get the following from~\cref{thm:emspnphard}.

\begin{corollary}
 \label{cor:vertexmspnphardtau}
 \vmsp{} and~\vimsp{} with~$\ell=0$ are \NP-hard even if~$\tau=2$ and~$\Delta(\ug{\TG})=4$.
\end{corollary}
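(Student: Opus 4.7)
The plan is to derive this corollary as an essentially immediate consequence of the reductions already established in \cref{prop:edgetovertex,prop:edgetovertexintersect} combined with the hardness result in \cref{thm:emspnphard}. Both propositions are polynomial-time many-one reductions from the edge-variants to the vertex-variants, and crucially they preserve both the lifetime $\tau$ and the underlying maximum degree $\ug{\Delta}$, so starting from the restricted hard instances promised by \cref{thm:emspnphard} automatically yields hard instances of the same restricted shape for the vertex-variants.

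For \vmsp{}, I would start with the \NP-hard instance of \emsp{} given by \cref{thm:emspnphard}, which has $\tau=2$ and $\ug{\Delta}=4$. Applying \cref{prop:edgetovertex} produces an equivalent \vmsp{} instance that, by the guarantees of the proposition, still satisfies $\tau(\TG')=\tau(\TG)=2$ and $\Delta(\ug{\TG'})=\Delta(\ug{\TG})=4$. Since the reduction is polynomial-time and preserves the answer, \NP-hardness of \vmsp{} under these restrictions follows.

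For \vimsp{}, I would invoke the second half of \cref{thm:emspnphard}, which gives \NP-hardness of \eimsp{} with the additional property $\ell=0$ (while still having $\tau=2$ and $\ug{\Delta}=4$). Applying \cref{prop:edgetovertexintersect} yields an equivalent \vimsp{} instance with $\ell'=\ell=0$, $\tau(\TG')=\tau(\TG)=2$, and underlying maximum degree bounded by the stated relation (which for an input with $\ug{\Delta}=4$ gives $\ug{\Delta'}\leq 4$). Hence \vimsp{} is \NP-hard even with $\ell=0$, $\tau=2$, and $\ug{\Delta}=4$.

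There is no real obstacle: the only thing to verify is that all three restricted parameters ($\tau$, $\ug{\Delta}$, and for the intersection variant $\ell$) are indeed preserved by the reductions, which is explicit in the statements of \cref{prop:edgetovertex,prop:edgetovertexintersect}. The proof therefore reduces to a one-line citation of these two propositions applied to the two halves of \cref{thm:emspnphard}.
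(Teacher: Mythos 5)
Your proposal is correct and matches the paper's own derivation exactly: the paper obtains this corollary by applying \cref{prop:edgetovertex} and \cref{prop:edgetovertexintersect} to the hard instances from \cref{thm:emspnphard}, relying on precisely the parameter-preservation guarantees ($\tau$, $\ug{\Delta}$, and $\ell'=\ell$ for the intersection variant) that you identify. Your reading that the $\ell=0$ restriction carries over only to \vimsp{} (since the \emsp{} hardness in \cref{thm:emspnphard} does not come with $\ell=0$) is also the intended one.
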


We proved
\eimsp{} and \vimsp{} to remain~\NP-hard even if~$\ell=0$ and~$\tau=2$.
This leads us to ask whether for a constant value of~$\ell+\tau$,
\emsp{} or \vmsp{} remain \NP-hard.
In fact,
we prove this to be true for the vertex-variant.

\ifarxiv{}
\begin{theorem}
		\else{}
\begin{theorem}[\appref{thm:vmsppNPhelltau}]
\fi{}
 \label{thm:vmsppNPhelltau}
 \vmsp{} is~\NP-hard and admits no~$2^{o(k)}\cdot (|\TG|)^{O(1)}$-time algorithm unless the Exponential Time Hypothesis fails, 
 even if~$\ell=0$ and~$\tau=2$.
\end{theorem}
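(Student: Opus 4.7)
The approach is to reduce from \textsc{Hamiltonian $s$-$t$ Path}, which under ETH requires $2^{\Omega(n)}$ time on $n$-vertex graphs, to \vmsp{} with $\tau = 2$ and $\ell = 0$. Given an instance $(G = (V, E), s, t)$ of \textsc{Hamiltonian $s$-$t$ Path} with $n \ceq |V|$, I build a \vmsp{} instance $(\TG, s, t, k, \ell)$ by setting $V(\TG) \ceq V$, $E_1 \ceq E$, $\tau \ceq 2$, $k \ceq n$, $\ell \ceq 0$, and taking $E_2$ to be the edge set of an arbitrarily fixed spanning $s$-$t$ path on $V$; concretely, pick any ordering $s = v_1, v_2, \dots, v_n = t$ of $V$ and set $E_2 \ceq \{\{v_i, v_{i+1}\} : 1 \le i < n\}$. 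The construction is computable in linear time and yields $|\TG| \in O(n^2)$ and $k = n$.

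\textbf{Correctness and ETH argument.} Since the snapshot $(V, E_2)$ is itself a simple $s$-$t$ path, it admits a unique $s$-$t$ path, which traverses all of $V$. Hence any feasible $P_2$ satisfies $V(P_2) = V$, and the constraint $|\symdif{V(P_1)}{V(P_2)}| \le \ell = 0$ forces $V(P_1) = V$ as well. Combined with $E(P_1) \subseteq E_1 = E$, this means $P_1$ is a Hamiltonian $s$-$t$ path in $G$. Conversely, a Hamiltonian $s$-$t$ path $P$ in $G$ yields a feasible pair by taking $P_1 \ceq P$ and $P_2$ to be the unique $s$-$t$ path through $E_2$, since then $V(P_1) = V(P_2) = V$ and $|V(P_i)| = n = k$. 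Thus the two instances are equivalent, giving \NP-hardness; moreover, a hypothetical $2^{o(k)} \cdot |\TG|^{O(1)}$-time algorithm for \vmsp{} would decide \textsc{Hamiltonian $s$-$t$ Path} in $2^{o(n)} \cdot n^{O(1)}$ time, contradicting ETH.

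\textbf{Main obstacle.} The only subtle point is justifying the $2^{\Omega(n)}$ ETH lower bound for \textsc{Hamiltonian $s$-$t$ Path} on $n$-vertex graphs, since the textbook Karp reduction from unrestricted 3-SAT produces $\Theta(nm)$-vertex graphs and yields only a $2^{\Omega(\sqrt{n})}$ bound. I would obtain the stronger bound by reducing instead from bounded-occurrence 3-SAT (whose own $2^{\Omega(n)}$ ETH lower bound follows via the Impagliazzo--Paturi--Zane sparsification lemma), for which the classical variable gadgets have constant size and the overall vertex count is linear in the input. If one prefers to avoid this indirect appeal, an equally viable alternative is a direct reduction from bounded-occurrence 3-SAT to \vmsp{} in the spirit of \cref{constr:emspnphard}: snapshot~1 forces $V(P_1)$ to encode a truth assignment (one of two literal vertices per variable together with all clause vertices), while snapshot~2 is engineered so that a Hamiltonian $s$-$t$ path on precisely this vertex set exists iff every clause has a satisfied literal; literal-per-occurrence auxiliary vertices resolve the obstruction that a single literal can satisfy several clauses but be visited only once along a simple path.
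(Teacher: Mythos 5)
Your proposal is correct and matches the paper's proof in essence: both force $V(P_1)=V(P_2)=V$ via $\ell=0$ together with one snapshot that is itself a spanning $s$-$t$ path, so that the other snapshot must contain a Hamiltonian path, and both inherit the $2^{o(n)}$ ETH lower bound from the Hamiltonian path problem. The only cosmetic differences are that the paper reduces from \textsc{Hamiltonian Path} without fixed endpoints (adding fresh $s,t$ adjacent to all vertices in the graph-snapshot and putting the forced spanning path in the first snapshot), whereas you reduce from the $s$-$t$ variant with the snapshot roles swapped.
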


\noindent
It remains open whether~\emsp{} is contained in~\XP{} regarding~$\ell+\tau$. 

\appendixproof{thm:vmsppNPhelltau}
{
\noindent
We give a polynomial-time many-one reduction from the following \NP-complete~\cite{GareyJ79} problem.
\problemdef{Hamiltonian Path}
{An undirected graph~$G=(V,E)$.}
{Is there a Hamiltonian path in~$G$, i.e., a path in~$G$ that contains every vertex of~$G$?}

\begin{construction*}
 \label{constr:vmsppNPhelltau}
 Let~$(G=(V,E))$ be an instance of~\prob{Hamiltonian Path} and let~$V=\{v_1,\dots,v_n\}$ be enumerated. 
 We construct the temporal graph~$\TG=(V',E_1,E_2)$ with~$V'\ceq V\cup\{s,t\}$ as follows.
 Set~$E_1\ceq \{\{s,v_1\}\}\cup \{\{v_n,t\}\}\cup \bigcup_{i=1}^{n-1} \{\{v_i,v_{i+1}\}\}$.
 Set~$E_2\ceq E\cup\bigcup_{i=1}^n\{\{s,v_i\},\{t,v_i\}\}$.
 Finally, 
 set~$k=n+2$ and~$\ell=0$.
\end{construction*}

    \begin{proof}[Proof of~\cref{thm:vmsppNPhelltau}]
    Let~$I=(G=(V,E))$ be an instance of~\prob{Hamiltonian Path} and let~$V=\{v_1,\dots,v_n\}$ be enumerated.
    Let~$I'=(\TG=(V',E_1,E_2),s,t,k,\ell)$ be the instance obtained from~$I$ using~\cref{constr:vmsppNPhelltau}.
    We claim that~$I$ is a \yes-instance if and only if~$I'$ is a \yes-instance.
    
    \RD{}
    Let~$P$ be a Hamiltonian path in~$G$ with endpoints~$v_i$ and~$v_j$.
    Construct~$(P_1,P_2)$ as follows.
    Let~$V(P_1)=V'$ and~$E(P_1)=E_1$.
    Let~$V(P_2)=V'$ and~$E(P_2)=E(P)\cup\{\{s,v_i\},\{t,v_j\}\}$.
    Since~$V(P_1)=V(P_2)=V'$,
    we have that~$|V(P_1)|=n+2=k$ and~$\symdif{V(P_1)}{V(P_2)}=\emptyset$.
    Hence,
    $(P_1,P_2)$ is a solution to~$I'$.

    \LD{}
    Let~$I'$ be a \yes-instance of~\vmsp{} and let~$(P_1,P_2)$ be a solution.
    By the construction of~$(V,E_1)$ and the fact that~$(P_1,P_2)$ is a solution to~$I'$,
    we know that~$V(P_1)=V(P_2)=V'$ .
    We construct a Hamiltonian path~$P=(V_P,E_P)$ from~$P_2$ as follows.
    Let~$V_P=V(P_2)\setminus \{s,t\}$,
    and let~$E_P=\{e\in E(P_2)\mid e\cap\{s,t\}=\emptyset\}$.
    That is,
    $P$ is the subpath of~$P_2$ where the neighbors of~$s$ and~$t$ on~$P_2$ form the endpoints.
    It follows that~$P$ is a path in~$G$ contain all vertices in~$V$,
    and hence, 
    $I$ is a \yes-instance.
    
    Finally, 
    note that since~$k=n+2$, 
    and by the fact that~\prob{Hamiltonian Path} admits no~$2^{o(n)}\cdot (n+m)^{O(1)}$-time algorithm unless the Exponential Time Hypothesis fails,
    the second part of the theorem follows.
    \end{proof}

}

\section{The role of the parameter path length}
\label{sec:parak}
\appendixsection{sec:parak}

In this section,
we focus on the parameter~$k$,
the maximum number of vertices in any~$s$-$t$ path.
It is not hard to see that all variants allow for an \XP{}-algorithm when parameterized by the number~$k$ of maximal vertices in each path.

\ifarxiv{}
\begin{proposition}
		\else{}
\begin{proposition}[\appref{thm:allxp}]
\fi{}
 \label{thm:allxp}
 \vmsp{} and \vimsp{}, and hence \emsp{} and \eimsp{}, 
 are solvable in~$\Delta_{\max}^{O(k)}\cdot |\TG|^{O(1)}$ time,
 where~$\Delta_{\max} = \max_{i\in\set{\tau}}\Delta((V,E_i))$.
\end{proposition}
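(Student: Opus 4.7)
The plan is to solve all four variants via an explicit enumerate-and-compose dynamic program. I will describe it for \vmsp{} and \vimsp{}; the edge variants \emsp{} and \eimsp{} then follow either by the same argument (since any $s$-$t$ path of at most $k$ vertices has at most $k-1$ edges, so the same enumeration bound holds) or via \cref{prop:edgetovertex,prop:edgetovertexintersect}.

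The first step is to enumerate, for each snapshot $i\in\set{\ltime}$, the set $\calP_i$ of all $s$-$t$ paths in $(V,E_i)$ with at most $k$ vertices. I perform a depth-bounded depth-first search rooted at $s$ in $(V,E_i)$, truncating any branch once the current partial path has $k$ vertices or reaches $t$. At every internal step the search branches into at most $\Delta((V,E_i))-1 \le \Delta_{\max}$ options, so $|\calP_i|\le \Delta_{\max}^{k-1}$ and the enumeration for snapshot $i$ runs in $\Delta_{\max}^{O(k)}\cdot |E_i|$ time. Overall, producing $\calP_1,\ldots,\calP_\ltime$ costs $\Delta_{\max}^{O(k)}\cdot |\TG|^{O(1)}$.

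The second step is a standard layered reachability computation. Build an auxiliary DAG $H$ with layers $L_1,\ldots,L_\ltime$, where layer $L_i$ contains one node per path in $\calP_i$, and put an arc from the node of $P_i\in\calP_i$ to the node of $P_{i+1}\in\calP_{i+1}$ whenever $\dist[\Pi]{P_i,P_{i+1}}\le \ell$, with $\Pi$ being the distance measure at hand. Each distance is computable in $O(k)$ time by sorting the at-most-$k$ vertex (or edge) sets, so $H$ has at most $\ltime\cdot \Delta_{\max}^{O(k)}$ nodes and at most $\ltime\cdot \Delta_{\max}^{O(k)}$ arcs, constructed in $\Delta_{\max}^{O(k)}\cdot |\TG|^{O(1)}$ total time. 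The instance is a yes-instance if and only if some node in $L_1$ reaches some node in $L_\ltime$ in $H$, which is decided by a single linear-time traversal of $H$; a witnessing sequence $(P_1,\ldots,P_\ltime)$ is reconstructed from the corresponding path in $H$.

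The only thing to double-check is the enumeration bound. Since every path in $\calP_i$ is simple and has at most $k-1$ edges, and every extension in the DFS chooses among at most $\Delta_{\max}$ neighbours, the number of partial paths traversed, and a fortiori $|\calP_i|$, is at most $\Delta_{\max}^{k-1}$; this is the place where the $\Delta_{\max}^{O(k)}$ factor enters and dominates the running time. There is no combinatorial obstacle beyond this routine branching estimate, so the claimed $\Delta_{\max}^{O(k)}\cdot |\TG|^{O(1)}$ bound follows.
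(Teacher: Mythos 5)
Your proposal is correct and matches the paper's proof essentially step for step: the paper likewise enumerates all $s$-$t$ paths of length at most $k-1$ per snapshot via a depth-bounded search tree (giving the $\Delta_{\max}^{O(k)}\cdot|E_i|$ bound) and then builds the same layered auxiliary digraph with arcs between consecutive layers whenever the relevant distance is at most $\ell$, reducing the question to an $s'$-$t'$ reachability test. The only cosmetic difference is that the paper adds explicit source and sink nodes $s'$ and $t'$ rather than checking reachability between the first and last layers directly.
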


\appendixproof{thm:allxp}
{
\begin{proof}[Proof Sketch]
 The proof is in line with the proof of~\cite[Proposition~4.2]{FluschnikNRZ19}.
 We sketch the proof in the general setup~$\Pi$-\textsc{MstP}.
 
 Given an instance~$I=(\TGfull,s,t,k,\ell)$,
 construct a directed graph~$D=(V',A)$ with vertex set~$V'=V_1'\uplus\dots\uplus V_\tau'\cup\{s',t'\}$ and arc set~$A$ together with a mapping~$\gamma:V'\to (2^V,2^{\binom{V}{2}})$ as follows.
 For each~$i\in\set{\tau}$ and each $s$-$t$ path $P$ of length at most $k-1$ in $(V,E_i)$ 
 add a vertex~$v$ to~$V_i'$ and set~$\gamma(v)=P$.
 It is easy to verify that a straight-forward search tree algorithm (starting in $s$ and exploring edges until the path has length $k-1$) 
 can enumerate all $s$-$t$ path of length $k-1$ in $(V,E_i)$ in $O(\Delta_{\max}^k\cdot |E_i|)$ time, for any $i \in \set{\tau}$.
 Next,
 for each~$i\in\set{\tau-1}$,
 if for two vertices~$v\in V_i'$ and~$w\in V_{i+1}'$ it holds true that~$\dist[\Pi]{\gamma(v),\gamma(w)}\leq \ell$,
 then add the arc~$\{v,w\}$.
 Finally make~$s'$ adjacent with all vertices in~$V_1'$,
 and~$t'$ adjacent with all vertices in~$V_\tau'$.
 This finishes the construction.
 It is not difficult to see that~$I$ is a \yes-instance if and only if there is an~$s'$-$t'$ path in~$D$ 
 (which can be checked in time linear in the size of~$D$).
\end{proof}
}

\noindent
We will prove that the parameterization with~$k$ distinguishes similarity from dissimilarity:
While~\emsp{} and~\vmsp{} are \W{1}-hard regarding~$k$ (even regarding~$k+\tau$),
each of~\eimsp{} and~\vimsp{} turn out to be \fpt{}.

\subsection{W[1]-hardness for the similarity variant \texorpdfstring{regarding~\boldmath$k+\tau$ and~$\ug{\nu}$}{}}

We prove that \emsp{} is \W{1}-hard regarding~$k+\tau$ even if the upper bound~$\ell$ on the sizes of consecutive symmetric differences is constant.
Due to~\cref{prop:edgetovertex},
we then obtain the same result for~\vmsp{}.
The proof is by a parameterized reduction from the~\W{1}-complete \textsc{Multicolored Clique} problem.

\ifarxiv{}
\begin{theorem}
		\else{}
\begin{theorem}[\appref{thm:w1hard}]
\fi{}
 \label{thm:w1hard}
 Even if~$\ell=4$ and each snapshot is bipartite,
 \emsp{} is \NP-hard and 
 \W{1}-hard when parameterized by~$k+\tau$.
\end{theorem}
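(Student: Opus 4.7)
The plan is to give a polynomial-time parameterized reduction from the W[1]-hard (and NP-hard) \textsc{Multicolored Clique} problem parameterized by the clique size~$k'$. Given a graph~$H$ with vertex set partitioned into color classes~$V_1,\dots,V_{k'}$, I will construct a temporal graph with bipartite snapshots, path bound $k=\Theta(k')$, lifetime $\tau=\Theta(k'^2)$, and $\ell=4$, so that $k+\tau$ is a function of~$k'$ only.

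The first ingredient is a \emph{selection base} common to (almost) all snapshots: a core $s$-$t$ structure threading $k'$ choice widgets $W_1,\dots,W_{k'}$ in series, where $W_i$ consists of $|V_i|$ internally disjoint parallel subpaths of length~$3$ between common endpoints $a_i,b_i$, one subpath per vertex of~$V_i$. Any $s$-$t$ path of length $\le k$ traverses exactly one subpath per widget and thereby selects one vertex per color class; bipartiteness is preserved because all $a_i$ lie on one side and all $b_i$ on the other of a consistent bipartition. The key leverage of the choice of length~$3$ is that rerouting through a different subpath of the same~$W_i$ between two consecutive snapshots costs $2\cdot 3 = 6 > \ell = 4$ symmetric-difference edges, which forces any solution to make the \emph{same} selection in every snapshot.

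The second ingredient is \emph{pair verification}. Between any two verification snapshots I interleave a pure "reset" snapshot identical to the selection base, so that the only difference between consecutive snapshots is the local insertion or removal of a small gadget. For each pair $\{i,j\}$ with $i<j$, I create a pair snapshot that modifies only a constant-size bridge of the base path to reroute it through an \emph{edge-checker} — a bipartite subgraph whose legal $s$-$t$ traversals correspond exactly to edges of~$H$ between~$V_i$ and~$V_j$. The checker's two ports are attached to distinguished internal vertices of the choice subpaths in~$W_i$ and~$W_j$, so a checker traversal is feasible if and only if the selections in~$W_i$ and~$W_j$ are adjacent in~$H$. Correctness then runs in both directions: a multicolored clique yields consistent widget choices together with a legal checker traversal at each pair snapshot, and conversely the symmetric-difference constraint fixes a single selection throughout and satisfying every pair snapshot forces the selected vertices to be pairwise adjacent in~$H$.

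The main obstacle I expect is engineering the edge checker so that simultaneously (i)~inserting/removing it changes at most~$4$ edges relative to the reset snapshot, (ii)~the checker is traversable \emph{only} through the port vertices encoding the actual selections in~$W_i$ and~$W_j$ — so choosing non-adjacent representatives makes the pair snapshot infeasible — and (iii)~no alternative short-cut exists in the pair snapshot that lets the path avoid the checker entirely. I plan to handle (i) by making the replaced bridge and the replacing sub-gadget both of $\le 2$ edges on the "outer" side of the modification, (ii) by placing the checker's ports on the central internal vertices of each choice subpath (so that non-selected subpaths contribute no usable port), and (iii) by deleting the bridge edges of the base path in that snapshot so that routing through the checker becomes mandatory. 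Finally, NP-hardness is inherited because the reduction has polynomial size for unbounded~$k'$.
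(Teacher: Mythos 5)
Your proposal gets the two structural ingredients right -- use the budget $\ell=4$ to freeze one selected vertex per color class across all snapshots, and spread the $\Theta(k'^2)$ pair checks over the lifetime so that $k+\tau$ depends only on $k'$ -- but the pair-verification mechanism has a gap that I do not see how to close as described. Your choice widgets $W_1,\dots,W_{k'}$ sit in a \emph{fixed} series order in every snapshot. For a non-consecutive pair $\{i,j\}$, any edge-checker whose ports sit on internal vertices of subpaths in $W_i$ and in $W_j$ is a long-range connection: to traverse it, the $s$-$t$ path must either skip the entire segment $W_{i+1},\dots,W_{j-1}$ (and a path cannot return to cover it, since vertices may not repeat), or replace that segment by the checker. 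Either way the edge set of the path in the pair snapshot differs from the path in the adjacent reset snapshot by $\Theta(j-i)$ edges, not by $4$; this breaks the constraint $\dist[\text{E$\triangle$E}]{P_r,P_{r+1}}\leq\ell$ for every pair that is not adjacent in the series order, and no constant $\ell$ repairs it. Your proposed fix for obstacle (i) -- keeping the ``replaced bridge'' and the ``replacing sub-gadget'' to $\le 2$ edges each -- only bounds the size of the inserted gadget, not the cost of rerouting the path across the intervening widgets, which is where the symmetric difference actually blows up.

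The paper's proof resolves exactly this difficulty by abandoning gadgets altogether and instead \emph{permuting the order of the color classes from snapshot to snapshot}: each snapshot is the layered graph $s,V_{\pi(1)},\dots,V_{\pi(r)},t$ with the original edge sets $E_{\pi(j),\pi(j+1)}$ between consecutive layers, and consecutive snapshots differ by one adjacent transposition of the ordering (a bubble-sort sequence of length $1+\binom{r}{2}$, so every pair of colors is consecutive in some snapshot). An adjacent transposition changes exactly $4$ edges of the path if the vertex selection is unchanged, and at least $6$ otherwise, which simultaneously enforces consistency and verifies adjacency directly (the path must use a genuine edge of $G$ between consecutively ordered classes). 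If you want to salvage your construction, you need some analogue of this reordering idea; with a static widget order the $\ell=4$ bound cannot survive the long-range pair checks.
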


\appendixproof{thm:w1hard}
{
  \noindent
  To prove~\cref{thm:w1hard},
  we reduce from the \W{1}-complete \prob{Multicolored Clique} problem.
  \problemdef{Multicolored Clique}
  {An undirected, $r$-partite graph~$G=(V_1,\dots,V_r,E)$.}
  {Is there a clique of size~$r$ in~$G$?}
  Intuitively,
  in each snapshot we order the $r$~parts differently such that any two colors appear at least once consecutively.
  Hence,
  if there is a sequence of $s$-$t$~paths through all $r$~parts in each snapshot over the same vertex set,
  then this witnesses the existence of each edge of any two vertices from distinct parts.
  For the ordering of the $r$~parts in the snapshots,
  we define the following.

  \begin{definition}
  For all~$1\leq i\leq 1+\binom{r}{2}$,
  let $\pi^r_i$~be a permutation of~$(1,\ldots,r)$ as follows.
  Let~$\pi^r_1=(1,\ldots,r)$.
  For $i>1$,
  let~$\pi^r_i$ be obtained from~$\pi^r_{i-1}$ as follows.
  Let~$j$ be the index such that~$\pi^r_{i-1}(j)<\pi^r_{i-1}(j+1)$ and there is no~$j'\neq j$ such that~$\pi^r_{i-1}(j')<\pi^r_{i-1}(j)$ and~$\pi^r_{i-1}(j')<\pi^r_{i-1}(j'+1)$.
  Then set~$\pi^r_i(j)=\pi^r_{i-1}(j+1)$,
  $\pi^r_i(j+1)=\pi^r_{i-1}(j)$,
  and~$\pi^r_i(j')=\pi^r_{i-1}(j')$ for all~$j'\in\set{r}\setminus\{j,j+1\}$.
  \end{definition}

  Note that each pair is swapped exactly once,
  hence we have that~$\pi_{1+\binom{r}{2}}^r=(r,r-1,\dots,1)$.
  Moreover,
  we have the following.
  
  \begin{observation}
   \label{obs:w1hardcouple}
   For every distinct~$r_1,r_2\in\set{r}$,
   there is an~$i\in\set{1+\binom{r}{2}}$ 
   such that~$|j_1-j_2|=1$,
   where~$\pi_{i}^r(j_1)=r_1$ and~$\pi_{i}^r(j_2)=r_2$.
  \end{observation}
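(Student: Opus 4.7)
The plan is to track how the number of inversions evolves along the sequence $\pi_1^r, \pi_2^r, \ldots$ and to deduce that every unordered pair $\{r_1, r_2\} \subseteq \set{r}$ is swapped at exactly one step, at which point the two elements are by construction at adjacent positions.

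First, I would observe that each step of the construction performs an adjacent transposition at an ascent, that is, at a position $j$ where $\pi_{i-1}^r(j) < \pi_{i-1}^r(j+1)$. Such a swap only changes the relative order of the two elements currently at positions $j$ and $j+1$, turning an ascending pair into a descending one; all other pairs of elements keep their relative order. Consequently, each step increases the number of inversions by exactly one. Along the way I would also briefly verify well-definedness: as long as $\pi_{i-1}^r$ is not the fully reversed sequence, it contains at least one ascent, so the tie-breaking rule (smallest left value among ascents) selects a valid $j$, and the process reaches $\pi_{1+\binom{r}{2}}^r = (r, r-1, \ldots, 1)$.

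Second, since $\pi_1^r = (1, \ldots, r)$ has $0$ inversions while the final permutation $(r, r-1, \ldots, 1)$ has exactly $\binom{r}{2}$ inversions, and each step contributes exactly one new inversion, the process consists of precisely $\binom{r}{2}$ swaps. Moreover, once a pair has been swapped it becomes a descending pair and can never be ``un-swapped'', since the rule only swaps ascents. Because there are exactly $\binom{r}{2}$ unordered pairs in $\set{r}$ and each step inverts a previously non-inverted pair, a simple counting argument shows that every pair $\{r_1, r_2\}$ is swapped at exactly one step $i$.

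Finally, at that unique step $i$, the transposition is applied at some adjacent positions $j, j+1$ occupied by $r_1$ and $r_2$ (in some order), which yields $|j_1 - j_2| = 1$ with $\pi_i^r(j_1) = r_1$ and $\pi_i^r(j_2) = r_2$, as required. I do not expect any serious obstacle here; the only mildly delicate point is verifying that each step genuinely adds a \emph{new} inversion (equivalently, that no pair is swapped twice), which is immediate from the ``ascent-only'' swap rule together with the inversion count matching $\binom{r}{2}$ exactly.
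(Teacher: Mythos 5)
Your proof is correct and matches the paper's (implicit) argument: the paper merely remarks that ``each pair is swapped exactly once, hence $\pi_{1+\binom{r}{2}}^r=(r,r-1,\dots,1)$'' and states the observation without further proof, and your inversion-counting argument is exactly the standard justification of that remark. The one point worth being explicit about --- that an already-inverted pair can never be swapped back because only ascents are swapped --- you do address, so there is no gap.
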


  \noindent
  Next we describe the construction for the reduction.

  \begin{construction*}
  \label{constr:w1hard}
  Let~$(G=(V_1,\dots,V_r,E))$ be an instance of \prob{Multicolored Clique}.
  Let~$E_{i,j}\subseteq E$ denote the set of all edges between~$V_i$ and~$V_j$.
  We construct an instance~$(\TG=(V,E_1,\dots,E_\tau),s,t,k,\ell)$ with~$\tau=r(r-1)+1$ of \emsp{} as follows.
  Let~$V=\{s,t\}\cup V_1\cup\dots\cup V_k$.
  Add the edge sets~$\bigcup_{v\in V_{\pi_i^r(1)}} \{\{s,v\}\}$ and
  $\bigcup_{v\in V_{\pi_i^r(r)}} \{\{t,v\}\}$ to~$E_i$.
  Moreover,
  add~$E_{\pi^r_i(j),\pi^r_i(j+1)}$ for all~$1\leq j<r$.
  Set~$k=r+2$ and~$\ell=4$.
  \end{construction*}

  \begin{proof}[Proof of~\cref{thm:w1hard}]
  Let~$I=(G=(V_1,\dots,V_r,E)$ be an instance of \prob{Multicolored Clique}.
  Let~$E_{i,j}\subseteq E$ denote the set of all edges between~$V_i$ and~$V_j$.
  Let~$I'=(\TG=(V,E_1,\dots,E_\tau),s,t,k,\ell)$ be the instance obtained from~$I$ using~\cref{constr:w1hard}.
  We claim that $I$ is a \yes-instance if and only if~$I'$ is a \yes-instance.
  
  \RD{}
  Let~$I$ be a \yes-instance, and let~$C\subseteq V_1\cup\dots\cup V_r$ form a multicolored clique in~$G$.
  We claim that~$(P_1,\ldots,P_\tau)$ with~$V(P_i)=C\cup\{s,t\}$ and~$E(P_i)=E(G_i[V(P_i)])$ is a solution to~$I'$.
  Note that each~$P_i$ is an $s$-$t$ path with~$k=r+2$ vertices,
  since in~$G_i$ the edge set~$E_{\pi^r_i(j),\pi^r_i(j+1)}$ exists for~$j\in\set{r-1}$.
  Moreover,
  $\symdif{E(P_i)}{E(P_{i+1})}$ contains at most four edges,
  since~$\pi^r_i=(\dots,a,b,c,d,\dots)$ and~$\pi^r_{i+1}=(\dots,a,c,b,d,\dots)$,
  where~$b,c$ denote the two unique indices that are swapped from~$\pi^r_i$ to~$\pi^r_{i+1}$.
  
  \LD{}
  Let~$\calP=(P_1,\ldots,P_\tau)$ be a solution to~$I'$.
  Note that~$|V(P_i)\cap V_x|=1$ for all~$x\in\set{r}$,
  since each~$V_x$ forms an~$s$-$t$ separator and~$|V(P)|\leq k=r+2$.
  We claim that~$V(P_i)=V(P_j)$ for all~$i,j\in\set{\tau}$.
  Suppose not,
  then there exists an~$i$ such that~$V(P_i)\neq V(P_{i+1})$.
  Then there are at least five edges in~$\symdif{E(P_i)}{E(P_{i+1})}$:
  Let~$\pi^r_i=(\dots,a,b,c,d,\dots)$ and~$\pi^r_{i+1}=(\dots,a,c,b,d,\dots)$,
  then~$\symdif{E(P_i)}{E(P_{i+1})}$ is a superset of the edge set~$E'$ containing one edge in~$E_{a,b}$, 
  one edge in~$E_{a,c}$,
  one edge in~$E_{c,d}$,
  and one edge in~$E_{b,d}$.
  Moreover,
  let~$x$ be the (smallest) index such that~$V(P_i)\cap V_x\ni v\neq v'\in V(P_{i+1})\cap V_x$.
  Then~$\symdif{E(P_i)}{E(P_{i+1})}$ contains two edges incident with~$v$ and two edges with~$v'$,
  where at most two edges intersect with~$E'$ (in the case of~$x\in\{b,c\}$).
  This contradicts the fact that $\calP$ is a solution. 
  Let~$C=V(P_1)\setminus \{s,t\}$.
  We claim that~$C$ forms a multicolored clique in~$G$.
  First, 
  recall that~$|C\cap V_i|=1$ for all~$i\in\set{r}$.
  Suppose there are~$v,w\in C$, 
  $v\neq w$, 
  such that~$\{v,w\}\not\in E$.
  Let~$v\in V_i$ and~$w\in V_j$.
  Due to~\cref{obs:w1hardcouple},
  there is a snapshot~$G_x$ that contains~$E_{i,j}$.
  Then~$P_x$ is not an~$s$-$t$ path in~$G_x$,
  contradicting $\calP$ being a solution.
  Hence,
  $\{v,w\}\in E$ for all~$v,w\in C$, 
  $v\neq w$.
  That is,
  $C$ forms a multicolored clique in~$G$.
  \end{proof}
}

\noindent
Due to~\cref{prop:edgetovertex},
we get the following.

\begin{corollary}
 \label{cor:vmspw1hardktau}
 \vmsp{} is \W{1}-hard when parameterized by~$k+\tau$, 
 even if~$\ell$ is constant.
\end{corollary}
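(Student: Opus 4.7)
The plan is to derive the corollary directly from the combination of \cref{thm:w1hard} and \cref{prop:edgetovertex}. By \cref{thm:w1hard}, \emsp{} is \W{1}-hard with respect to $k+\tau$ even when $\ell = 4$ is a fixed constant. \cref{prop:edgetovertex} supplies a polynomial-time (and hence in particular \FPT-time) many-one reduction from \emsp{} to \vmsp{} that produces an equivalent instance with parameters $k' \in O(k \cdot \ell)$, $\ell' \in O(\ell^2)$, and $\tau' = \tau$. Thus the composition of these two reductions is what I would write down.

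The key step is to verify that the composed reduction is a parameterized reduction with respect to $k+\tau$. Starting from an \emsp{} instance with $\ell = 4$, the output \vmsp{} instance satisfies $\ell' \in O(1)$ (still constant), $\tau' = \tau$, and $k' \in O(k \cdot \ell) = O(k)$. Hence the new parameter value satisfies $k' + \tau' \in O(k + \tau)$, as required for a parameterized reduction. Equivalence of the two instances is guaranteed by \cref{prop:edgetovertex}, so \W{1}-hardness transfers from \emsp{} to \vmsp{} under the parameter $k + \tau$, and $\ell'$ remains a constant in the constructed instances.

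I do not anticipate a significant obstacle: both ingredients are already in hand, and the corollary is essentially bookkeeping on how the parameters transform through \cref{prop:edgetovertex}. The only thing to be careful about is making explicit the observation that the constant bound on $\ell$ in \cref{thm:w1hard} translates, via $\ell' \in O(\ell^2)$, into a constant bound on $\ell'$ in the produced \vmsp{} instance, which is precisely the statement we need.
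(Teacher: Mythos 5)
Your proposal is correct and matches the paper's own derivation exactly: the paper obtains \cref{cor:vmspw1hardktau} by composing \cref{thm:w1hard} (where $\ell=4$) with the reduction of \cref{prop:edgetovertex}, noting that $k'\in O(k\cdot\ell)=O(k)$, $\ell'\in O(\ell^2)=O(1)$, and $\tau'=\tau$. Nothing further is needed.
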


By~\cref{thm:allxp} and since $k\leq n$,
we know that~\emsp{} and~\vmsp{} are \fpt{} regarding the number~$n$ of 
graph vertices.
Regarding the parameter number~$k$ of path vertices (and even for~$k+\tau$),
by~\cref{thm:w1hard,cor:vmspw1hardktau} we know that both problems are in~\XP{} yet~\W{1}-hard.
Since we can assume~$k\leq 2\ug{\nu}+1$ (recall that $\ug{\nu}$~is the vertex cover number of the underlying graph) 
in every instance and thus naturally~$\ug{\nu}\leq n$,
we can settle the parameterized complexity regarding~$\ug{\nu}$:

\begin{theorem}
 \label{thm:symdifvcw1hardnes}
 When parameterized by~$\ug{\nu}$,
 \vmsp{} with~$\ell=1$ and \emsp{} are \W{1}-hard.
\end{theorem}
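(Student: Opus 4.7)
The plan is to give a parameterized reduction from \textsc{Multicolored Clique}, parameterized by the number $r$ of color classes. Unlike Construction~\ref{constr:w1hard}, where the underlying graph contained one vertex per element of $V_1 \cup \cdots \cup V_r$ and hence $\ug{\nu} \in \Theta(|V(G)|)$, the new reduction must keep $\ug{\nu}$ bounded by a function of $r$ alone, delegating the combinatorial selection of one vertex per color class to the \emph{time} dimension rather than spending fresh vertices on it.

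The most promising approach I see is the following. Fix a small skeleton of $\O(r^2)$ vertices in the underlying graph: terminals $s,t$, one ``slot'' vertex $u_i$ per color~$i$, and one ``pair-gadget'' vertex $w_{i,j}$ per color pair $\{i,j\}$. Every snapshot admits only $s$-$t$ paths that weave through these skeleton vertices, while the snapshot-specific edges implement the choice and verification logic. Across $\tau \in \poly(|V(G)|)$ snapshots the path sequence ``scans'' through candidate assignments: for every assignment of vertices $a_i \in V_i$ and every color pair $\{i,j\}$ there is a verification snapshot that admits an $s$-$t$ path only if $\{a_i,a_j\} \in E(G)$. Consistency across snapshots is enforced by the tight consecutive-similarity budget: $\ell = 1$ for \vmsp{} or a small constant $\ell$ for \emsp{} allows only a single swap per transition, so the current assignment cannot drift fast enough to satisfy all $\binom{r}{2}$ verifications unless it really encodes a multicolored clique; the converse direction follows by directly ``playing back'' a given clique through the scan.

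The main technical obstacle is ensuring correctness in both directions without letting $\tau$ grow super-polynomially (a parameterized reduction must still run in FPT time), while simultaneously making one construction serve both variants. I would tune the gadgets so that each legal transition between consecutive snapshots changes exactly one vertex and $\O(1)$ edges; then both $\ell$-measures accept the same schedule. A secondary difficulty is preventing ``cheating'' path sequences that perturb the assignment briefly to pass a single verification and then revert to a different assignment for the next one: this I would handle by padding each verification snapshot on both sides with ``frozen'' snapshots, so that the per-transition budget of $\ell$, summed over the padded window, strictly forbids the required swap-and-revert pattern. Since the underlying graph still has only $\O(r^2)$ vertices in total, $\ug{\nu} \in \O(r^2)$ is independent of $|V(G)|$, yielding W[1]-hardness in $\ug{\nu}$ alone; via \cref{prop:edgetovertex} the \emsp{} result transfers to \vmsp{} (the statement for the $\ell = 1$ variant of \vmsp{} would come directly from the gadget tuning above, without routing through~\cref{prop:edgetovertex}).
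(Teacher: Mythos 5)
There is a fundamental obstruction to your approach: you shrink the \emph{entire} vertex set of the constructed temporal graph to $\O(r^2)$, but both \emsp{} and \vmsp{} are \fpt{} when parameterized by the number~$n$ of vertices (this follows from \cref{thm:allxp}, since $\Delta_{\max}\leq n$ and $k\leq n$, and is recorded in the results table). A parameterized reduction from \textsc{Multicolored Clique} whose output has $n\in\O(r^2)$ would therefore place \textsc{Multicolored Clique} in \FPT{} and collapse $\W{1}$ into \FPT{}, so no amount of gadget tuning can make your construction work. The failure also shows up concretely: your skeleton contains only slot vertices $u_i$, not the candidates $a_i\in V_i$ themselves, so the vertex set of an $s$-$t$ path ranges over at most $2^{\O(r^2)}$ states and cannot ``remember'' which of the unboundedly many assignments in $\prod_i|V_i|$ is currently selected; the tight budget $\ell=1$ has nothing to hold fixed across the verification snapshots.

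The paper's reductions (\cref{prop:emspvcwhard,prop:vmspvcwhard}, via \cref{constr:emspvcwhard}) avoid this by keeping \emph{all} original vertices $V_1\cup\dots\cup V_k$ in the temporal graph but arranging that they form an independent set in the underlying graph: every edge is incident with a hub set $A\cup B\cup C^1\cup C^2\cup\{s,t\}$ of size $4k+5$, so $\ug{\nu}\in\O(k)$ while $n$ stays large (which is exactly what makes the parameter $\ug{\nu}$ meaningfully weaker than $n$ here). The selection of one vertex per color class is then done \emph{spatially} -- the path routes through exactly one vertex of $V_i$ between the hubs $a_i$ and $b_i$ -- and only the verification of the $\binom{k}{2}$ adjacency constraints is delegated to the $n(k-1)$ snapshots, with the similarity budget $\ell$ forcing the spatial selection to remain constant over time. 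Your instinct that the time dimension must carry the load that \cref{constr:w1hard} paid for in vertices is partially right, but it can only carry the \emph{verification}, not the \emph{selection}.
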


\noindent
We prove each statement of~\cref{thm:symdifvcw1hardnes} separately, 
both proofs rely on parameterized reductions from \textsc{Multicolored Clique}.

\ifarxiv{}
\begin{proposition}
		\else{}
\begin{proposition}[\appref{prop:emspvcwhard}]
\fi{}
 \label{prop:emspvcwhard}
 \emsp{} when parameterized by~$\ug{\nu}$ is \W{1}-hard.
\end{proposition}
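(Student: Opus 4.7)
The plan is to prove \W{1}-hardness via a parameterized reduction from the \W{1}-complete \textsc{Multicolored Clique} problem, where the parameter in the source problem is the number $r$ of color classes and we aim to bound $\ug{\nu}$ in the constructed instance by a function of~$r$. The construction will be in the spirit of~\cref{constr:w1hard}, but crafted so that the underlying graph admits an explicit small vertex cover.

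Given a \textsc{Multicolored Clique} instance $(G=(V_1\cup\dots\cup V_r,E))$, I would first designate a hub set $C$ containing $s$, $t$, one hub~$h_i$ per color class, and possibly a few auxiliary hubs used as routing junctions; altogether $|C|\in\O(r)$. This set will be the vertex cover of the underlying graph. All remaining vertices of the construction are non-cover vertices, pairwise non-adjacent and incident only to hubs in~$C$. Each $v\in V(G)$ yields one non-cover "vertex-selector" $y_v$ adjacent to a prescribed pair of hubs encoding its color, and each edge $e=\{u,v\}\in E(G)$ (with $u\in V_i$, $v\in V_j$) yields one non-cover "edge-selector" $z_e$ adjacent to the hubs tied to the color pair $(i,j)$. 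Since all such selectors are leaves with both neighbors in $C$, the underlying vertex cover stays of size $\O(r)$ irrespective of $|V(G)|$ or $|E(G)|$.

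Next, I would define the snapshot sequence so that there is one snapshot per unordered color pair $(i,j)$, ordered through a permutation scheme as in~\cref{constr:w1hard} (and exploiting~\cref{obs:w1hardcouple}) so that every pair is consecutive with only one coordinate swapped between consecutive snapshots. In snapshot~$(i,j)$, the snapshot's edges force every $s$-$t$ path to have a rigid skeleton that routes through exactly one edge-selector $z_e$ for an edge $e$ between $V_i$ and $V_j$ and, flanking it, through two vertex-selectors $y_u$ and $y_v$ where $u\in V_i$, $v\in V_j$ are the endpoints of~$e$. The budget~$\ell$ on the edge-symmetric-difference is then tuned so that between two consecutive snapshots sharing a color $i$, only the selectors associated with the swapped color may change, while the $y$-selector for color~$i$ is forced to remain identical; iterating through the full permutation sequence propagates consistency of the chosen $V_i$-vertex across all color pairs involving~$i$, hence forcing a global clique selection.

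The main obstacle will be jointly achieving the three goals: (i)~keeping $|C|\in f(r)$ while (ii)~providing enough routing flexibility in each snapshot to freely pick any edge of~$G$ of the appropriate color pair, and (iii)~using the small symmetric-difference budget to forbid inconsistent endpoint choices between consecutive snapshots. The naive \textsc{MCC}-style gadgets tend to blow up the underlying vertex cover because their selector subgraphs induce many edges among non-cover vertices; the remedy is to represent every vertex and every edge of~$G$ by a single leaf non-cover vertex attached only to hubs, so that the vertex cover depends solely on~$r$. Correctness will then follow the standard pattern: a multicolored clique $C^\ast\subseteq V(G)$ yields a solution by routing each snapshot through the edge-selectors of the $\binom{r}{2}$ clique edges and the vertex-selectors of its vertices; conversely, any solution must, by the edge-symmetric-difference constraint, select pairwise edge-consistent $V_i$-vertices across all color pairs, which is precisely a multicolored clique.
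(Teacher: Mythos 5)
Your high-level plan (parameterized reduction from \textsc{Multicolored Clique}, hub set of size $\O(r)$ forming the vertex cover, all vertex/edge selectors as independent leaves attached only to hubs, consistency propagated via the symmetric-difference budget over a permutation schedule of color pairs) matches the spirit of the paper's reduction, but it has a genuine gap at its core: nothing in your construction couples an edge-selector $z_e$ to the vertex-selectors of its \emph{actual endpoints}. Since $z_e$, $y_u$ and $y_v$ all lie in the independent set and are adjacent only to hubs that depend solely on the color pair $(i,j)$ (they cannot be made adjacent to each other, or to per-vertex hubs, without blowing the vertex cover up to $\Omega(n)$), any $s$-$t$ path in snapshot $(i,j)$ that visits $z_e$ can equally well be flanked by $y_{u'}$ and $y_{v'}$ for arbitrary $u'\in V_i$, $v'\in V_j$. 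The symmetric-difference budget only compares consecutive snapshots and sees edges of the form $\{\text{hub},z_e\}$, which again carry no information about $e$'s endpoints, so it cannot repair this. Enforcing endpoint consistency is precisely the crux of every \textsc{Multicolored Clique} reduction, and your proposal does not resolve it.

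The paper's construction (\cref{constr:emspvcwhard}) circumvents this by dispensing with edge-selectors entirely and instead spending the \emph{temporal} dimension, which is unconstrained when the parameter is only $\ug{\nu}$: it uses $\tau=n(k-1)$ snapshots, one ``verification'' snapshot for each triple $(i,v,j)$ with $v\in V_i$. In the snapshot for $(i,v,j)$, the hub $a_j$ is adjacent to $w\in V_j$ \emph{if and only if} $\{v,w\}\in E$, while $v$ is the unique $V_i$-vertex routed to $a_j$ (all others are routed to $b_j$). Thus the adjacency test for a specific endpoint pair is encoded in the time-varying incidences between the $\O(k)$ hubs and the leaf vertices, and the edge budget $\ell=4k+7$ forces the same $V_i$-vertex to be carried through all snapshots. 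If you want to complete your proof, you need some analogue of this per-$(i,v,j)$ snapshot mechanism (or another way to make the verification depend on the chosen vertex rather than only on the color pair); with only $\binom{r}{2}$ snapshots and color-pair-indexed hubs, the reduction as described does not go through.
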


\appendixproof{prop:emspvcwhard}
{
\noindent
For the construction to follow,
we employ the following.

\begin{definition}
 \label{def:bijectionkminus2}
 For~$k\in\N$,
 we define for all~$i,j\in\set{k}$, $i\neq j$, the bijection
 $\pi^k_{i,j}:\set{k}\setminus\{i,j\}\to \set{k-2}$ such that
 for~$x,y\in \set{k}\setminus\{i,j\}$ if $x<y$, then~$\pi^k_{i,j}(x)<\pi^k_{i,j}(y)$.
\end{definition}

\noindent
We next describe the construction in the parameterized reduction
behind~\cref{prop:emspvcwhard}.

\begin{construction*}
 \label{constr:emspvcwhard}
 Let~$(G=(V_1,\dots,V_k,E))$ be an instance of \prob{Multicolored Clique} with $n = \sum_{i=1}^k |V_i|$.
 We construct a temporal graph~$\TG=(V',E_1,\ldots,E_\tau)$ with $\tau=n\cdot (k-1)$ as follows (see~\cref{fig:emspvcwhard} for an illustration).
 \begin{figure}[t]
 \centering
  \begin{tikzpicture}
    \tikzstyle{xnode}=[circle,scale=0.66,draw]
    \tikzstyle{xedge}=[thick]
    \def\xr{0.99}
    \def\yr{0.8}

    \begin{scope}
    \node at (-0.5*\xr,2*\yr)[]{(a)};
    \node (s) at (0,0)[xnode,label=-90:{$s$}]{};
    \node (c0) at (1*\xr,0)[xnode,label=-90:{$c_0^1$}]{};
    \node (a1) at (2*\xr,0)[xnode,label=-90:{$a_1$}]{};
    \draw[xedge] (s) to (c0) to (a1);
    \node (v11) at (3*\xr,1.5*\yr)[xnode]{};
    \node (v12) at (3*\xr,0.5*\yr)[xnode]{};
    \node (v1x) at (3*\xr,-0.5*\yr)[]{$\vdots$};
    \node (v1n) at (3*\xr,-1.5*\yr)[xnode]{};
    \node at (3*\xr,0*\yr)[minimum height=\yr*3.5 cm, minimum width=\xr*0.75 cm, rounded corners, draw, lightgray, label=-90:{$V_1$}]{};
    \node (b1) at (4*\xr,0*\yr)[xnode,label=-90:{$b_1$}]{};
    \foreach \x in {1,2,x,n}{\draw[xedge] (a1) to (v1\x) to (b1);}
    \node (c1) at (5*\xr,0*\yr)[xnode,label=-90:{$c_1^1$}]{};
    \node (a2) at (6*\xr,0*\yr)[xnode,label=-90:{$a_2$}]{};
    \draw[xedge] (b1) to (c1) to (a2);
    \node (v21) at (7*\xr,1.5*\yr)[xnode]{};
    \node (v22) at (7*\xr,0.5*\yr)[xnode]{};
    \node (v2x) at (7*\xr,-0.5*\yr)[]{$\vdots$};
    \node (v2n) at (7*\xr,-1.5*\yr)[xnode]{};
    \node at (7*\xr,0*\yr)[minimum height=\yr*3.5 cm, minimum width=\xr*0.75 cm, rounded corners, draw, lightgray, label=-90:{$V_2$}]{};
    \node (b2) at (8*\xr,0*\yr)[xnode,label=-90:{$b_2$}]{};
    \foreach \x in {1,2,x,n}{\draw[xedge] (a2) to (v2\x) to (b2);}
    \node (dots) at (9*\xr,0*\yr)[]{$\cdots$};
    \node (dotsx) at (9.75*\xr,0*\yr)[]{$\cdots$};
    \node (up1) at (10*\xr,1.5*\yr)[]{};
    \node (up2) at (10*\xr,0.5*\yr)[]{};
    \node (upx) at (10*\xr,-0.5*\yr)[]{};
    \node (upn) at (10*\xr,-1.5*\yr)[]{};
    \node (bk) at (11*\xr,0*\yr)[xnode,label=-90:{$b_k$}]{};
    \foreach \x in {1,2,x,n}{\draw[xedge] (bk) to (up\x);}
    \node (ck) at (12*\xr,0*\yr)[xnode,label=-90:{$c_k^1$}]{};
    \node (t) at (13*\xr,0*\yr)[xnode,label=-90:{$t$}]{};
    \draw[xedge] (b2) to (dots); 
    \draw[xedge] (bk) to (ck) to (t);
    \end{scope}

    \begin{scope}[yshift=\yr*-5cm]
    \node at (-0.5*\xr,2*\yr)[]{(b)};
    \node (s) at (0,0)[xnode,label=-90:{$s$}]{};
    \node (ck) at (1*\xr,0)[xnode,label=-90:{$c_k^2$}]{};
    \node (bi) at (2*\xr,0)[xnode,label=-90:{$b_i$}]{};
    \draw[xedge] (s) to (ck) to (bi);
    \node (v11) at (3*\xr,1.5*\yr)[xnode,label=-90:{$v$}]{};
    \node (v12) at (3*\xr,0.5*\yr)[xnode]{};
    \node (v1x) at (3*\xr,-0.5*\yr)[]{$\vdots$};
    \node (v1n) at (3*\xr,-1.5*\yr)[xnode]{};
    \node (aj) at (4*\xr,1*\yr)[xnode,label=-90:{$a_j$}]{};
    \node (bj) at (4*\xr,-1*\yr)[xnode,label=-90:{$b_j$}]{};
    \foreach \x in {2,x,n}{\draw[xedge] (bi) to (v1\x) to (bj);}
    \foreach \x in {1}{\draw[xedge] (bi) to (v1\x) to (aj);}
    \node at (3*\xr,0*\yr)[minimum height=\yr*3.5 cm, minimum width=\xr*0.75 cm, rounded corners, draw, lightgray, label=-90:{$V_i$}]{};
    \node (c0) at (7*\xr,0*\yr)[xnode,label=-90:{$c_0^2$}]{};
    \node (ai) at (6*\xr,0*\yr)[xnode,label=-90:{$a_i$}]{};
    \node (v21) at (5*\xr,1.5*\yr)[xnode,label=0:{$w$}]{};
    \node (v22) at (5*\xr,0.5*\yr)[xnode]{};
    \node (v2x) at (5*\xr,-0.5*\yr)[]{$\vdots$};
    \node (v2n) at (5*\xr,-1.5*\yr)[xnode]{};
    \node at (5*\xr,0*\yr)[minimum height=\yr*3.5 cm, minimum width=\xr*0.75 cm, rounded corners, draw, lightgray, label=-90:{$V_j$}]{};
    \node (api1) at (8*\xr,0*\yr)[xnode,label=-90:{$a_{\pi^{-1}(1)}$}]{};
    \foreach \x in {1,2,x,n}{\draw[xedge] (bj) to (v2\x) to (ai);}
    \draw[xedge] (aj) to (v21) to (ai);
     \foreach \x in {2,x,n}{\draw[dotted, thick] (aj) to (v2\x);}
    \draw[xedge] (ai) to (c0) to (api1);
    \node (dots) at (9.5*\xr,0*\yr)[]{$\cdots$};
    \node (dotsx) at (10.25*\xr,0*\yr)[]{$\cdots$};
    \node (up1) at (9*\xr,1.5*\yr)[]{};
    \node (up2) at (9*\xr,0.5*\yr)[]{};
    \node (upx) at (9*\xr,-0.5*\yr)[]{};
    \node (upn) at (9*\xr,-1.5*\yr)[]{};
    \foreach \x in {1,2,x,n}{\draw[xedge] (api1) to (up\x);}
    \node (ck2) at (11*\xr,0*\yr)[xnode,label=-90:{$c_{k-2}^2$}]{};
    \node (ck1) at (12*\xr,0*\yr)[xnode,label=-90:{$c_{k-1}^2$}]{};
    \node (t) at (13*\xr,0*\yr)[xnode,label=-90:{$t$}]{};
    \draw[xedge] (ck2) to (ck1) to (t);
    \end{scope}
    \end{tikzpicture}
  \caption{Illustration of \cref{constr:emspvcwhard} with (a) showing an odd snapshot and (b) showing the even snapshot~$G_{2\phi(i,v,j)}$  with edge~$\{a_j,w\}$ being present assuming~$\{v,w\}\in E$, and dotted edges may or may not be present (depending on~$E$).}
  \label{fig:emspvcwhard}
 \end{figure}
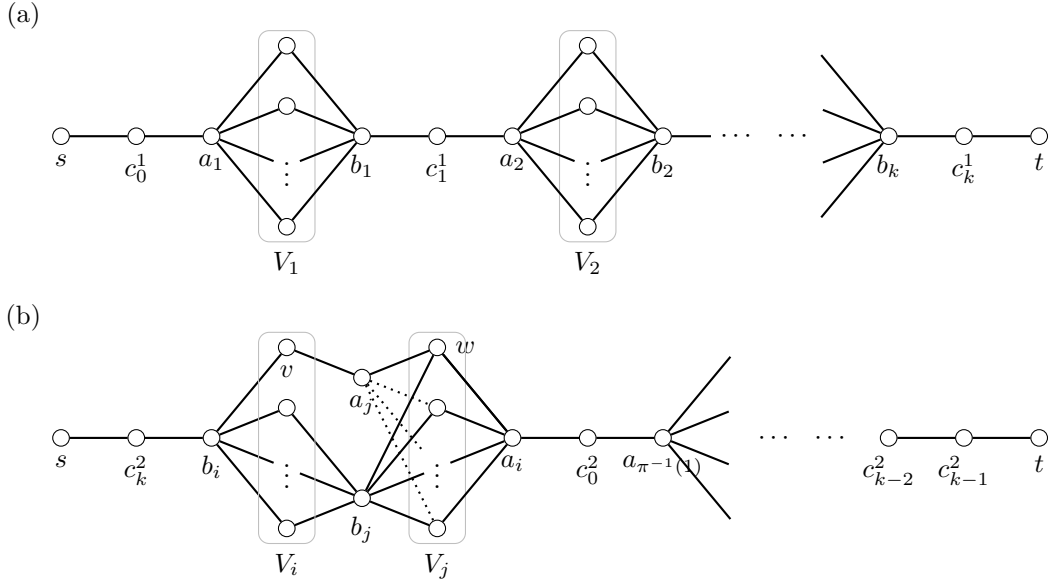
 Let~$V'$ initially contain~$V_1,\ldots,V_k$ and~$s,t$.
 Moreover,
 $V'$ contains the sets~$A=\{a_1,\dots,a_k\}$ and~$B=\{b_1,\dots,b_k\}$.
 Finally,
 $V'$ contains the sets~$C^1=\{c_i^1\mid 0\leq i\leq k\}$ and $C^2=\{c_i^2\mid 0\leq i\leq k\}$.
 We construct the edge set~$E_{odd}$ as follows.
 It contains the edges~$\{s,c_0^1\}$, $\{c_0^1,a_1\}$, $\{c_k^1,b_k\}$, and~$\{c_k^1,t\}$.
 Moreover,
 it contains the edges~$\{b_i,c_i^1\}$, $\{c_i^1,a_{i+1}\}$ for every~$1\leq i<k$.
 Finally, 
 it contains the edge set~$\bigcup_{v\in V_i} \{\{a_i,v\},\{b_i,v\}\}$ for every~$i\in\set{k}$.
 We set~$E_i\ceq E_{odd}$ for each odd~$i\in\set{\tau}$.
 Next, 
 let~$\phi$ be a bijection that maps each~$(i,v,j)$ to a distinct integer in~$\set{n\frac{(k-1)}{2}}$, where~$i<j$, $i,j\in\set{k}$, $v\in V_i$.
 We construct the edge set~$E_{2\phi(i,v,j)}$ as follows.
 We add the edges~$\{s,c_k^2\}$, $\{c_k^2,b_i\}$.
 Then, 
 $b_i$ is connected with all~$w\in V_i$.
 Next, 
 $v$~is adjacent with~$a_j$,
 and all~$w\in V_i\setminus\{v\}$ are adjacent with~$b_j$.
 Next, 
 $a_j$ is adjacent to a vertex in~$w\in V_j$ if and only if~$\{w,v\}\in E$.
 Vertices~$b_j$ and~$a_i$ are adjacent with all vertices in~$V_j$,
 and vertex~$a_i$ is also adjacent with~$c_0^2$.
 Let~$\pi=\pi^k_{i,j}:\set{k}\setminus\{i,j\}\to \set{k-2}$ (see~\cref{def:bijectionkminus2}).
 Then~$c_0^2$ is adjacent with~$a_{\pi^{-1}(1)}$ and~$c_{k-2}^2$ is adjacent with~$b_{\pi^{-1}(k-2)}$ and with~$c_{k-1}^2$ which in turn is adjacent with~$t$.
 Moreover, 
 for all $p \in \set{k-3}$ the vertex
 $c_p^2$ is adjacent with~$a_{\pi^{-1}(p+1)}$ and $b_{\pi^{-1}(p)}$.
 Finally,
 $a_{\pi^{-1}(p)}$ and $b_{\pi^{-1}(p)}$ are adjacent to all vertices in~$V_{\pi^{-1}(p)}$.
 This finishes the construction of~$E_{2\phi(i,v,j)}$.
 Set~$k'=4k+3$
 and~$\ell=4k+7$.
\end{construction*}

\begin{observation}
 \label{obs:emspvcwhardExact-sep}
	Let $p \in \set{n\cdot(k-1)}$.
 In~$(V,E_{2p-1})$,
 each vertex in~$A\cup B\cup C^1$, and each set~$V_i$ is an~$s$-$t$ separator,
 and in~$(V,E_{2p})$ with~$p=\phi(i,v,j)$
 each vertex in~$(A\setminus\{a_j\})\cup (B\setminus \{b_j\}) \cup C^2$, each set~$V_i$, and the set~$\{a_j,b_j\}$  is an~$s$-$t$ separator.
\end{observation}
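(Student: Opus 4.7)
The plan is to verify the claim by direct inspection of the two snapshot types defined in \cref{constr:emspvcwhard}, noting in each case that the snapshot has a ``high-level'' linear (chain-of-gadgets) structure along which every $s$-$t$ path must travel.

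First I would handle the odd snapshot $(V',E_{odd})$. The edge set $E_{odd}$ is essentially the concatenation
\[
  s - c_0^1 - a_1 - V_1 - b_1 - c_1^1 - a_2 - V_2 - b_2 - \ldots - b_k - c_k^1 - t,
\]
where each ``$V_i$'' slot is a set of pairwise non-adjacent vertices each sharing an edge only with $a_i$ and $b_i$. Hence every $s$-$t$ path must first traverse $s,c_0^1,a_1$, then some vertex of $V_1$, then $b_1,c_1^1,a_2$, and so on. From this it is immediate that every $c_i^1\in C^1$, every $a_i$, and every $b_i$ is a cut vertex, hence an $s$-$t$ separator. For each $V_i$, note that removing $V_i$ disconnects $a_i$ from $b_i$ because the only neighbors of $a_i$ and $b_i$ other than the $c^1$-vertices lie in $V_i$; therefore $V_i$ is an $s$-$t$ separator as well.

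Next I would handle the even snapshot $(V',E_{2\phi(i,v,j)})$. Here the high-level structure is
\[
  s - c_k^2 - b_i - V_i - \{a_j,b_j\} - V_j - a_i - c_0^2 - a_{\pi^{-1}(1)} - V_{\pi^{-1}(1)} - b_{\pi^{-1}(1)} - c_1^2 - \ldots - c_{k-2}^2 - c_{k-1}^2 - t,
\]
with $\pi=\pi^k_{i,j}$. Every $c_q^2\in C^2$ is a cut vertex by the same chain argument. The vertex $b_i$ separates $s$ from everything past it, and $a_i$ separates $t$ from everything before it. For every index $r\in\set{k}\setminus\{i,j\}$ the vertices $a_r$ and $b_r$ each occur uniquely on the chain with $V_r$ as the only ``bypass'', so they are separators, and $V_r$ itself separates $a_r$ from $b_r$. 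The one subtle case is index $j$: because the edges entering $V_j$ from $V_i$ are split between $a_j$ and $b_j$ (the edge $\{v,a_j\}$ versus the edges $\{w,b_j\}$ for $w\neq v$), neither $a_j$ nor $b_j$ alone is a cut vertex, but the pair $\{a_j,b_j\}$ is an $s$-$t$ separator since every $s$-$t$ path must leave $V_i$ through one of them and enter $V_j$ through one of them; this is exactly why the observation lists the pair $\{a_j,b_j\}$ rather than its two elements individually. Finally, each set $V_r$ (including $r\in\{i,j\}$) is a separator by the same ``only bypass'' argument.

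The proof therefore requires no nontrivial combinatorial step; the only thing to be careful about is the bookkeeping around the $j$-th slot, where the construction inserts a parallel $a_j/b_j$ split rather than the serial $a_r-V_r-b_r$ pattern used at the other slots. Once this special slot is identified, the separator claims follow directly from reading off the chain structure of the two snapshots.
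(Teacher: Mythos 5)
Your argument is correct, and it is exactly the intended justification: the paper states this observation without an explicit proof, treating it as immediate from the chain-of-gadgets structure of \cref{constr:emspvcwhard}, which is precisely what you spell out (including the one genuinely delicate point, namely that at the $j$-th slot of the even snapshot only the pair $\{a_j,b_j\}$ — not either vertex alone — separates $s$ from $t$, since the $V_i$-to-$V_j$ transition can route through either of them).
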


\begin{observation}
 \label{obs:emspvcwhardExact}
	Let $p \in \set{n\cdot(k-1)}$.
 Every~$s$-$t$ path in~$(V,E_p)$ with at most~$k'$ vertices contains exactly one vertex from each~$V_i$.
\end{observation}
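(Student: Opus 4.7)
The plan is to reduce the observation to a short counting argument based on \cref{obs:emspvcwhardExact-sep} together with the fact that every $V_\ell$ is an independent set in every snapshot (the construction never adds any intra-$V_\ell$ edge). The key idea is that although each $V_\ell$ must be visited by any $s$-$t$ path (separator property), the external neighborhood of each $V_\ell$ is so small that a simple path cannot enter it more than once.

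First, I would invoke \cref{obs:emspvcwhardExact-sep} to conclude that every $s$-$t$ path $P$ in $(V,E_p)$ satisfies $|V(P) \cap V_\ell| \geq 1$ for each $\ell \in \set{k}$. Second, since $V_\ell$ is independent and $s,t \notin V_\ell$, each maximal segment of $P$ that lies inside $V_\ell$ consists of a single vertex that is entered and exited via two \emph{distinct} external neighbors of $V_\ell$. Hence the number of $V_\ell$-vertices on $P$ equals the number of visits to $V_\ell$, which is at most $\lfloor |N_{\ug{\TG}}(V_\ell)\setminus V_\ell| / 2\rfloor$ in the snapshot under consideration.

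Third, I would read off the external neighborhood of each $V_\ell$ directly from \cref{constr:emspvcwhard}. In an odd snapshot, the only edges incident to $V_\ell$ go to $a_\ell$ and $b_\ell$, so $|N(V_\ell)\setminus V_\ell| = 2$. In an even snapshot $(V,E_{2\phi(i,v,j)})$, the same holds for every $\ell \notin \{i,j\}$; for $\ell = i$ the external neighborhood is $\{b_i, a_j, b_j\}$, and for $\ell = j$ it is (a subset of) $\{a_j, b_j, a_i\}$. In every case the size is at most~$3$, so at most $\lfloor 3/2 \rfloor = 1$ visit is possible. Combined with the separator lower bound, this forces exactly one $V_\ell$-vertex on $P$.

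The argument is essentially book-keeping; the only mildly delicate step is verifying the external-neighborhood sizes in the even snapshot, which is a direct but case-heavy inspection of \cref{constr:emspvcwhard}. It is worth remarking that the hypothesis $|V(P)| \leq k'$ is not actually needed: the above reasoning shows the stronger statement that \emph{every} simple $s$-$t$ path in $(V,E_p)$ uses exactly one vertex from each $V_\ell$; the bound $k'$ merely makes the observation suffice for the reduction.
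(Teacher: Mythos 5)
Your argument is fine for the odd snapshots and for the classes $V_\ell$ with $\ell\notin\{i,j\}$ in an even snapshot: there the external neighborhood of $V_\ell$ is exactly $\{a_\ell,b_\ell\}$ (size $2$), and a degree-two internal vertex of a simple path needs two distinct neighbors, so only one visit with one vertex is possible. The gap is in the step where you pass from $|N(V_\ell)\setminus V_\ell|\le 3$ to ``at most $\lfloor 3/2\rfloor=1$ vertex of $V_\ell$''. Distinct visits to $V_\ell$ need not use disjoint pairs of external neighbors: the pattern $u_1,x_1,u_2,x_2,u_3$ places \emph{two} vertices $x_1,x_2\in V_\ell$ on a simple path while using only \emph{three} external neighbors, because $u_2$ serves as the exit of one visit and the entrance of the next. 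The correct bound obtainable from neighborhood size alone is $|N(V_\ell)\setminus V_\ell|-1$, which for size $3$ gives $2$, not $1$. And this is not a hypothetical worry: in the even snapshot $(V,E_{2\phi(i,v,j)})$ the class $V_j$ has external neighborhood $\{a_i,a_j,b_j\}$, and whenever $v$ has at least two $G$-neighbors $w,w'\in V_j$, the walk $s,c_k^2,b_i,u,b_j,w',a_j,w,a_i,c_0^2,\dots,t$ (with $u\in V_i\setminus\{v\}$) is a genuine simple $s$-$t$ path containing two vertices of $V_j$. Consequently your closing remark that the hypothesis $|V(P)|\le k'$ is not needed is false; the counterexample path has $4k+4>k'=4k+3$ vertices, and the bound $k'$ is precisely what rules it out.

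The paper closes this case by a budget argument rather than a neighborhood count: by \cref{obs:emspvcwhardExact-sep} every $s$-$t$ path in the even snapshot must contain $s$, $t$, all of $(A\setminus\{a_j\})\cup(B\setminus\{b_j\})\cup C^2$, at least one vertex of each $V_\ell$, and at least one of $\{a_j,b_j\}$, which already forces $|V(P)|\ge 4k+2$. A second vertex in $V_i$ or $V_j$ forces \emph{both} $a_j$ and $b_j$ onto the path, giving at least $4k+4>k'$ vertices. To repair your proof you would either need to adopt this counting against $k'$, or supplement the neighborhood argument for $V_i$ and $V_j$ with the observation that the degree-two separator vertices $c_k^2$ and $c_0^2$ each consume one of the two path-slots of $b_i$ and $a_i$ respectively --- and even then the $V_j$ case cannot be saved without invoking the bound $k'$.
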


\begin{proof}
 For every odd snapshot,
 the statement is clear by construction.
 Consider~$p=\phi(i,v,j)$ and~$(V,E_{2p})$,
 and let~$P$ be an arbitrary~$s$-$t$ path with at most~$k'$ vertices.
 We know from \cref{obs:emspvcwhardExact-sep} that every~$s$-$t$ path in $(V,E_{2p})$
 contains every vertex in $(A\setminus\{a_j\})\cup (B\setminus \{b_j\}) \cup C^2$, one vertex from each set~$V_i$, and one vertex from~$\{a_j,b_j\}$.
 It follows that
 $|V(P)|\geq 2+|(A\setminus\{a_j\})\cup (B\setminus \{b_j\}) \cup C^2|+k+1=2+(2k-2+(k+1))+k+1=4k+2$.
 Moreover,
 with the same argument as for the odd snapshots,
 it contains exactly one vertex from each set~$V_q$ with~$q\in\set{k}\setminus\{i,j\}$.
 So, suppose~$P$ contains one more vertex from~$V_i$ or~$V_j$.
 Then~$P$ must contain both~$\{a_j,b_j\}$, 
 and hence~$|V(P)|=(4k+2)+2=4k+4>k'$,
 yielding a contradiction.
\end{proof}

Since in every snapshot
each vertex from~$C^1\cup C^2$ is of degree two or zero,
we have the following.

\begin{observation}
	Let $p \in \set{n\cdot(k-1)}$.
 Every~$s$-$t$ path in~$(V,E_{2p-1})$ contains the edge set~$E_{2p-1}'$ consisting of all edges incident with a vertex in~$\{c_0^1,\dots,c_k^1\}$.
 Every~$s$-$t$ path in~$(V,E_{2p})$ contains the edge set~$E_{2p}'$ consisting of all edges incident with $\{c_0^2,\dots,c_k^2\}$.
 Hence,
 we have that~$\symdif{E_{2p-1}}{E_{2p}}\supseteq E_{2p-1}'\cup E_{2p}'$ and~$|E_{2p-1}'\cup E_{2p}'|=4k+3=\ell-4$,
 and~$\symdif{E_{2p}}{E_{2p+1}}\supseteq E_{2p}'\cup E_{2p+1}'$ and~$|E_{2p}'\cup E_{2p+1}'|=4k+3=\ell-4$.
\end{observation}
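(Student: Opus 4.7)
The plan is to derive everything from the fact that each vertex in $C^1 \cup C^2$ has degree exactly two in the snapshot where it is non-isolated, combined with the separator property stated in \cref{obs:emspvcwhardExact-sep}.

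First, I would argue that in the odd snapshot $(V, E_{2p-1})$, each $c_i^1 \in C^1$ is incident with exactly two edges: by inspection of \cref{constr:emspvcwhard}, $c_0^1$ has neighbors $s$ and $a_1$; $c_k^1$ has neighbors $b_k$ and $t$; and for $1 \le i < k$, $c_i^1$ has neighbors $b_i$ and $a_{i+1}$. In particular, $\deg(c_i^1) = 2$ for every $i \in \{0,1,\dots,k\}$. By \cref{obs:emspvcwhardExact-sep}, each $c_i^1$ is an $s$-$t$ separator, so any $s$-$t$ path $P$ must visit $c_i^1$, and since $c_i^1$ has exactly two incident edges, $P$ must use both of them. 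Hence $E(P) \supseteq E_{2p-1}'$, where $E_{2p-1}'$ is the set of all edges incident with $C^1$. A direct count gives $|E_{2p-1}'| = 2(k+1) = 2k+2$.

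Next I would perform the symmetric argument for the even snapshot $(V, E_{2p})$. Again by inspection, every $c_i^2 \in C^2$ has exactly two neighbors in $(V, E_{2p})$ (with the edge $\{c_{k-2}^2, c_{k-1}^2\}$ being shared between two of them), and by \cref{obs:emspvcwhardExact-sep} each $c_i^2$ lies on every $s$-$t$ path, forcing any such path to use both of its incident edges. A careful count accounting for the one shared internal edge gives $|E_{2p}'| = 2(k+1) - 1 = 2k+1$.

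Finally I would establish the symmetric difference bound: since $C^1$ and $C^2$ are disjoint vertex sets, and the edges in $E_{2p-1}'$ are all present in $E_{2p-1}$ while the edges in $E_{2p}'$ are all present in $E_{2p}$, it suffices to show that $E_{2p-1}' \cap E_{2p} = \emptyset$ and $E_{2p}' \cap E_{2p-1} = \emptyset$. This is immediate from the construction because edges incident with $C^1$ appear only in odd snapshots (they are listed in $E_{odd}$), and edges incident with $C^2$ appear only in even snapshots. Hence $E_{2p-1}'$ and $E_{2p}'$ are disjoint subsets of $\symdif{E_{2p-1}}{E_{2p}}$, and their union has size $(2k+2) + (2k+1) = 4k+3 = \ell - 4$. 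The same argument, applied to the pair $(2p, 2p+1)$, establishes the analogous bound $|E_{2p}' \cup E_{2p+1}'| = 4k+3 = \ell - 4$. The only real care needed is the bookkeeping around the shared edge $\{c_{k-2}^2, c_{k-1}^2\}$ inside $E_{2p}'$; the rest is straightforward from the degree/separator structure.
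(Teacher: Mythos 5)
Your proof is correct and follows exactly the route the paper intends: the paper leaves this observation unproven, justifying it only by the preceding remark that every vertex of $C^1\cup C^2$ has degree two or zero in each snapshot, and your elaboration (separator property from \cref{obs:emspvcwhardExact-sep} plus degree two forces both incident edges, careful count $2(k+1)+2(k+1)-1=4k+3$ accounting for the shared edge $\{c_{k-2}^2,c_{k-1}^2\}$, and disjointness of $C^1$-incident and $C^2$-incident edges across odd/even snapshots) is precisely the missing bookkeeping, consistent with the counts used later in the proof of the proposition.
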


\begin{lemma}
 \label{lem:empsvcwhardV}
 Let~$\calP=(P_1,\dots,P_\tau)$ be a solution to the instance obtained using~\cref{constr:emspvcwhard}.
 Then~$V(P_p)\cap V=V(P_q)\cap V$ for all~$p,q\in\set{\tau}$.
\end{lemma}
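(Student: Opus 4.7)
My plan is to exploit the tight symmetric-difference budget between consecutive snapshots. By~\cref{obs:emspvcwhardExact} each $P_q$ contains exactly one vertex $w_q^i$ from each $V_i$, so the lemma reduces to proving $w_q^i = w_{q+1}^i$ for every $q\in\set{\tau-1}$ and every $i\in\set{k}$; the full claim then follows by induction on $q$.

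Fix such a pair $(q,q+1)$. Since snapshots alternate in structure, I assume without loss of generality that $E_q = E_{odd}$ and $E_{q+1}=E_{2\phi(i_0,v_0,j_0)}$; the reverse case is symmetric. The observation preceding the lemma already accounts for $4k+3=\ell-4$ edges in~$\symdif{E(P_q)}{E(P_{q+1})}$ that are \emph{mandatory} (incident to $C^1$ or $C^2$). Hence the \emph{non-mandatory} edges (those touching the chosen $V_i$-vertices through $A\cup B$) may contribute at most $4$. Explicitly, the non-mandatory edges in $P_q$ are $\{(a_i,w_q^i),(w_q^i,b_i):i\in\set{k}\}$, whereas those in $P_{q+1}$ consist of $(b_{i_0},w_{q+1}^{i_0})$, one edge from $w_{q+1}^{i_0}$ to $a_{j_0}$ or $b_{j_0}$ (according to whether $w_{q+1}^{i_0}=v_0$), the corresponding edge to $w_{q+1}^{j_0}$, the edge $(w_{q+1}^{j_0},a_{i_0})$, and for each $i\notin\{i_0,j_0\}$ the pair $(a_i,w_{q+1}^i),(w_{q+1}^i,b_i)$.

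Next I would run a direct case analysis on the per-index contribution to the non-mandatory symmetric difference. For $i\notin\{i_0,j_0\}$, equal vertex choice $w_q^i=w_{q+1}^i$ contributes $0$, while a different choice contributes $4$. For $i\in\{i_0,j_0\}$, equal choice contributes $2$, since exactly one of the two path-edges incident to the chosen vertex cancels against its counterpart in the other path; a different choice again contributes $4$. The minimum achievable total is therefore $0\cdot(k-2)+2+2=4$, exactly matching the slack and attained \emph{only} when $w_q^i=w_{q+1}^i$ for every $i$. Any disagreement pushes the non-mandatory symmetric difference strictly above $4$, violating the budget $\ell=4k+7$.

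The main obstacle I anticipate is the asymmetric bookkeeping at indices $i_0$ and $j_0$: in the even snapshot $a_{i_0}$ is attached to $V_{j_0}$ rather than $V_{i_0}$, and the path must pick between $a_{j_0}$ and $b_{j_0}$ depending on whether $w_{q+1}^{i_0}=v_0$. This breaks the clean cancellation pattern available for $i\notin\{i_0,j_0\}$, and one must verify carefully that in either branch exactly one edge cancels at each of $i_0,j_0$ (yielding the $2+2$ baseline), while any disagreement on $w^{i_0}$ or $w^{j_0}$ destroys the cancellation and costs $4$ per index.
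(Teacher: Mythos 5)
Your proof is correct and follows essentially the same route as the paper's: both charge the $\ell-4=4k+3$ mandatory edges incident to $C^1\cup C^2$ and then show that any disagreement in the $V_i$-choices forces the remaining (non-mandatory) symmetric difference strictly above the leftover budget of $4$. Your per-index tally (baseline $0\cdot(k-2)+2+2=4$, with $+2$ or $+4$ per disagreeing index) is just a more systematic bookkeeping of the same count the paper performs by exhibiting at least six extra edges at a single disagreeing index.
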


\begin{proof}
 Assume towards a contradiction that there is~$r=\phi (i,v,j)$ such that~$V(P_{2r-1})\cap V\neq V(P_{2r})\cap V$ or $V(P_{2r})\cap V\neq V(P_{2r+1})\cap V$.
 We consider the first case (the second case is analogous).
 We know that each~$V_x$ is an~$s$-$t$ separator in~$(V,E_p)$ for every $x \in \set{k}$ and~$p\in\set{\tau}$.
 Moreover, 
 we know from \cref{obs:emspvcwhardExact} that each of~$P_{2r-1}$ and~$P_{2r}$ contains exactly one vertex from each~$V_x$, $x \in \set{k}$.
 So, 
 there is a~$z\in \set{k}$ such that there are distinct~$v'$ and~$v''$ in~$V_z$ such that~$v'\in V(P_{2r-1})$ and~$v''\in V(P_{2r})$.
 If~$z\not\in\{i,j\}$, 
 then~$\{v',a_z\},\{v',b_z\},\{v'',a_z\},\{v'',b_z\} \in \symdif{E_{2p-1}}{E_{2p}}$.
 If~$z=i$,
 then~$\{v',b_z\},\{v'',b_z\} \in  \symdif{E_{2p-1}}{E_{2p}}$.
 If~$z=j$,
 then~$\{v',a_j\},\{v',b_j\}\in\symdif{E_{2p-1}}{E_{2p}}$.
 Let~$u\in V(P_{2r})\cap V_i$ and let~$w\in V(P_{2r})\cap V_j$.
 By construction,
 we know that~$\{u,a_j\},\{w,a_i\},\{u,a_i\},\{w,b_j\}\in \symdif{E_{2p-1}}{E_{2p}}$.
 Hence,
 $\symdif{E_{2p-1}}{E_{2p}}$ contains~$\ell-4$ edges each being incident with a vertex in~$C^1\cup C^2$,
 and at least six further edges,
 amounting to~$\ell+2$ edges,
 contradicting the fact that~$\calP$ is a solution.
\end{proof}

\begin{proof}[Proof of~\cref{prop:emspvcwhard}]
  Let~$I=(G=(V_1,\dots,V_k,E)$ be an instance of \prob{Multicolored Clique},
  and let~$I'=(\TG=(V',E_1,\dots,E_\tau),s,t,k,\ell)$ be the instance obtained from~$I$ using~\cref{constr:emspvcwhard} in polynomial time.
  Note that every edge in~$\bigcup_{p=1}^\tau E_p$ is incident with~$M\ceq A\cup B\cup C^1\cup C^2\cup \{s,t\}$,
  and hence~$M$ is a vertex cover of the underlying graph of size~$|M|=2k+2k+2+3=4k+5$.
  Denote by~$G_p=(V,E_p)$ the~\ith{$p$} snapshot of~$\TG$ for every~$p\in\set{\tau}$.
  We claim that~$I$ is a \yes-instance if and only if~$I'$ is a \yes-instance.
  
  \RD{}
  Let~$W\subseteq V$ form a multicolored clique.
  Let~$P_{odd}$ be the path in~$G_{odd}\ceq (V,E_{odd})$ with vertex set~$V(P_{odd})=A\cup B\cup C^1\cup \{s,t\}\cup W$,
  and the edge set~$E(P_{odd})=E(G_{odd}[V(P_{odd})])$.
  Note that~$|V(P_{odd})|=3k+1+2+k=4k+3= k'$.
  Set~$P_{2p-1}\ceq P_{odd}$ for every~$p\in\set{\tau/2}$.
  Next we construct~$P_{2p}$ for every~$p\in\set{\tau/2}$. 
  Let~$p=\phi(i,v,j)$.
  We distinguish two cases whether~$v\in W$ or not.
  
  Case 1: $v\in W$.
  Let~$V(P_{2p})=A\cup B\setminus\{b_j\}\cup C^2\cup \{s,t\}\cup W$,
  and~$E(P_{2p})=E(G_{2p}[V(P_{2p})])$.
  Note that~$|V(P_{2p})|=4k+2\leq k'$.
  Moreover,
  $P_{2p}$ is an~$s$-$t$ path
  since the edges~$\{v,a_j\},\{a_j,w\}$ are contained in~$G_{2p}$,
  where~$w\in W\cap V_j$,
  since~$\{v,w\}\in E$.
  
  Case 2: $v\not\in W$.
  Let~$V(P_{2p})=A\cup B\setminus\{a_j\}\cup C^2\cup \{s,t\}\cup W$,
  and~$E(P_{2p})=E(G_{2p}[V(P_{2p})])$.
  Note that~$|V(P_{2p})|=4k+2\leq k'$.
  Moreover,
  $P_{2p}$ is an~$s$-$t$ path
  since the edges~$\{u,b_j\},\{b_j,w\}$ are contained in~$G_{2p}$,
  where~$u\in W\cap V_i$ and~$w\in W\cap V_j$
  since $b_j$ is adjacent to every vertex in~$V_i\setminus\{v\}$ and~$V_j$.
  
  It remains to show that~$|\symdif{E(P_{2p-1})}{E(P_{2p})}|\leq \ell$ for all~$p\in\set{\tau/2}$,
  and that $|\symdif{E(P_{2p})}{E(P_{2p+1})}|\leq \ell$ for all~$p\in\set{\tau/2-1}$.
  We prove the former, 
  as the latter follows analogously.
  Let~$p=\phi(i,v,j)$.
  By construction,
  $\symdif{E(P_{2p-1})}{E(P_{2p})}$ contains all edges incident with~$C^1$ and~$C^2$.
  Let~$u\in V_i\cap W$, and~$w\in V_j\cap W$.
  We consider two cases:
  
  Case 1: $u=v$.
  Note that~$P_{2p}$ has the subpath~$b_iua_jwa_i$,
  and hence
  $\symdif{E(P_{2p-1})}{E(P_{2p})}$ contains the edges~$\{u,a_j\},\{w,a_i\}\in E(P_{2p})$ and the edges $\{u,a_i\},\{w,b_j\} \in E(P_{2p-1})$.
  Note that all other edges in $E(P_{2p-1}) \cup E(P_{2p})$ not incident to a vertex in $C^1 \cup C^2$ are also in $E(P_{2p-1}) \cap E(P_{2p})$.
  Hence,~$|\symdif{E(P_{2p-1})}{E(P_{2p})}|=2(k+1)+2(k+1)-1+4=4k+7=\ell$.
  
  Case 2: $u\neq v$.
  Note that~$P_{2p}$ has the subpath~$b_iub_jwa_i$,
  and hence
  $\symdif{E(P_{2p-1})}{E(P_{2p})}$ contains the edges~$\{u,b_j\},\{w,a_i\}\in E(P_{2p})$ and the edges $\{u,a_i\},\{w,a_j\} \in E(P_{2p-1})$.
  Note that all other edges in $E(P_{2p-1}) \cup E(P_{2p})$ not incident to a vertex in $C^1 \cup C^2$ are also in $E(P_{2p-1}) \cap E(P_{2p})$.
  Hence, 
  $|\symdif{E(P_{2p-1})}{E(P_{2p})}|=2(k+1)+2(k+1)-1+4=4k+7=\ell$.
  
  It follows that~$(P_1,\dots,P_\tau)$ is a solution to~$I'$.
  
  \LD{}
  Let~$(P_1,\ldots,P_\tau)$ be a solution to~$I'$.
  Due to~\cref{lem:empsvcwhardV},
  we know that~$V(P_p)\cap V=V(P_q)\cap V=:W$ for all~$p,q\in\set{\tau}$.
  We claim that~$W$ forms a multicolored clique in~$G$.
  By~\cref{obs:emspvcwhardExact},
  we know that $|W\cap V_i|=1$, for all~$i\in\set{k}$.
  Let~$w_i\in W\cap V_i$ denote the corresponding vertex, for all~$i\in\set{k}$.
  It remains to show that for each distinct pair~$w_i,w_j$,
  we have that~$\{w_i,w_j\}\in E$.
  Assume without loss of generality that~$i<j$,
  and let~$p=\phi(i,w_i,j)$.
  Since~$P_{2p}$ is an~$s$-$t$ in~$G_{2p}$,
  it contains the subpath~$w_ia_jw_j$,
  since~$w_i$ is only adjacent to~$b_i$ and~$a_j$.
  By construction of snapshot~$G_{2p}$,
  we know that~$\{a_j,w_j\}\in E(G_{2p})$ if and only if~$\{w_i,w_j\}\in E$.
  Hence,
  the claim follows.
\end{proof}
}

\noindent
For~\vmsp{}, 
we have an even stronger result:
the problem is~\W{1}-hard regarding~$\ug{\nu}$
even if the size of any symmetric difference of the vertex sets of consecutive paths is at most one.
The proof is,
however,
similar to the proof of~\cref{prop:emspvcwhard}.

\ifarxiv{}
\begin{proposition}
		\else{}
\begin{proposition}[\appref{prop:vmspvcwhard}]
\fi{}
 \label{prop:vmspvcwhard}
 \vmsp{} when parameterized by~$\ug{\nu}$ is \W{1}-hard,
 even if~$\ell=1$.
\end{proposition}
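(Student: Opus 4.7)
The plan is to modify \cref{constr:emspvcwhard} so that it reduces \textsc{Multicolored Clique} to \vmsp{} in a way that forces $|V(P_p) \triangle V(P_{p+1})| \le 1$ rather than bounding the edge symmetric difference. I will keep the same high-level architecture: from an instance $(G=(V_1,\ldots,V_k,E))$ of \textsc{Multicolored Clique} build a temporal graph whose underlying vertex cover $M = A \cup B \cup C^1 \cup C^2 \cup \{s,t\}$ has size $O(k)$, while the sets $V_1,\ldots,V_k$ form an independent set on the other side; hence $\ug{\nu} \in O(k)$, which is the required bound on the parameter. Odd snapshots will again encode a "selection" of one vertex per color class, and the $\phi(i,v,j)$-indexed even snapshots will test the presence of an edge between the chosen vertices of $V_i$ and $V_j$.

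The critical change is that the "overhead" vertices of each path must be identical in every snapshot: I would therefore redesign the gate gadgets so that every $s$-$t$ path of minimum length in every snapshot uses exactly the same set $A \cup B \cup C^1 \cup C^2$ (the choice between $a_j$ and $b_j$ that appeared in the edge-similarity construction is no longer affordable). Concretely, the gadget for snapshot $G_{2\phi(i,v,j)}$ is built so that the detour verifying $\{v,w\} \in E$ traverses the same gate vertices as the odd snapshot, just in a different edge pattern; in particular the only freedom in each path is the choice of one representative vertex from each~$V_q$. Setting $k' = |M| + k$, every minimum $s$-$t$ path thus has vertex set $M \cup \{w_1,\ldots,w_k\}$ with $w_q \in V_q$, and the symmetric difference between two consecutive paths equals $|\{w_q^{(p)} : q\in [k]\} \triangle \{w_q^{(p+1)} : q\in [k]\}|$. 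Forcing $\ell = 1$ then means this difference is at most~$1$; since both sets have the same size $k$, the difference has even cardinality, so in fact it must be~$0$ and the chosen representatives are preserved across the entire sequence.

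For the forward direction, given a multicolored clique $W = \{w_1,\ldots,w_k\}$, I would define every $P_p$ to have vertex set $M \cup W$, which yields $V(P_p) \triangle V(P_{p+1}) = \emptyset$; the existence of the required edges in snapshot $2\phi(i,v,j)$ follows from $\{w_i,w_j\} \in E$, exactly as in \cref{prop:emspvcwhard}. For the reverse direction, the $\ell=1$ bound combined with the length constraint $k'$ forces $W := (V(P_1) \setminus M) = (V(P_p) \setminus M)$ for all $p$, and routability in snapshot $2\phi(i,w_i,j)$ then requires $w_i$ to be adjacent to $w_j$ in $G$, so $W$ is a multicolored clique.

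The main obstacle is engineering the even-snapshot gadget: I need a graph in which the $s$-$t$ detour through the "verification" block uses exactly the gate vertices of the odd snapshot and the two representatives $w_i \in V_i$, $w_j \in V_j$, with no slack that would allow a shorter path to avoid one of the gate vertices or pick an alternative in $V_i$ or $V_j$. This will require inserting additional degree-two "forcing" gate vertices whose presence (and adjacency only to predetermined neighbors) makes every $s$-$t$ path of length at most $k'$ pass through all of $M$ exactly once, and makes the adjacency test $\{a_j,w_j\}$ the sole edge whose existence depends on the input $E$. Once this rigidity is in place, both directions of the correctness argument proceed essentially as in the proof of \cref{prop:emspvcwhard}, but with vertex sets rather than edge sets carrying the information.
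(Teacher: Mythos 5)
Your overall strategy matches the paper's: reduce from \textsc{Multicolored Clique}, put the color classes $V_1,\dots,V_k$ on the independent-set side so that $\ug{\nu}\in O(k)$, use odd snapshots to fix one representative per class and even snapshots indexed by $\phi(i,v,j)$ to test edges, and exploit $\ell=1$ to freeze the representatives across time. Your parity observation (two size-$k$ representative sets have even symmetric difference, so $\le 1$ forces $=0$) is sound and is essentially the same per-color-class argument the paper uses.

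However, there is a genuine gap, and it sits exactly where you defer to ``engineering'': the requirement that \emph{every} $s$-$t$ path in \emph{every} snapshot traverse the identical overhead set $M$. The even snapshot $2\phi(i,v,j)$ must behave differently depending on whether the selected representative of $V_i$ equals $v$ (then the path must certify $\{v,w_j\}\in E$) or not (then the path must bypass the test unconditionally). Since the snapshot's edge set is fixed, the only way it can ``remember'' which case it is in is by routing the path through different intermediate vertices: a single gate $g$ adjacent to all of $V_i$ cannot simultaneously be adjacent to all of $V_j$ (needed for the bypass) and only to $N(v)\cap V_j$ (needed for the test). So you need two distinct branch vertices, exactly one of which lies on any given path; attempts to put both on every path (e.g.\ $v\mbox{-}x\mbox{-}y\mbox{-}w_j$ versus $w_i\mbox{-}y\mbox{-}x\mbox{-}w_j$) create length-saving shortcuts that skip the edge test, because $|V(P)|\le k'$ is only an upper bound. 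Consequently the overhead cannot be identical across snapshots, and your claim that the symmetric difference reduces purely to the representative sets does not hold. The paper's \cref{constr:vmspvcwhard} resolves this by introducing exactly two such branch vertices $x,y$ that appear only in even snapshots, with each even path using exactly one of them: consecutive symmetric differences are then exactly $1$, which is precisely the budget $\ell=1$, and any change of a representative would add at least $2$ more. Your proof becomes correct if you replace ``identical overhead, difference $0$'' by ``overhead differing in exactly one forced vertex, difference $1$'' and adjust the budget accounting accordingly; also note the paper slims the gate set down to $A=\{a_0,\dots,a_k\}\cup\{x,y\}$ rather than reusing the full $A\cup B\cup C^1\cup C^2$ of the edge-similarity construction, though that part is only cosmetic.
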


\appendixproof{prop:vmspvcwhard}
{
\begin{construction*}
 \label{constr:vmspvcwhard}
 Let~$(G=(V_1,\dots,V_k,E))$ be an instance of \prob{Multicolored Clique}.
 We construct a temporal graph~$\TG=(V',E_1,\ldots,E_\tau)$ with $\tau=n\cdot (k-1)$ as follows.
 Let~$V'$ initially contain~$V_1,\ldots,V_k$ and~$s,t$.
 Finally,
 $V'$ contains the sets~$A=\{a_0,\dots,a_k\}$ and two special vertices~$x$ and~$y$.
 We construct the edge set~$E_{odd}$ as follows.
 It contains the edges~$\{s,a_0\}$ and~$\{a_k,t\}$.
 Finally, 
 it contains the edge set~$\bigcup_{v\in V_i} \{\{a_{i-1},v\},\{a_i,v\}\}$ for every~$i\in\set{k}$.
 We set~$E_i\ceq E_{odd}$ for each odd~$i\in\set{\tau}$.
 Next, 
 let~$\phi$ be a bijection that maps~$(i,v,j)$ to~$\set{n\frac{(k-1)}{2}}$, where~$i<j$, $i,j\in\set{k}$, $v\in V_i$.
 We construct the edge set~$E_{2\phi(i,v,j)}$ as follows.
 We add the edge~$\{s,a_i\}$.
 Then, 
 $a_i$ is connected with all~$w\in V_i$.
 Next, 
 $v$ is adjacent with~$x$,
 and all~$w\in V_i\setminus\{v\}$ are adjacent with~$y$.
 Next, 
 $x$ is adjacent to a vertex in~$w\in V_j$ if and only if~$\{w,v\}\in E$.
 Vertices~$y$ and~$a_j$ are adjacent with all vertices in~$V_j$,
 and vertex~$a_j$ is also adjacent with~$a_{\pi^{-1}(1)}$,
 where~$\pi=\pi^k_{i,j}:\set{k}\setminus\{i,j\}\to \set{k-2}$ (see~\cref{def:bijectionkminus2}).
 Then~$t$ is adjacent with~$a_0$ which in turn is also adjacent with~$a_{\pi^{-1}(k-2)}$,
 and
 for each~$p\in\set{k-3}$,
 $a_{\pi^{-1}(p)}$ and $a_{\pi^{-1}(p+1)}$ are adjacent to all vertices in~$V_{\pi^{-1}(p)}$.
 This finishes the construction of~$E_{2\phi(i,v,j)}$.
 Set~$k'=2k+4$
 and~$\ell=1$.
\end{construction*}

\begin{observation}
 In~$(V,E_{2p-1})$,
 each vertex in~$A$, and each set~$V_i$ is an~$s$-$t$ separator,
 and in~$(V,E_{2p})$ with~$p=\phi(i,v,j)$
 each vertex in~$A$, each set~$V_i$, and the set~$\{x,y\}$ is an~$s$-$t$ separator.
\end{observation}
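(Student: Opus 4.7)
The plan is to read off both separator claims directly from the layered (``chain of parallel sets'') structure that \cref{constr:vmspvcwhard} imposes on each snapshot, and to verify by inspecting the edge classes that no shortcut edges bypass a proposed separator.

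For the odd snapshot $(V,E_{odd})$, I would first note that the edges arrange the vertices into the chain of layers $\{s\},\{a_0\},V_1,\{a_1\},V_2,\ldots,V_k,\{a_k\},\{t\}$, with edges only between consecutive layers: $s$ is adjacent only to $a_0$, $a_k$ only to $t$, and each $V_r$ is sandwiched between $a_{r-1}$ and $a_r$ via the edges $\{a_{r-1},v\},\{a_r,v\}$ for $v\in V_r$, while $x$ and $y$ are isolated. Every $s$-$t$ walk therefore traverses the layers strictly in order, so removing any single anchor $a_r$ cuts the chain, and removing an entire set $V_r$ severs the strip between $a_{r-1}$ and $a_r$. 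Both separator families for the odd snapshot follow.

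For the even snapshot $(V,E_{2\phi(i,v,j)})$, I would draw out the analogous layered structure: starting from $s$ the chain runs through $a_i,V_i,\{x,y\},V_j,a_j$ and then along the sequence of anchors $a_{\pi^{-1}(1)},\ldots,a_{\pi^{-1}(k-2)},a_0$ (with each intermediate parallel set $V_{\pi^{-1}(p)}$ sitting between its two flanking anchors), finishing at $t$. Again all edges go between consecutive layers. Consequently, each $a_r$ is the unique bridge between its two adjacent layers, each $V_r$ occupies a whole strip whose removal severs the chain, and removing $\{x,y\}$ cuts $V_i$ off from $V_j$ since in this snapshot $V_i$ and $V_j$ share no common neighbours other than $x$ and $y$.

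The only real care needed, and the place where I would concentrate the verification, is a small checklist on the even snapshot to confirm that no ``shortcut'' edges bypass a proposed separator: that the neighbourhoods of $x$ and $y$ lie inside $V_i\cup V_j$, that $a_i$ and $a_j$ are not adjacent to any vertex outside their designated layers, and that the single extra edge $\{a_j,a_{\pi^{-1}(1)}\}$ fits cleanly into the anchor chain. Once this inspection is done, the observation follows directly from the layered structure.
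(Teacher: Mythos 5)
Your proposal is correct and takes essentially the same approach as the paper, which states this observation without an explicit proof as something to be read off directly from the layered chain structure of \cref{constr:vmspvcwhard} — exactly the structure you spell out. Your closing checklist for the even snapshot (neighbourhoods of $x$ and $y$ confined to $V_i\cup V_j$, anchors adjacent only to their flanking layers, and the anchor-to-anchor edges $\{a_j,a_{\pi^{-1}(1)}\}$ and $\{a_0,a_{\pi^{-1}(k-2)}\}$ fitting into the chain) is precisely the verification the paper leaves implicit.
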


We know that each~$s$-$t$ path in an even snapshot contains~$s$ and~$t$, 
and~$k+1$ vertices from~$A$,
and one of~$x$ and~$y$,
leaving $k$~vertices.
Since each~$V_i$ forms an~$s$-$t$ separator,
we have the following.

\begin{observation}
 \label{obs:vmspvcwhardExact}
 Every~$s$-$t$ path in~$(V,E_p)$ with at most~$k'$ vertices contains exactly one vertex from each~$V_i$.
\end{observation}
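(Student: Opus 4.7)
My approach would be to handle odd and even snapshots separately, each time combining the previous separator observation with a simple counting argument. The target bound $k' = 2k+4$ is tight, and the plan is to exhibit a set of forced vertices in any $s$-$t$ path whose size already equals or nearly equals $k'$, so that using two vertices from some $V_i$ exceeds the budget.

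For an odd snapshot $(V, E_{2p-1})$ the preceding observation tells me that every $a_q \in A = \{a_0, \ldots, a_k\}$ and every $V_i$ is an $s$-$t$ separator. Hence any $s$-$t$ path contains $\{s,t\} \cup A$, which already amounts to $k+3$ vertices, plus at least one vertex from each $V_i$ (for a further $\geq k$ vertices), yielding a lower bound of $2k+3$. I would then argue that between consecutive anchors $a_{i-1}$ and $a_i$ the path traces a route in the bipartite subgraph on $\{a_{i-1},a_i\} \cup V_i$, and since $a_{i-1}$ cannot be revisited, after leaving $a_{i-1}$ into $V_i$ the next visit to $\{a_{i-1},a_i\}$ must be to $a_i$; thus only one vertex of $V_i$ is used. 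This already gives the claim for odd snapshots (no slack in $k'$ is even needed).

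For an even snapshot $(V, E_{2p})$ with $p = \phi(i,v,j)$ I would directly do the counting. By the preceding observation, each vertex of $A$, each set $V_q$, and the set $\{x,y\}$ is an $s$-$t$ separator. Therefore any $s$-$t$ path must contain $\{s,t\}$, all of $A$, at least one vertex from each of the $k$ sets $V_q$, and at least one of $x,y$. Summing gives $2 + (k+1) + k + 1 = 2k+4 = k'$. Consequently, an $s$-$t$ path on at most $k'$ vertices attains equality in each slot and in particular uses exactly one vertex from each $V_q$.

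The main obstacle I anticipate is the odd-snapshot case: the raw counting leaves a slack of one vertex ($2k+3 < k'$), so counting alone does not rule out two vertices from some $V_i$. The structural bipartite-path argument in the paragraph above closes this gap cleanly. The even case, by contrast, is settled purely by the separator count and is essentially immediate.
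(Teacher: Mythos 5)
Your proof is correct and follows essentially the same route as the paper: for even snapshots the paper also counts $\{s,t\}$, all of $A$, one of $x,y$, and one vertex per separator $V_q$ to hit the budget $k'=2k+4$ exactly. Your additional structural argument for odd snapshots (where counting leaves slack one) fills in a case the paper leaves implicit, and it is sound since every vertex of $V_i$ has only $a_{i-1}$ and $a_i$ as neighbours there.
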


\begin{proof}[Proof of~\cref{prop:vmspvcwhard}]
 Let~$I=(G=(V_1,\dots,V_k,E)))$ be an instance of \prob{Multicolored Clique},
  and let~$I'=(\TG=(V',E_1,\dots,E_\tau),s,t,k,\ell)$ be the instance obtained from~$I$ using~\cref{constr:vmspvcwhard} in polynomial time.
  Note that every edge in~$\bigcup_{p=1}^\tau E_p$ is incident with~$M\ceq A\cup \{x,y\}\cup \{s,t\}$,
  and hence~$M$ is a vertex cover of the underlying graph of size~$|M|=k+5$.
  Denote by~$G_p=(V,E_p)$ the~\ith{$p$} snapshot of~$\TG$ for every~$p\in\set{\tau}$.
  We claim that~$I$ is a \yes-instance if and only if~$I'$ is a \yes-instance.
  
  \RD{}
  Let~$W\subseteq V$ be a multicolored clique.
  Define~$P_{odd}$ as the path in~$G_{odd}=(V,E_{odd})$ with vertex set~$V(P_{odd})=\{s,t\} \cup A \cup W$ and edge set~$E(G_{odd}[V(P_{odd})])$.
  Note that~$P_{odd}$ is an~$s$-$t$ path with~$2k+3$ vertices.
  Set~$P_{2p-1}\ceq P_{odd}$.
  For~$P_{2p}$ with~$p=\phi(i,v,j)$,
  we set~
  $V(P_{2p})=V(P_{odd}) \cup \begin{cases} \{x\},& \text{if $v\in W$} \\\{y\},& \text{otherwise,}\end{cases}$
  and~$E(P_{2p})=E(G_{2p}[V(P_{2p})])$.
  Note that~$P_{2p}$ is an~$s$-$t$ path,
  since if~$v\in W$, then
  the edge~$\{x,w\}$ with~$w\in W\cap V_j$ exists.
  Moreover,
  $|V(P_{2p})|=2k+4$,
  and by construction we have that~$|\symdif{V(P_p)}{V(P_{p+1}})|=1$ for all~$p\in\set{\tau-1}$.
  
  \LD{}
  Let~$(P_1,\ldots,P_\tau)$ be a solution to~$I'$.
  Due to~\cref{obs:vmspvcwhardExact},
  we know that each~$P_i$ contains exactly one vertex from~$V_i$.
  In fact, 
  it holds true that~$V(P_i)\cap V=V(P_j)\cap V$ for all~$i,j\in \set{\tau}$:
  Suppose not,
  that is,
  there is an~$i\in\set{\tau-1}$ such that~$w\in V\cap (\symdif{V(P_i)}{V(P_{i+1})})$.
  In both cases ($w\in V(P_i)\setminus V(P_{i+1})$ or $w\in V(P_{i+1})\setminus V(P_i)$)
  we get a contradiction to~\cref{obs:vmspvcwhardExact}.
  Let~$W\ceq V\cap V(P_1)$.
  We claim that~$W$ is a multicolored clique in~$G$.
  Let~$v\in V_i\cap W$ and~$w\in V_j\cap W$ with~$i,j\in\set{k}$, 
  $i<j$, be arbitrary but fixed.
  Then,
  path~$P_{2\phi(i,v,j)}$ contains the subpath~$(v,x,w)$,
  proving that~$\{v,w\}\in E$.
  It follows that~$W$ is a multicolored clique in~$G$.
\end{proof}
}

\noindent
We will see in the next section that a similar result for~\eimsp{} or \vimsp{} is unlikely.

\subsection{Fixed-parameter tractability for dissimilarity variant \texorpdfstring{regarding~\boldmath$k$}{}}

In stark contrast to \cref{thm:w1hard,cor:vmspw1hardktau}, we show in this section that \vimsp{} and \eimsp{} can be solved in linear time for constant 
path lengths; put differently, they 
are fixed-parameter tractable 
when parameterized by path length~$k-1$. %

\begin{theorem}
 \label{thm:eimspfptk}
 \vimsp{} and \eimsp{} can be solved in $2^{O(k)}\cdot |\TG|$ time. 
\end{theorem}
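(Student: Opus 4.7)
The plan is to process snapshots sequentially and maintain, for each snapshot $i$, a representative family $\hat{\calF}_i$ of edge sets of $s$-$t$ paths of length at most $k-1$ in snapshot $i$ that are ``prefix-extendable'' (i.e., paths $P_i$ for which there exist $P_1,\ldots,P_{i-1}$ in the respective snapshots satisfying all intersection constraints up to time $i$). Since every such edge set has size $\leq k-1$, the Fomin-Lokshtanov-Panolan-Saurabh rep-set framework yields $(k-1)$-representative families of size $\binom{2k-2}{k-1}=2^{O(k)}$, computable in $2^{O(k)}\cdot|E_i|$ time via the standard $k$-path DP. I describe the algorithm for \eimsp{}; \vimsp{} is analogous with vertex sets (minus $\{s,t\}$) replacing edge sets.

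Initialize $\hat{\calF}_1$ as a $(k-1)$-representative family of all $s$-$t$ paths of length $\leq k-1$ in snapshot $1$. For $i>1$, the bounded-intersection constraint $|E(P)\cap E(P')|\leq\ell$ is split into disjointness constraints: for each $P'\in\hat{\calF}_{i-1}$ and each $I\subseteq E(P')$ with $|I|\leq\ell$ (yielding $2^{O(k)}$ pairs), run the standard $k$-path rep-set DP on $(V,E_i\setminus(E(P')\setminus I))$ to obtain a $(k-1)$-representative family $R_{P',I}$ of $s$-$t$ paths in snapshot $i$ disjoint from $E(P')\setminus I$ (hence intersecting $E(P')$ only in a subset of $I$). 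Set $\hat{\calF}_i$ to a $(k-1)$-representative family of $\bigcup_{P',I}R_{P',I}$. Finally, answer \yes{} iff $\hat{\calF}_\tau\neq\emptyset$.

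Correctness rests on the observation that a $(k-1)$-representative family for disjointness also ``covers'' bounded-intersection queries: if $\hat{\calF}$ represents $\calF$, $Y$ has size $\leq k-1$, and some $X^\star\in\calF$ satisfies $|X^\star\cap Y|\leq\ell$, then setting $I\ceq X^\star\cap Y$ gives $X^\star\cap(Y\setminus I)=\emptyset$, so some $X\in\hat{\calF}$ is disjoint from $Y\setminus I$ and hence $|X\cap Y|\leq|I|\leq\ell$. An induction on $i$ then shows $\hat{\calF}_i$ is $(k-1)$-representative over the set $\mathcal{G}_i$ of all prefix-extendable path edge sets in snapshot $i$: every element of $\hat{\calF}_i$ is prefix-extendable by prepending the prefix of the associated $P'\in\hat{\calF}_{i-1}$, and for any $P\in\mathcal{G}_i$ the inductive hypothesis provides a compatible $P'\in\hat{\calF}_{i-1}$ making $P$ a candidate in some $R_{P',I}$ for the appropriate choice $I=E(P)\cap E(P')$. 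Applied to a hypothetical valid solution, this implies $\hat{\calF}_\tau\neq\emptyset$ iff the instance is a \yes-instance.

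The main obstacle I anticipate is the bookkeeping in the inductive step -- verifying that the $(P',I)$ enumeration paired with the rep-family chain correctly captures every prefix-extendable path up to representativeness, rather than only those extendable through the current rep family, and confirming that the conversion between disjointness-representativeness and bounded-intersection-representativeness behaves correctly in both the enumeration of $I$ and the final consolidation of the $R_{P',I}$. Complexity-wise, each snapshot performs $2^{O(k)}$ invocations of the rep-set $k$-path DP at cost $2^{O(k)}\cdot|E_i|$ apiece, plus an outer rep-family reduction of $2^{O(k)}$ sets in $2^{O(k)}$ time, totaling $2^{O(k)}\cdot|E_i|$ per snapshot and $2^{O(k)}\cdot|\TG|$ overall.
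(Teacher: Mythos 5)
Your proposal is correct, but it takes a genuinely different route from the paper. The paper decouples the per-snapshot computation from the temporal structure: for each snapshot it independently computes a $2(k-\ell)$-\emph{robust} family of $s$-$t$ paths (a representative family against deletion of up to $2(k-\ell)$ vertices), proves an exchange lemma showing that any solution can be normalized so that every $P_i$ comes from the $i$-th family (the budget $2(k-\ell)$ is exactly what is needed to simultaneously avoid $V(P_{i-1})\setminus V(P_i)$ and $V(P_{i+1})\setminus V(P_i)$, each truncated to size $k-\ell$), and then solves a reachability problem on a DAG whose nodes are the $2^{O(k)}\cdot\tau$ family paths. You instead thread representativity \emph{through} time with a forward DP over ``prefix-extendable'' paths, and your key device --- enumerating the intersection pattern $I$ so that a bounded-intersection constraint becomes a disjointness constraint, which a $(k-1)$-representative family provably preserves --- correctly resolves the standard pitfall you flag (a path extendable only through a predecessor outside $\hat{\calF}_{i-1}$ is still caught, because querying $\hat{\calF}_{i-1}$ with $E(P)\setminus(E(P)\cap E(P_{i-1}))$ returns \emph{some} compatible, prefix-extendable $P'$). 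The paper's route buys a cleaner modular structure (the robust families are oblivious to $\ell$ and to neighboring snapshots, and the compatibility logic lives entirely in the final DAG), whereas yours avoids the exchange/normalization lemma at the cost of the transitivity-of-representativity bookkeeping and a $2^{O(k)}$-fold blow-up in rep-set invocations per snapshot, which still lands at $2^{O(k)}\cdot|\TG|$ overall. Two technicalities you should make explicit: the FLPS machinery requires uniform set sizes, so paths of length \emph{at most} $k-1$ must be padded (the paper appends dummy vertices) or handled per size class with adjusted $q$; and for \eimsp{} your DP runs over a uniform matroid on the \emph{edge} ground set, whereas the paper sidesteps this by reducing \eimsp{} to \vimsp{} first.
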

We defer the proof of \cref{thm:eimspfptk} towards the end of this section and, moreover, only describe the algorithm for \vimsp{}.
In a nutshell, the algorithm behind \cref{thm:eimspfptk} computes for each snapshot \emph{sufficiently many}
$s$-$t$ paths such that no matter which vertices are used in 
the snapshots beforehand and afterwards, 
one of these $s$-$t$ paths has a small intersection with these vertices.
To this end, we introduce $q$-robust sets\footnote{In a nutshell, 
$q$-robust sets are $q$-representative families \cite{monien1985find},
just explicitly coined to $s$-$t$ paths of length at most~$k$.
This notion shall avoid confusion with the later defined $q$-representatives of independent sets.}
of $s$-$t$ paths.
\begin{definition}
	Let $G=(V,E)$ be a graph, $s,t \in V$ two distinct vertices, 
	$\mathcal F$ be a set of $s$-$t$ paths of length at most~$k-1$,
	and $q \in N_0$.
	We call $\mathcal F$ \emph{$q$-robust} if for each 
	set $X \subseteq (V(G) \setminus \{s,t\})$ 
	of size at most $q$ the following holds:
	if there is an $s$-$t$ path in $G - X$ of length at most $k-1$,
	then there is an $s$-$t$ path $P \in \mathcal F$ which is an $s$-$t$ path in $G - X$.
\end{definition}
To find a solution,
it is sufficient to have a $2(k-\ell)$-robust set of $s$-$t$ paths of length at most~$k-1$
for each snapshot of the temporal graph:

\ifarxiv{}
\begin{lemma}
		\else{}
\begin{lemma}[\appref{lem:q-robust-correct}]
\fi{}
	\label{lem:q-robust-correct}
	Let $I=(\TGcompact,s,t,k,\ell)$ be an instance of \vimsp{} and
	$\mathcal F_i$ be a $2(k-\ell)$-robust set of $s$-$t$ paths of length at most~$k-1$ in~$G_i=(V,E_i)$, for all $i \in \set{\tau}$.
	Then, $I$ is a \yes-instance 
	if and only if 
	there is a solution $(P_1,\dots,P_\tau)$ such that $P_i \in \mathcal F_i$, 
	for all $i \in \set{\tau}$.
\end{lemma}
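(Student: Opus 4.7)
I plan to prove the backward direction trivially, noting that any sequence $(P_1,\ldots,P_\tau)$ with $P_i \in \mathcal F_i$ is by definition a feasible solution to $I$. For the forward direction, I will proceed by induction on $i$, converting an arbitrary solution $(P_1, \ldots, P_\tau)$ into one using only paths from the robust sets: the invariant to maintain is that, after $i$ replacement steps, the sequence $(P_1', \ldots, P_i', P_{i+1}, \ldots, P_\tau)$ is a valid solution with $P_j' \in \mathcal F_j$ for all $j \leq i$.

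For the inductive step, I will work with the current sequence $(P_1', \ldots, P_{i-1}', P_i, \ldots, P_\tau)$, setting $A \ceq V(P_{i-1}') \setminus \{s,t\}$ and $B \ceq V(P_{i+1}) \setminus \{s,t\}$ (taking $A = \emptyset$ for $i=1$ and $B = \emptyset$ for $i = \tau$). Letting $A_0 \ceq A \cap V(P_i)$ and $B_0 \ceq B \cap V(P_i)$, the inductive hypothesis yields $|A_0|, |B_0| \leq \ell$. I will then pick arbitrary $A_1 \subseteq A \setminus A_0$ with $|A_1| = \min(|A \setminus A_0|,\, \ell - |A_0|)$, and analogously $B_1 \subseteq B \setminus B_0$, and define
\[
  X \ceq (A \setminus (A_0 \cup A_1)) \cup (B \setminus (B_0 \cup B_1)).
\]
By construction $X \cap V(P_i) = \emptyset$, so $P_i$ witnesses that $G_i - X$ contains an $s$-$t$ path of length at most $k-1$. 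A short calculation will show $|A \setminus (A_0 \cup A_1)| = \max(0, |A| - \ell) \leq \max(0, k-2-\ell)$, and analogously for $B$, yielding $|X| \leq 2(k - \ell)$. The $2(k-\ell)$-robustness of $\mathcal F_i$ then provides some $P_i' \in \mathcal F_i$ which is an $s$-$t$ path of length at most $k-1$ in $G_i - X$; because $V(P_i') \cap A \subseteq A \setminus X = A_0 \cup A_1$ has at most $|A_0| + |A_1| \leq \ell$ vertices (and analogously for $B$), the updated sequence is again a valid solution.

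The main difficulty will be arranging $|X| \leq 2(k-\ell)$, rather than the naive $2(k-2)$. Taking simply $X \ceq (A \cup B) \setminus V(P_i)$ enforces the dissimilarity constraints but exceeds $2(k-\ell)$ whenever $\ell > 2$. The slack sets $A_1, B_1$ resolve this: by deliberately leaving up to $\ell$ additional vertices of $A$ (respectively $B$) unblocked, every $s$-$t$ path in $G_i - X$ still meets $A$ in at most $|A_0| + |A_1| \leq \ell$ non-terminal vertices, while $|X|$ drops by exactly the required amount.
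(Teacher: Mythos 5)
Your proof is correct and follows essentially the same strategy as the paper's: both arguments replace the first path not drawn from its robust family, form a forbidden set from the neighbouring paths' vertices that avoid $P_i$, trim it so that its size is within the $2(k-\ell)$ budget while the untrimmed remainder of each neighbour has at most $\ell$ vertices, and then invoke robustness to find a compatible replacement in $\mathcal F_i$. The only differences are presentational (explicit induction versus a maximal-index extremal argument, and your slightly more careful bookkeeping of the $\{s,t\}$ exclusion and of how many vertices to leave unblocked), neither of which changes the substance.
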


\appendixproof{lem:q-robust-correct}
{
\begin{proof}
	Since the converse is trivially true, we only show that
	if $I$ is a \yes-instance, then
	there is a solution~$(P_1,\dots,P_\tau)$ for~$I$ such that for all~$i \in \set{\tau}$ 
	we have $P_i \in \mathcal F_i$.

	For all $p \in \set{\tau+1}$, let $\mathcal S_p$ be the set of solutions for $I$ 
	such that for all $j < p$ we have~$P_j \in \mathcal F_j$.
	Let $i \ceq  \max \{p \in \set{\tau+1} \mid \mathcal S_p \not= \emptyset \}$.
	If $i=\tau+1$, then we are done. 
	Hence, assume towards a contradiction that $i \leq \tau$.

	(Case 1): Suppose $1<i<\tau$.
	Let $X_1=V(P_{i-1}) \setminus V(P_i)$ and $X_2=V(P_{i+1}) \setminus V(P_i)$.
	If $X \in \{X_1,X_2\}$ is larger than $k-\ell$, then remove arbitrary vertices from $X$ such that $|X| = k-\ell$.
	Note that $|V(P_{i-1}) \setminus X_1| \leq \ell$ and $|V(P_{i+1}) \setminus X_2| \leq \ell$.
	Observe that $P_i$ is an $s$-$t$ path of length at most $k-1$ in $G_i-(X_1 \cup X_2)$.
	Since $\mathcal F_i$ is $2(k-\ell)$-robust, there is an $s$-$t$ path $P \in \mathcal F_i$ of length at most $k-1$
	in $G_i - (X_1 \cup X_2)$, see \cref{fig:q-robust} for an illustration.
	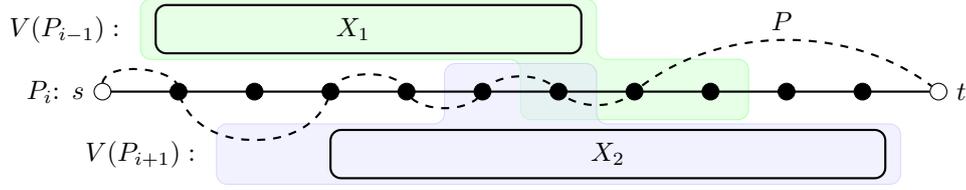
\begin{figure}
		\centering
		\begin{tikzpicture}[yscale=0.8]
    \tikzstyle{xnode}=[circle, draw,scale=2/3]
    \tikzstyle{xedge}=[thick]
	\begin{scope}[yshift=1]
	\filldraw[fill=green!10!white,draw=green!30,rounded corners] 
			(1,0.5) -- (1,1.5) -- (7,1.5) -- (7,0.5) -- (9,0.5) -- (9,-0.5) -- (6,-0.5) -- (6,0.5) -- cycle;
	\draw[thick,rounded corners] 
			(1.2,0.6) --  (6.8,0.6) -- 
			(6.8,1.4)--  (1.2,1.4) --  cycle;
	\node at (1+2.8,1) {$X_1$};
	\node at (0,1) {$V(P_{i-1}):$};
	\end{scope}
	\begin{scope}[yshift=-1]
	\filldraw[fill=blue!10!white,draw=blue!30,rounded corners,opacity=0.5] 
			(2,-0.5) -- (5,-0.5) -- (5,0.5) -- (7,0.5) -- (7,-0.5) -- (11,-0.5) -- 
			(11,-1.5)--  (2,-1.5) --  cycle;
	\draw[thick,rounded corners] 
			(3.5,-0.6) --  (10.8,-0.6) -- 
			(10.8,-1.4)--  (3.5,-1.4) --  cycle;
	\node at (3.5+3.65,-1) {$X_2$};
	\node at (1,-1) {$V(P_{i+1}):$};
	\end{scope}
    \node (s) at (0.5,0)[xnode,label=left:{$P_i$:  $s$}]{};
    \node (z) at (11.5,0)[xnode,label=right:{$t$}]{};
	\node at (1.5,0)[xnode,fill]{};
	\node at (2.5,0)[xnode,fill]{};
	\node at (3.5,0)[xnode,fill]{};
	\node at (4.5,0)[xnode,fill]{};
	\node at (5.5,0)[xnode,fill]{};
	\node at (6.5,0)[xnode,fill]{};
	\node at (7.5,0)[xnode,fill]{};
	\node at (8.5,0)[xnode,fill]{};
	\node at (9.5,0)[xnode,fill]{};
	\node at (10.5,0)[xnode,fill]{};
	\draw[xedge] (s) to (z);
	\draw[xedge,dashed] (s) 
			to[bend left=90] (1.5,0) 
			to[bend right=40] (2.5,-0.8) 
			to[bend right=40] (3.5,0) 
			to[bend left=80] (4.5,0) 
			to[bend right=70] (5.5,0) 
			to[bend left=60] (6.5,0) 
			to[bend right=50] (7.5,0) 
			to[bend left=45] node[above] {$P$} (z) ;
	\end{tikzpicture}	
	\caption{Illustration of case 1 in the proof of \cref{lem:q-robust-correct}, where $|V(P_{i+1}) \setminus V(P_i)| > k-\ell$.}
	\label{fig:q-robust}
    \end{figure}   
	Hence, $|V(P) \cap V(P_{i-1})| \leq |V(P) \cap (V(P_{i-1})\setminus X_1)| \leq \ell$ and
	$|V(P) \cap V(P_{i+1})| \leq |V(P) \cap (V(P_{i+1})\setminus X_2)| \leq \ell$.
	Thus, $S = (P_1,\dots,P_{i-1},P,P_{i+1},\dots,P_\tau)$ is a solution for $I$.
	This contradicts $i$ being maximal.

	(Case 2): If $i=1$ ($i=\tau$), then we set $X_1=\emptyset$ ($X_2=\emptyset)$ and conclude analogously to Case 1 
	that $i$ is not maximized.
\end{proof}
}
The main tool of our algorithm is a fast (``linear-time FPT'') computation of small sets of $s$-$t$ paths of length at most~$k-1$ which are $q$-robust.
We believe that such a use of representative families may become a general algorithmic tool being potentially helpful for other multistage problems.
Formally, we show the following.
\ifarxiv{}
\begin{lemma}
		\else{}
\begin{lemma}[\appref{lem:q-robust}]
\fi{}
	\label{lem:q-robust}
	Let $G=(V,E)$ be a graph with two distinct vertices $s,t \in V$, 
	and $k,q \in \mathbb N_0$.
	We can compute, in $2^{O(k+q)} \cdot|E|$ time, 
	a $q$-robust set $\mathcal F$ of $s$-$t$ paths of length at most~$k-1$ 
	such that~$|\mathcal F| \le 2^{q+k}$.
\end{lemma}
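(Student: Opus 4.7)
The plan is to compute the family~$\mathcal F$ by a level-by-level dynamic program over path length, combined with the representative families technique (à la Monien~\cite{monien1985find}) to keep the intermediate families small.

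Call a family $\mathcal F_{v,i}$ of $s$-$v$ paths of length exactly~$i$ in~$G$ \emph{$q'$-representative} if, for every $Y \subseteq V \setminus \{s,v\}$ with $|Y| \le q'$, whenever $G$ admits an $s$-$v$ path $Q$ of length $i$ with $V(Q) \cap Y = \emptyset$, some such path lies in $\mathcal F_{v,i}$. Since the internal vertex sets of $s$-$v$ paths of length $i$ are $(i-1)$-sets, a Bollobás-type bound yields that every such family has a $q'$-representative subfamily of size at most $\binom{(i-1)+q'}{q'}$. Moreover, via Fomin, Lokshtanov, Panolan, and Saurabh~\cite{FLPS16} applied to the uniform matroid, such a subfamily can be computed in $2^{O(i+q')} \cdot N$ time from any $N$-element input family.

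I would maintain the invariant that each $\mathcal F_{v,i}$ is $(q+k-1-i)$-representative. Initialize $\mathcal F_{s,0} \ceq \{(s)\}$ and $\mathcal F_{v,0} \ceq \emptyset$ for $v \ne s$. For $i = 1, \ldots, k-1$ and each $v \in V$, form the auxiliary family by appending $v$ to every $P \in \mathcal F_{u,i-1}$ with $\{u,v\} \in E$ and $v \notin V(P)$, then reduce it to a $(q+k-1-i)$-representative subfamily. Finally, set $\mathcal F \ceq \bigcup_{i=0}^{k-1} \mathcal F_{t,i}$. Correctness hinges on the inductive step: given $Y$ with $|Y| \le q+k-1-i$ and a path $(s,\ldots,u,v)$ of length $i$ avoiding $Y$, its length-$(i-1)$ prefix avoids $Y \cup \{v\}$, a set of size at most $q+k-i$; since $\mathcal F_{u,i-1}$ is $(q+k-i)$-representative by induction, some prefix in it also avoids $Y \cup \{v\}$ and is extended (inside the auxiliary family) into a certificate at $(v,i)$; this certificate survives the reduction because the reduced family is again $(q+k-1-i)$-representative. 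Applied to $v=t$ and each $i\in\{0,\ldots,k-1\}$ separately, the resulting $q$-representativeness of every $\mathcal F_{t,i}$ yields the required $q$-robustness of $\mathcal F$.

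For the bounds, at level $i$ we have $|\mathcal F_{v,i}| \le \binom{q+k-2}{i-1}$, so $|\mathcal F| \le \sum_{i=0}^{k-1}\binom{q+k-2}{i-1} \le 2^{q+k}$. The extension work at level $i$ sums (over all endpoints $v$) to at most $\bigl(\max_u |\mathcal F_{u,i-1}|\bigr)\cdot\sum_u \deg(u) \le 2^{q+k}\cdot 2|E|$, and the representative-family reduction adds a further $2^{O(q+k)}$ factor per processed path. Summing over the $k$ levels gives the claimed $2^{O(k+q)}\cdot|E|$ time. The principal obstacle is book-keeping the representativeness parameter correctly through the induction — one must start with strong $(q+k-1)$-representativeness at small depth and afford to lose one parameter per extension step — and bounding the auxiliary family sizes tightly enough that the reduction algorithm stays within budget; this is why the invariant chosen above is $(q+k-1-i)$-representativeness rather than plain $q$-representativeness throughout.
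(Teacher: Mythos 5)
Your proposal is correct and follows essentially the same route as the paper: a level-by-level dynamic program that maintains representative families of partial $s$-$v$ paths with a representativeness parameter decreasing by one per extension step, reduced at each level via the uniform-matroid representative-set computation of Fomin et al.~\cite{FLPS16}. The only (immaterial) difference is that you handle ``length at most $k-1$'' by taking the union $\bigcup_{i=0}^{k-1}\mathcal F_{t,i}$ at the end, whereas the paper pads the graph with $k$ dummy vertices so that a single family of exact-length paths at $t$ suffices.
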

In order to prove \cref{lem:q-robust}, 
we extend the ``representative-family-based'' algorithm 
for \textsc{$k$-Path} 
of Fomin~\etal~\cite{FLPS16} 
such that 
we can find $s$-$t$ paths 
avoiding a size-at-most-$q$ set of vertices\ifarxiv{}.\else{}, 
see Appendix~\ref{proof:lem:q-robust}.\fi{}%
\appendixproof{lem:q-robust}
{
The proof of \cref{lem:q-robust} is deferred to the end of this section.

We use standard terminology from matroid theory~\cite{Oxl92}.
A pair $(U,\mathcal I)$, where $U$~is the \emph{ground set}
and $\mathcal I\subseteq 2^U$ is a family of \emph{independent sets},
is a \emph{matroid} if the following holds:
\begin{itemize}
	\item  $\emptyset \in \mathcal I$;
 	\item  if $A' \subseteq A$ and $A \in \mathcal I$, then $A' \in \mathcal I$;
	\item  if $A,B \in \mathcal I$ and $|A| < |B|$, then there is an $x \in B \setminus A$ such that $A \cup \{x\} \in \mathcal I$.
\end{itemize}
An inclusion-wise maximal independent set~$A\in \mathcal I$
of a matroid~$M=(U,\mathcal I)$ is a \emph{basis}.
The cardinality of the bases of~$M$
is called the \emph{rank} of~$M$.
The \emph{uniform matroid of rank~$r$} on $U$
is the matroid~$(U,\mathcal I)$
with $\mathcal I=\{S\subseteq U\mid |S|\leq r\}$.
A matroid~$(U,I)$ is \emph{linear} or \emph{representable over a field $\mathbb F$}
if there is a matrix~$A$ with entries in $\mathbb F$ and the columns labeled by the elements of~$U$
such that $S \in \mathcal I$ if and only if the columns of~$A$ 
with labels in~$S$ are linearly independent over~$\mathbb F$. The matrix $A$ is called a \textit{representation}
of $(U,I)$.

\begin{definition}[$q$-representative family of independent sets]\label[definition]{def:qrep}
  Given a matroid $(U,\mathcal I)$,
  a~family~$\mathcal S \subseteq \mathcal I$ of independent sets,
  we say that a subfamily~$\widehat{\mathcal S} \subseteq \mathcal S$
  is a \emph{$q$-representative
  of~$\mathcal S$} if
  for each set~$Y \subseteq U$ of size at most~$q$
  it holds that
  if there is a set~$X \in \mathcal S$  with
  $X \uplus Y \in \mathcal I$,  then there is a
  set~$\widehat X \in \widehat{\mathcal S}$ such that  $\widehat X \uplus Y \in \mathcal I$.
\end{definition}
We are only interested in uniform matroids, hence, to simplify matters we reformulate the definition of representative families.
\begin{definition}[$q$-representative family]
	\label{def:repsets}
	Let $\mathcal S = \{S_1,\dots,S_t\}$ be a family of sets of size~$p$ over a universe~$U$.
	A subfamily $\widehat{\mathcal S} \subseteq \mathcal S$
	is a $q$-representative of $\mathcal S$ 
	if for every set $Y \subseteq U$ of size at most~$q$ it holds that 
	if there is a set~$X \in \mathcal S$ disjoint from~$Y$,
	then there is a set $\widehat{X} \in \widehat{\mathcal S}$ disjoint from $Y$.
\end{definition}

For linear matroids, there are fixed-parameter algorithms parameterized by rank that compute
small representatives for large families of independent sets.
\begin{lemma}[Fomin~\etal~{\cite[Theorem 1.1]{FLPS16}}]
		\label{lem:matroid-eff-rep-set}
	Let $M = (U,\mathcal I)$ be a linear matroid of rank $p+q$ 
	given together with its representation matrix $A_M$ over a field $\mathbb F$.
	Let $\mathcal S = \{S_1,\dots,S_t\}$ be a family of independents sets of $M$ of size $p$. 
	For a given $q$, a $q$-representative family $\widehat{\mathcal S} \subseteq \mathcal S$
	of size ${p+q} \choose p$ can be computed in $O\left( {{p+q} \choose p} tp^\omega + t{{p+q} \choose p}^{\omega-1}\right)$ time.
	Here, $\omega < 2.373$ is the matrix multiplication exponent.
\end{lemma}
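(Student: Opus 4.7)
The plan is to prove this theorem by the classical Pl\"{u}cker / exterior-algebra reduction of representative families to the spanning of a $\binom{p+q}{p}$-dimensional vector space, followed by an efficient incremental Gaussian elimination whose per-set cost is lowered using fast matrix multiplication. Given the representation $A_M \in \mathbb{F}^{(p+q) \times |U|}$, an independent set $S \subseteq U$ with $|S|=p$ picks out a $(p+q)\times p$ submatrix of full column rank.

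To each such $S$ I associate the Pl\"{u}cker vector $v_S \in \mathbb{F}^{\binom{p+q}{p}}$ whose entry indexed by a row-subset $I \in \binom{[p+q]}{p}$ is $\det(A_M[I,S])$. Dually, each size-$q$ set $Y \subseteq U$ gives a vector $w_Y \in \mathbb{F}^{\binom{p+q}{p}}$ (via the canonical bijection $I \mapsto [p+q]\setminus I$) whose entry at $I$ is $\pm \det(A_M[[p+q]\setminus I,\, Y])$. The Laplace/Cauchy--Binet expansion then yields
\[
\langle v_S,\, w_Y\rangle \;=\; \pm \det\bigl(A_M[\,[p+q]\,\vert\,S\cup Y\,]\bigr),
\]
so $S \uplus Y \in \mathcal{I}$ if and only if $\langle v_S, w_Y\rangle \neq 0$. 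Consequently, choosing $\widehat{\mathcal{S}}\subseteq\mathcal{S}$ so that $\operatorname{span}\{v_{\widehat S}:\widehat S\in\widehat{\mathcal{S}}\} = \operatorname{span}\{v_S:S\in\mathcal{S}\}$ automatically makes $\widehat{\mathcal{S}}$ a $q$-representative family (any $w_Y$ that all $v_{\widehat S}$ annihilate is annihilated by every $v_S$ as well). Since the ambient space has dimension $\binom{p+q}{p}$, any basis of the span is a witness of the claimed cardinality bound.

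Algorithmically, I would iterate over $i=1,\ldots,t$ maintaining an echelonised basis matrix $B$ of the span of the currently selected Pl\"{u}cker vectors. For each $S_i$, first compute $v_{S_i}$ by evaluating all $\binom{p+q}{p}$ size-$p$ minors of the $(p+q)\times p$ submatrix indexed by $S_i$; a single minor costs $O(p^\omega)$ via an $LU$ factorisation, so $v_{S_i}$ costs $O(\binom{p+q}{p}\,p^\omega)$, and all $t$ vectors together account for the first term of the stated running time. Then test whether $v_{S_i}$ lies in the row span of $B$; if not, reduce $v_{S_i}$ modulo $B$, append it, and add $S_i$ to $\widehat{\mathcal{S}}$.

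The main technical obstacle is shaving the membership tests down to $O(\binom{p+q}{p}^{\omega-1})$ amortised cost per set, since a naive per-set Gaussian elimination would cost $\Theta(\binom{p+q}{p}^2)$. The crucial FLPS trick is to \emph{batch} the candidates: accumulate up to $d \ceq \binom{p+q}{p}$ freshly computed $v_{S_i}$-vectors into a $d\times d$ block matrix $C$, then perform a single joint reduction of $C$ against $B$ via fast (possibly rectangular) matrix multiplication at cost $O(d^\omega)$ per block, hence $O(d^{\omega-1})$ amortised per processed set; the echelonisation of $B$ is updated after each block. Secondary points to verify are that $\mathbb{F}$ may have to be extended to a sufficiently large polynomial extension so that no intermediate determinant vanishes spuriously, and that the signs in the Pl\"{u}cker/Cauchy--Binet correspondence are fixed consistently so that a vanishing inner product really reflects linear dependence of the columns of $A_M$ rather than a cancellation artefact.
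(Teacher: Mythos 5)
This lemma is not proved in the paper at all: it is imported verbatim as Theorem~1.1 of Fomin et~al.~\cite{FLPS16}, so there is no in-paper argument to compare against. Your reconstruction is, in substance, the actual proof from that reference (which goes back to the Lov\'asz--Marx exterior-algebra method): associate to each $S$ the vector of its $p\times p$ minors, use the generalized Laplace expansion to get $\langle v_S, w_Y\rangle = \pm\det(A_M[\cdot, S\cup Y])$, observe that a spanning subset of $\{v_S\}$ in the $\binom{p+q}{p}$-dimensional ambient space is a $q$-representative family, and extract a row basis with fast matrix multiplication to obtain the $t\binom{p+q}{p}^{\omega-1}$ term. Two small remarks. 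First, the definition of $q$-representative quantifies over sets $Y$ of size \emph{at most} $q$, whereas the inner-product identity only directly certifies the case $|Y|=q$; since $M$ has rank exactly $p+q$, one closes this by extending $S\uplus Y$ to a basis $B$ and applying the argument to $Y'=B\setminus S$, noting that $\widehat S\uplus Y\subseteq \widehat S\uplus Y'$ is then independent by heredity. Second, your worry about extending $\mathbb F$ to avoid ``spurious vanishing'' is unnecessary here: the computation is deterministic and exact over the given field, and all cancellations in $\langle v_S,w_Y\rangle$ are genuine (in particular, $S\cap Y\neq\emptyset$ forces a repeated column and hence a true zero, which is exactly what the definition requires). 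Neither point affects correctness.
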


\begin{lemma}
		\label{lem:fast-rep}
		Given a set $U$ and an integer~$r$,
		we can compute in $O(r\cdot|U|)$ time
		a representation~$A$ %
		of the uniform matroid of rank $r$ on $U$, where $p \in O(|U|)$
		and $A$ is over a prime field~$\mathbb F_p$.
\end{lemma}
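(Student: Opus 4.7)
The plan is to construct a Vandermonde-style matrix over a prime field whose size is linear in $|U|$. By Bertrand's postulate, there exists a prime $p$ with $|U| \le p \le 2|U|$, so $p \in O(|U|)$. Such a prime can be located in $O(|U|)$ time by running a linear (Euler) sieve up to~$2|U|$ and picking the first prime at least~$|U|$. Identify the $|U|$ elements of~$U$ with the distinct residues $0,1,\ldots,|U|-1\in \mathbb{F}_p$ (this is possible because $p\ge |U|$), and define the $r\times |U|$ matrix $A$ over $\mathbb{F}_p$ by
\[
A[\ell, j] \;\ceq\; j^{\ell-1} \bmod p \qquad (\ell \in \set{r},\ j\in\{0,1,\ldots,|U|-1\}).
\]
Fill in $A$ row by row using the recurrence $A[\ell,j] = A[\ell-1,j]\cdot j \bmod p$; this needs a single modular multiplication per entry, hence $O(r\cdot|U|)$ time overall.

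It remains to verify that $A$ represents the uniform matroid of rank~$r$ on~$U$, i.e.\ that a column set $S\subseteq U$ is linearly independent over $\mathbb{F}_p$ exactly when $|S|\le r$. For any $r$ distinct column indices $j_1<\cdots<j_r$, the $r\times r$ submatrix of~$A$ formed by those columns is a Vandermonde matrix, whose determinant equals
\[
\prod_{1\le a<b\le r}(j_b-j_a) \bmod p.
\]
Every factor is a nonzero element of $\mathbb{F}_p$ since $|j_b-j_a|<|U|\le p$, so the determinant is nonzero and these $r$ columns are linearly independent; any smaller subset of columns is contained in some independent $r$-subset and is therefore independent as well. Conversely, any $r{+}1$ columns are linearly dependent simply because $A$ has only $r$~rows.

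The only non-routine part of the argument is meeting the $O(r\cdot|U|)$ time budget, which constrains the prime-finding step; using the linear sieve (rather than naive trial division) keeps this within $O(|U|)$, and the remaining cost is dominated by writing down the $r|U|$ entries of $A$ via the one-multiplication-per-entry recurrence above.
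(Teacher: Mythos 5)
Your proposal is correct and follows essentially the same route as the paper: a Vandermonde matrix over $\mathbb{F}_p$ for a Bertrand--Chebyshev prime $p\in[|U|,2|U|]$, filled in with one modular multiplication per entry in $O(r\cdot|U|)$ total time. The only cosmetic differences are that the paper locates the prime via the Lagarias--Odlyzko method in $O(|U|^{1/2+o(1)})$ time rather than a sieve (both fit the budget) and cites the Vandermonde-determinant fact instead of spelling it out as you do.
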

\begin{proof}
		A Vandermonde matrix of size $(p+q) \times |U|$ in a field with at least $|U|$ distinct elements 
		suffices as representation of the uniform matroid of rank $r$ on $U$ \cite[Section 3.4]{Mar09}.

		Let $p \in \set[|U|]{2|U|}$ be a prime number.
		Such a prime exists by the folklore Bertrand-Chebyshev theorem and 
		can be computed in $O(|U|^{1/2 + o(1)}) \leq O(|U|)$ time
		using the Lagarias-Odlyzko method \cite{TCH12}.
		Observe that we can perform a primitive operation in the prime field~$\mathbb F_p$ 
		by first performing the operation in $\mathbb Z$ and then taking the result modulo~$p$.
		Since we only need~$O(\log |U|)$ many bits to store
		one element of $\mathbb F_p$, 
		each element of $\mathbb F_p$ fits into one memory cell 
		of the Word RAM computation model.
		Hence, we can perform a primitive operation over $\mathbb F_p$ in constant time.

		Finally, we can compute the 
		Vandermonde matrix of size $r \times |U|$ in $O(r\cdot |U|)$ time, 
		because each entry is either $1$ 
		or an elementary element of $\mathbb F_p$
		or can be compute by one multiplication
		from another entry calculated earlier.
\end{proof}

In a nutshell, we extend the representative family based algorithm of 
Fomin~\etal~\cite{FLPS16} for \textsc{$k$-Path} such 
that we find $s$-$t$ paths 
which can avoid a set of vertices of size at most~$q$.
\begin{algorithm*}
	\label{repdp}
	Let $G=(V,E)$ be a graph with two distinct vertices $s,t \in V$, 
	and $k,q \in \mathbb N_0$.
	Define $\mathcal N^i_v$ to be a $(q+k-i)$-representative of 
	the family of all sets $A\subseteq V$ such that 
	there is an $s$-$v$ path~$P$ in~$G$ of length~$i-1$ with $V(P) = A$.

	Our goal is to compute $\mathcal N^{k}_t$, as we will construct the desired $q$-robust set of $s$-$t$ paths from it later on.
	We start by setting $\mathcal N_s^1 \ceq  \{ s \}$ and $\mathcal N_v^1 \ceq  \emptyset$ for all $v \in V' \setminus \{s\}$.
	Then, we compute for all $i \in \set[2]{k}$ (in ascending order)  
	\begin{align}
		\mathcal T^i_v \ceq  \bigcup_{\{v,w\} \in E'} \bigcup_{X \in \mathcal N_w^{i-1}: v\not \in X} (X \cup \{v\}).
	\end{align}
	Then (using \cref{lem:matroid-eff-rep-set}) we compute a $(q+k-i)$-representative $\mathcal N^i_v$ of $\mathcal T^i_v$.
\end{algorithm*}

\begin{lemma}
	\label{lem:repdp-correct}
	For all $i \in \set{k}$, the family $\mathcal N^i_v$ (from \cref{repdp}) is of size at most ${q+k-i} \choose i$ and a $(q+k-i)$-representative of 
the family of all sets $A\subseteq V$ such that 
there is an $s$-$v$ path $P$ in $G$ of length $i-1$ with $V(P) = A$.
\end{lemma}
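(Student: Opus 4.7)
The plan is to prove the statement by induction on $i$, using the construction of $\mathcal N^i_v$ in \cref{repdp} and the representative-family machinery of \cref{lem:matroid-eff-rep-set} applied to the uniform matroid of rank $q+k$ on $V$ (whose representation is supplied by \cref{lem:fast-rep}). For the base case $i=1$, we have $\mathcal N^1_s=\{s\}$ and $\mathcal N^1_v=\emptyset$ for all other $v$, which trivially represents the single length-$0$ path $(s)$ and verifies the claimed size bound. For the inductive step, assume that $\mathcal N^{i-1}_w$ is a $(q+k-(i-1))$-representative of the family $\mathcal A^{i-1}_w$ of vertex sets of $s$-$w$ paths of length $i-2$ for every $w\in V$.

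The key step is to show that the family $\mathcal T^i_v$ built by the algorithm is a $(q+k-i)$-representative of $\mathcal A^i_v$. So let $Y\subseteq V$ with $|Y|\le q+k-i$, and suppose some $X\in \mathcal A^i_v$ is disjoint from $Y$, witnessed by an $s$-$v$ path $P=(s,v_1,\ldots,v_{i-2},v)$ in $G$. Set $w\ceq v_{i-2}$ (or $w\ceq s$ if $i=2$) and consider the $s$-$w$ subpath $P'$ of length $i-2$; its vertex set $X'\ceq V(P')=X\setminus\{v\}$ lies in $\mathcal A^{i-1}_w$ and is disjoint from $Y'\ceq Y\cup\{v\}$, which has size at most $q+k-i+1=q+k-(i-1)$. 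By the inductive hypothesis there is $\widehat X'\in \mathcal N^{i-1}_w$ with $\widehat X'\cap Y'=\emptyset$; in particular $v\notin \widehat X'$, so the set $\widehat X'\cup\{v\}$ is added to $\mathcal T^i_v$ in the union defining it, and it is disjoint from $Y$.

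Having established that $\mathcal T^i_v$ is a $(q+k-i)$-representative of $\mathcal A^i_v$, we invoke \cref{lem:matroid-eff-rep-set} on the uniform matroid of rank $q+k$ on $V$ with $p=i$ and representativity $q+k-i$: it returns a $(q+k-i)$-representative $\mathcal N^i_v\subseteq \mathcal T^i_v$ of size at most $\binom{q+k}{i}$, which also bounds the size claimed in the statement. A standard transitivity argument (a $(q+k-i)$-representative of a $(q+k-i)$-representative is itself a $(q+k-i)$-representative of the original family) then implies that $\mathcal N^i_v$ is a $(q+k-i)$-representative of $\mathcal A^i_v$, completing the inductive step.

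The only subtle point, and the place I would be most careful, is the decomposition in the inductive step when $v\in Y$ is impossible (which is automatic since $V(P)$ is disjoint from $Y$) and the bookkeeping $|Y\cup\{v\}|\le q+k-(i-1)$, so that the inductive hypothesis applies with exactly the right representativity parameter; everything else is a routine combination of \cref{lem:matroid-eff-rep-set} with the transitivity of representative families.
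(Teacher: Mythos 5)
Your proof is correct and follows essentially the same route as the paper's: induction on $i$, splitting the $s$-$v$ path at the last vertex $w$ before $v$, applying the inductive hypothesis with the enlarged set $Y\cup\{v\}$ of size at most $q+k-(i-1)$ to land in $\mathcal T^i_v$, and then passing to $\mathcal N^i_v$ via \cref{lem:matroid-eff-rep-set} (your explicit appeal to transitivity of representatives is exactly the step the paper performs inline). One minor caveat: \cref{lem:matroid-eff-rep-set} with $p=i$ and representativity $q+k-i$ yields the bound $\binom{q+k}{i}$, which is what both you and the paper actually obtain; this does \emph{not} imply the smaller bound $\binom{q+k-i}{i}$ written in the lemma statement (an apparent typo in the paper), but only the crude bound $2^{q+k}$ is used downstream, so nothing breaks.
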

\begin{proof}
	We will prove this claim by induction.  
	Observe that $\mathcal N_v^1$ is correctly computed for all~$v \in V$. %
	Now assume that for all $j < i \leq k$ 
	the family $\mathcal N^j_v$ is of size at most ${q+k-j} \choose j$ and $\mathcal N^j_v$ is a $(q+k-j)$-representative of 
	the family of all sets $A\subseteq V$ such that 
	there is an $s$-$v$ path $P$ in $G$ of length $j-1$ with $V(P) = A$.

	Let $Y \subseteq V'$ be a set of size at most $(q+k-i)$ and $v \in V$.
	Assume there is an $s$-$v$ path~$P$ of length~$i-1$ such that $Y \cap V(P) = \emptyset$.
	Let $w\in V(P)$ be the vertex which is visited by~$P$ directly before $v$ (starting from $s$).
	Let $P'$ be the $s$-$w$ path of length $i-2$ induced by $P$ without $v$.
	Since $(Y \cup \{ v \}) \cap V(P') = \emptyset$ and $Y \cup \{ v \}$ is a set of size $q+k-(i-1)$,
	we know, by induction hypothesis,  that there is an $A \in \mathcal N^{i-1}_w$ 
	and an $s$-$w$ path $P''$ of length $i-2$ with $V(P'') = A$ and $(Y \cup \{ v \}) \cap V(P'') = \emptyset$.
	Hence, by \cref{repdp}, $V(P'') \cup \{v\} \in \mathcal T^i_v$.
	Since $Y \cap V(P'')=\emptyset$ and $\mathcal N^i_v$ is an $(q+k-i)$-representative of $\mathcal T^i_v$,
	we know that $\mathcal N^i_v$ contains a set~$B$ such that there is an $s$-$v$ path~$P'''$ of length~$i-1$ with 
	$V(P''') = B$ and $B \cap Y = \emptyset$. 
	Hence,~$\mathcal N^j_v$ is indeed a $(q+k-i)$-representative of 
the family of all sets $A\subseteq V$ such that 
there is an $s$-$v$ path $P$ in $G$ of length $i-1$ with $V(P) = A$.

	The upper bound on the size of $\mathcal N^i_v$ follows from \cref{lem:matroid-eff-rep-set}.
	This completes the proof.
\end{proof}

\begin{lemma}
	\label{lem:dp-runtime}
	The family $\mathcal N^{k}_t$ from \cref{repdp} can be computed in $2^{O(q+k)}\cdot|E|$ time.
\end{lemma}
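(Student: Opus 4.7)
The plan is to bound the three sources of work in \cref{repdp} separately: (i) a one-time construction of a matroid representation; (ii) per-round assembly of the families~$\mathcal T^i_v$; and (iii) per-round computation of each representative~$\mathcal N^i_v$ using \cref{lem:matroid-eff-rep-set}. The uniform structural bound I would use throughout is the inequality $|\mathcal N^{j}_v|\le \binom{q+k-j}{j}\le 2^{q+k}$ for all $v\in V$ and $j<i$, which is already granted by \cref{lem:repdp-correct}.

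First, I would call \cref{lem:fast-rep} once, on ground set~$V$ with rank~$q+k$, to obtain a representation of the uniform matroid of rank~$q+k$ on~$V$ over a prime field in $O((q+k)|V|)\subseteq O((q+k)|E|)$ time; we may assume $|V|\le|E|+1$ because isolated vertices cannot appear on an $s$-$t$ path. Since every subset of~$V$ of size at most~$q+k$ is independent in this uniform matroid, the combinatorial $q$-representative notion (\cref{def:repsets}) coincides here with the independent-set version (\cref{def:qrep}) that \cref{lem:matroid-eff-rep-set} actually operates on, so we can legitimately feed $\mathcal T^i_v$ into it.

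Next, fixing $i\in\set[2]{k}$, the bound from \cref{lem:repdp-correct} gives
\[ |\mathcal T^i_v|\;\le\;\sum_{\{v,w\}\in E}|\mathcal N^{i-1}_w|\;\le\;\deg(v)\cdot 2^{q+k}, \]
and I would assemble $\mathcal T^i_v$ in $O(\deg(v)\cdot 2^{q+k}\cdot k)$ time, since each contributing set has size at most~$k$ and is produced by one copy plus one insertion of~$v$. Summing over~$v$ and using $\sum_v\deg(v)=2|E|$ yields $2^{O(q+k)}\cdot|E|$ per round. For step~(iii) I would apply \cref{lem:matroid-eff-rep-set} with $p=i$ and representative parameter~$q+k-i$, so that $p+q'=q+k$; per vertex this costs
\[ O\!\left(\tbinom{q+k}{i}\cdot |\mathcal T^i_v|\cdot i^{\omega}+|\mathcal T^i_v|\cdot\tbinom{q+k}{i}^{\omega-1}\right)\;=\;2^{O(q+k)}\cdot \deg(v), \]
using $\binom{q+k}{i}\le 2^{q+k}$ and $i\le k\le q+k$. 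Again the handshake lemma absorbs the sum over~$v$ into an $|E|$ factor per round, and the outer sum over $i\in\set[2]{k}$ only adds a polynomial factor absorbed into $2^{O(q+k)}$.

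The main obstacle I anticipate is purely bookkeeping: the formula from \cref{lem:matroid-eff-rep-set} mixes three binomial coefficients indexed by $i$, $q+k-i$, and $q+k$, and one has to verify that each of them separately collapses to $2^{O(q+k)}$ so that the degree-weighted sums across rounds combine into a single $2^{O(q+k)}\cdot|E|$ term. No deeper algorithmic idea is required beyond these estimates and the handshake lemma.
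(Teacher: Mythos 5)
Your proposal is correct and follows essentially the same route as the paper's proof: a one-time construction of the uniform-matroid representation via \cref{lem:fast-rep} (the paper achieves $|V|\le|E|$ by a BFS that discards vertices not on any $s$-$t$ path, which serves the same purpose as your removal of isolated vertices), the per-round bound $|\mathcal T^i_v|\le 2^{q+k}\deg(v)$, a $2^{O(q+k)}\cdot\deg(v)$ cost per call to \cref{lem:matroid-eff-rep-set}, and the Handshaking Lemma to sum over vertices and rounds. Your explicit remark that the uniform matroid makes \cref{def:repsets} and \cref{def:qrep} coincide is a detail the paper leaves implicit, but it does not change the argument.
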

\begin{proof}
		As a preprocessing step, we remove in $O(|E|)$ time via breadth-first search all vertices which are not on an $s$-$t$ path. 
		Hence, $|V| \leq |E|$.
		Furthermore, we use \cref{lem:fast-rep} to compute a representation of the uniform matroid $M$ of rank $q+k$ on $V$ in $O(|E|\cdot(q+k))$ time.
		Then, for each $i \in \set{k}$ and each $v \in V$
		we compute $\mathcal T^i_v$ in
		$O(\deg(v) \cdot 2^{q+k})$ time, since for all $w \in V$ the family $\mathcal N^{i-1}_w$ is of size at most $2^{q+k}$, see \cref{lem:matroid-eff-rep-set}.
		Hence, $\mathcal T^i_v$ is of size at most $2^{q+k}\deg(v)$.
		Computing (with \cref{lem:matroid-eff-rep-set}) the $(q+k-i)$-representative~$\mathcal N^i_v$ of~$\mathcal T^i_v$ takes $2^{O(k+q)}\cdot\deg(v)$ time.
		Hence, by the Handshaking Lemma, 
		this yields an overall running time of $2^{O(k+q)}\cdot |E|$ time.
\end{proof}
In the proof of \cref{lem:dp-runtime}, 
one could use Theorem 1.2 instead of Theorem 1.1 from Fomin~\etal~\cite{FLPS16} to improve the constant hidden in the Big-$O$ notation. 
However, 
we would lose the linear dependency in $|E|$ by doing so. 

We are now ready to prove \cref{lem:q-robust}.
\begin{proof}[Proof of \cref{lem:q-robust}]
	First, we construct the graph $G'=(V',E')$ where we add $k$ new dummy vertices $d_1,\dots,d_k$ to $G$.
	Hence,
	$V' \ceq  V \cup \{ d_1,\dots,d_k \}$ and
	\begin{align*}
		E' \ceq  E 	&\cup \{ \{v,w\} \in E \mid t \not \in \{v,w\} \} \\
			&\cup \{ \{v,d_i\} \mid \{v,t\} \in E, i \in \set{k}\} \\
			&\cup \{ \{d_i,d_{i+1}\} \mid i \in \set{k-1} \} \\
			&\cup \{ \{d_i,t\} \mid i \in \set{k}\}.
	\end{align*}
	Note that for each $s$-$t$ path $P$ in $G$ of length at most $k-1$ 
	there is an $s$-$t$ path~$P'$ in~$G'$ of length exactly~$k-1$ 
	such that $V(P) = V(P') \setminus \{d_1,\dots,d_k\}$.
	Furthermore, for each $s$-$t$ path~$P'$ in~$G'$ of length exactly~$k-1$
	there is an $s$-$t$ path~$P$ in~$G$ of length at most~$k-1$ 
	such that $V(P) = V(P') \setminus \{d_1,\dots,d_k\}$.

	Using \cref{repdp}, we compute in $2^{O(q+k)}\cdot|E|$ time (\cref{lem:dp-runtime})
	$\mathcal N^{k+1}_t$ for $G'$, $s$, $t$, $k$, and~$q$.
	By \cref{lem:repdp-correct}, we know 
	that $\mathcal N^{k}_t$  is of size at most ${q+k} \choose {k}$ and a $q$-representative of 
	the family of all sets $A\subseteq V'$ such that 
	there is an $s$-$v$ path~$P$ in~$G$ of length~$k-1$ with $V(P) = A$.

	Now we compute the desired set~$\mathcal F$, which we initialize by
	$\mathcal F \ceq  \emptyset$.
	Observe, that during the execution of \cref{repdp}, 
	we can store for each set $A\in \mathcal T^i_v$ a corresponding $s$-$v$ path $P$ in $G$ with $V(P)=A$, where $i \in \set{k},v \in V'$.
	We now go over all $A \in \mathcal N^{k}_t$ and their corresponding $s$-$t$ paths~$P_A$ of length~$k-1$ in~$G'$.
	Next, we store in $\mathcal F$ an $s$-$t$ path~$P'$ in~$G$ of length at most~$k-1$ such that $V(P') = V(P_A)\setminus \{d_1,\dots,d_k\}$.
	The whole procedure ends after $2^{O(q+k)}\cdot |E|$ time and $\mathcal F$ is of size at most $|\mathcal F| \leq 2^{q+k}$.

	It remains to show that $\mathcal F$ is $q$-robust.
	Let $X \subseteq V$ of size at most~$q$ such that there is an $s$-$t$ path~$P$ of length at most~$k-1$ in $G - X$.
	Hence, there is an $s$-$t$ path~$P'$ in~$G'$ of length exactly~$k$ such that $V(P) = V(P') \setminus \{d_1,\dots,d_k\}$.
	Since $X \cap V(P') = \emptyset$, we know that there is an $A \in \mathcal N^{k}_t$ such that 
	there is an $s$-$v$ path~$P''$ in~$G'$ of length~$k$ with $V(P'') = A$ and $A \cap X = \emptyset$.
	Thus, we added an $s$-$t$ path $P^*$ to $\mathcal F$ with $V(P^*) = A \setminus \{d_1,\dots,d_k\}$.
	Hence, $V(P^*) \cap X$ and it thus is an $s$-$t$ path in~$G - X$.
\end{proof}
}

Having \cref{lem:q-robust-correct,lem:q-robust}, 
we are set to prove \cref{thm:eimspfptk}.
\begin{proof}[Proof of \cref{thm:eimspfptk}]
	We only show the proof for \vimsp{}. 
	The fixed-parameter tractability of \eimsp{} follows from \cref{prop:edgetovertexintersect}.

	Given an instance $I = (\TGcompact,s,t,k,\ell)$ of \vimsp{},
	we first check whether there is an empty $E_i$. 
	If this is the case, then $I$~is a \no-instance.
	Afterwards, we can assume that $\tau \leq |\TG|$.
	For each $i \in \set{\ltime}$, 
	we compute in $2^{O(k+2(k-\ell))}|E_i| =2^{O(k)}|E_i|$ time
	a $2(k-\ell)$-robust set $\mathcal F_i$ of $s$-$t$ paths of length at most~$k-1$ in $G_i = (V,E_i)$
	such that~$|\mathcal F_i| \leq 2^{O(k)}$, see \cref{lem:q-robust}. 
	
	Next, we construct a directed graph $G'=(V',E')$, where beside $s,t$ each path in $\mathcal F_i$ has a corresponding vertex, for all $i \in \set{\tau}$.
	Formally, that is, $ V' \ceq  \{s,t\} \cup \bigcup_{i=1}^\ltime \mathcal F_i$, and
  $E' \ceq\{ (P,P') \mid P \in \mathcal F_i,P' \in \mathcal F_{i+1}, |V(P) \cap V(P')| \leq \ell, \text{ for some } i \in\set{\ltime-1}\}
				\cup \{ (s,P) \mid P \in \mathcal F_1 \}
				\cup \{ (P,t) \mid P \in \mathcal F_{\ltime} \}$.
	Observe that $|V'|+|E'| \leq 2^{O(k)}\cdot\ltime$.
	We note that $I$ is a \yes-instance if and only if there is an $s$-$t$ path in $G'$.
	Since $\sum_{i=1}^\tau |E_i| \leq |\TG|$, this yields an overall running time of $2^{O(k)}\cdot\max\{\ltime,|\TG|\} = 2^{O(k)}\cdot|\TG|$.

	It remains to show that $I$ is a \yes-instance if and only if there is an $s$-$t$ path in~$G'$.
	We only show that if $I$ is a \yes-instance, then there is an $s$-$t$ path in $G'$ since the converse is easy to verify from the definition of $G'$.
	Let $I$ be a \yes-instance. 
	Then, by \cref{lem:q-robust-correct}, there is a solution $(P_1,\dots,P_\ltime)$ such that $P_i \in \mathcal F_i$, for all $i\in \set{\ltime}$.
	For each $i \in \set{\ltime-1}$, we have that $|V(P_i) \cap V(P_{i+1})| \leq \ell$.
	It follows that $G'$~has an edge from the vertex corresponding to $P_i$ 
	to the vertex corresponding to $P_{i+1}$.
	Hence, there is an $s$-$t$ path in $G'$ 
	because $s$ is adjacent to all vertices corresponding to a path in~$\mathcal F_1$ and
	each vertex corresponding to a path in~$\mathcal F_\tau$ is adjacent to $t$.
\end{proof}

\section{Looking through the lens of efficient data reduction}
\label{sec:kernel}
\appendixsection{sec:kernel}

In this section,
we study whether (polynomial) problem kernels for our four multistage $s$-$t$ path problems exist.
We start from the simple observation that every problem trivially admits a problem kernel of size polynomial in~$n+\tau$.
When strengthening~$n$ to~$\ug{\nu}$,
that is,
when parameterizing by~$\ug{\nu}+\tau$,
where~$\ug{\nu}$ denotes the vertex cover number of the underlying graph,
for~\eimsp{} and~\vimsp{} we prove a polynomial-size problem kernel (\cref{ssec:pkdis}) and
for~\emsp{} and~\vmsp{} we prove a single-exponential-size problem kernel (\cref{ssec:seksim}).
We prove that,
unless~\NPincoNPslashpoly,
the latter cannot be improved to polynomial size for~\vmsp{}
and that when parameterized by~$n$ (i.e., dropping~$\tau$ from~$n+\tau$)
none of the four problems admits a polynomial kernel (\cref{sec:incompr}).

\subsection{Polynomial kernel for the dissimilarity variant \texorpdfstring{regarding \boldmath$\ug{\nu}+\tau$}{}}
\label{ssec:pkdis}

In this section,
we prove \vimsp{} and \eimsp{} to admit problem kernels of polynomial size in~$\ug{\nu}+\tau$.

\begin{theorem}
 \label{thm:interseckernel-vc}
 Each of \vimsp{} and \eimsp{} admits a problem kernel with at most~$\tau\cdot(2\ug{\nu}+2 + \binom{2\ug{\nu}}{2}(3k-3)) \in O(\tau\ug{\nu}^3)$ vertices and~$\tau$ snapshots.
\end{theorem}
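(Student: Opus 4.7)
The reduction keeps, for each pair of vertices of a small vertex cover and each snapshot, only a bounded number of private witnesses joining them.

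First, I would compute in polynomial time a vertex cover $C \subseteq V$ of $\ug{\TG}$ of size at most $2\ug{\nu}$ via the standard maximal-matching-based $2$-approximation, and set $C' \ceq C \cup \{s,t\}$; since $C$ is a vertex cover, every vertex $w \in V\setminus C'$ has all its underlying neighbours in $C'$ and thus appears on any $s$-$t$ path only as the middle of some subpath $u$-$w$-$v$ with $u,v \in C'$. Next, for every $i \in \set{\tau}$ and every unordered pair $\{u,v\} \subseteq C'$ with $u\neq v$, I let
\[
	N_i(u,v) \ceq \{w \in V\setminus C' \mid \{u,w\},\{v,w\} \in E_i\},
\]
and I mark an arbitrary subset $M_i(u,v) \subseteq N_i(u,v)$ of size $\min\{|N_i(u,v)|,\, 3(k-1)\}$. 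Finally, I delete every vertex in $V \setminus (C' \cup \bigcup_{i,\{u,v\}} M_i(u,v))$ from $\TG$ and restrict each edge set accordingly, producing $\TG'$. The vertex-count bound is immediate from $|C'| \leq 2\ug{\nu}+2$, there being at most $\binom{|C'|}{2}$ pairs, and $\tau$ snapshots.

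The equivalence is proven in two directions. The backwards direction (\yes{} in the kernel $\Rightarrow$ \yes{} in the original) is trivial since $\TG'$ is a temporal subgraph of $\TG$. For the forward direction, I would convert a solution $(P_1,\ldots,P_\tau)$ for $\TG$ into a solution $(P_1',\ldots,P_\tau')$ for $\TG'$ by iterating over $i=1,\ldots,\tau$ and, for each deleted private vertex $w\in V(P_i) \setminus V(\TG')$ lying on a subpath $u$-$w$-$v$ of $P_i$, replacing it by some $w' \in M_i(u,v)$ chosen so that (i) $w' \notin V(P_i) \setminus \{w\}$, keeping $P_i'$ a simple $s$-$t$ path of at most $k$ vertices; (ii) $w' \notin V(P_{i-1}')$; and (iii) $w' \notin V(P_{i+1})$. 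The forbidden vertices all lie in $(V(P_i)\cup V(P_{i-1}')\cup V(P_{i+1}))\setminus C'$, whose size is at most $3(k-1)$, so pigeonhole against $|M_i(u,v)| = 3(k-1)$ yields such a $w'$ (otherwise $|N_i(u,v)|<3(k-1)$ and the original $w$ was never deleted).

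The main obstacle is checking that the rewritten sequence still satisfies the dissimilarity constraints \emph{after every path has been modified}. Conditions (ii) and (iii) are designed so that every vertex newly introduced into $P_i'$ lies outside the already-fixed predecessor $V(P_{i-1}')$ and outside the yet-unmodified successor $V(P_{i+1})$; a short induction then yields $V(P_i') \cap V(P_{i+1}') \subseteq V(P_i) \cap V(P_{i+1})$, so $|V(P_i') \cap V(P_{i+1}')| \leq \ell$ is preserved throughout. For \eimsp{}, the symmetric edge-swap replacing $\{u,w\},\{w,v\}$ by $\{u,w'\},\{v,w'\}$ gives the analogous argument on edge-intersections: the only new edges incident to $w'$, so $w'$ merely needs to differ from the (at most one) witness used between $u$ and $v$ in $P_{i-1}'$ and in $P_{i+1}$, and the marking budget of $3(k-1)$ is more than sufficient.
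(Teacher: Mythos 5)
Your proposal is correct and follows essentially the same route as the paper: compute a $2$-approximate vertex cover, mark $\min\{|N_i(u,v)|,3k-3\}$ common neighbours per cover-pair per snapshot, and use the pigeonhole argument against the at most $3k-6$ forbidden non-cover vertices coming from the three relevant paths to reroute any deleted degree-two vertex. The only (cosmetic) difference is that you rewrite the solution left-to-right with explicit invariants, whereas the paper packages the very same replacement step as a double-extremal contradiction argument (maximizing first the index of the first non-kernel path and then the length of its kernel prefix).
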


\noindent
The kernelization behind \cref{thm:interseckernel-vc} basically relies on the 
following data reduction rule.

\begin{rrule}
		\label{rr:vc-rule}
		Let $I=(\TGfull,s,t,k,\ell)$ be an instance of \vimsp{} or \eimsp{} with underlying graph~$\ug{\TG}$.
		\begin{compactenum}
				\item Compute a vertex cover $V'$ of~$\ug{\TG}$ of size at most $2\ug{\nu}$.
				\item For each pair of distinct vertices $v,w \in V'$ and each $i \in \set{\tau}$, 
						in $N_{vw}^i\ceq (N_{(V,E_i)}(v) \cap N_{(V,E_i)}(w))\setminus V'$ mark $\min\{3k-3,|N_{vw}^i|\}$ 
						vertices.
				\item Construct a set $V''$ containing $\{s,z\} \cup V'$ and all marked vertices, and then construct the temporal graph $\TG'=(V'',E'_1,\dots,E'_\tau)$,
						where $E'_i = \{ \{v,w\} \in E_i \mid v,w \in V''\}$, for all~$i \in \set{\tau}$.
				\item Output the instance $O=(\TG',s,t,k,\ell)$.
		\end{compactenum}
\end{rrule}

\noindent
First,
we prove that we can efficiently execute \cref{rr:vc-rule}.

\ifarxiv{}
\begin{lemma}
		\else{}
\begin{lemma}[\appref{lem:rtpk}]
\fi{}
    \label{lem:rtpk}
	\cref{rr:vc-rule} is correct and can be executed in $O(n\cdot \ug{\nu}^2)$ time.
\end{lemma}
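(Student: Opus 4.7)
The proof has three parts: the runtime bound, the easy backward direction of correctness, and the main forward direction. A factor-$2$ vertex cover~$V'$ of $\ug{\TG}$ of size at most $2\ug{\nu}$ is computable in linear time via a maximal matching. For the marking step, for each~$u \in V \setminus V'$ and each~$i \in \set{\tau}$ I enumerate all pairs of~$u$'s $V'$-neighbours in~$(V,E_i)$ and append $u$ to the corresponding set~$N_{vw}^i$; since each vertex~$u$ contributes to at most $\binom{2\ug{\nu}}{2}\in \O(\ug{\nu}^2)$ pairs, the overall time is~$\O(n\cdot\ug{\nu}^2)$, which also dominates the construction of~$V''$ and~$\TG'$. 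The backward direction of correctness is immediate: every solution of~$O$ is also a solution of~$I$, because $V'' \subseteq V$ and $E_i' \subseteq E_i$ hold for every~$i\in\set{\tau}$.

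\textbf{Forward correctness.} Set $V^* \ceq V' \cup \{s,t\}$; then $V^*$ is still a vertex cover of~$\ug{\TG}$, and $N_{vw}^i \cap V^* = \emptyset$ for all admissible $v,w,i$. Given a solution~$(P_1,\ldots,P_\tau)$ of~$I$, I build~$(P_1',\ldots,P_\tau')$ for~$O$ snapshot by snapshot in the order~$i=1,\ldots,\tau$. Because $V^*$ is a vertex cover, no two non-$V^*$ vertices can be adjacent on~$P_i$; hence each~$u \in V(P_i) \setminus V''$ is flanked on~$P_i$ by two vertices~$v,w\in V^*$ and lies in~$N_{vw}^i$. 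I then replace $u$ by a marked vertex~$u' \in V'' \cap N_{vw}^i$ that avoids (a)~all vertices already placed on the partially constructed~$P_i'$ and (b)~all vertices of the already-committed~$P_{i-1}'$. Any simple $s$-$t$ path with at most~$k$ vertices contains at most~$\lceil (k-2)/2 \rceil$ non-$V^*$ vertices (since two such vertices cannot be adjacent on the path), and only non-$V^*$ vertices can lie in~$N_{vw}^i$, so the forbidden set inside~$N_{vw}^i$ has size at most $2\lceil (k-2)/2 \rceil - 1 \leq k-1$. Moreover, $u \notin V''$ forces~$|N_{vw}^i| > 3k-3$---otherwise every vertex of~$N_{vw}^i$ (including~$u$) would have been marked---so $N_{vw}^i$ contains at least $3k-3 > k-1$ marked vertices, guaranteeing an admissible~$u'$.

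\textbf{Main obstacle.} The non-obvious part is that the local swaps must preserve the global dissimilarity budget $|V(P_i')\cap V(P_{i\pm 1}')| \leq \ell$. Writing $U'(P_j')$ for the freshly chosen replacements in~$P_j'$ and $V^*(P_j)\ceq V(P_j)\cap V^*$, condition~(b) yields $U'(P_i')\cap V(P_{i-1}')=\emptyset$; combined with the fact that the $V^*$-portion of~$P_i$ is preserved verbatim, this gives $V(P_i')\cap V(P_{i-1}')=V^*(P_i)\cap V^*(P_{i-1})\subseteq V(P_i)\cap V(P_{i-1})\leq\ell$. The same invariant transfers to the pair~$(P_i',P_{i+1}')$ once $P_{i+1}$ is processed (by then forbidding~$V(P_i')$), so the budget is maintained along the whole sequence. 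The \eimsp{} case follows along the same lines, tracking shared edges instead of vertices, or one may alternatively invoke~\cref{prop:edgetovertexintersect} to reduce it to the vertex variant and transfer the result.
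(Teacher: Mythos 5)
Your runtime analysis and the backward direction are fine and essentially match the paper. The forward direction, however, has a genuine gap in how the dissimilarity budget is maintained. Your replacement rule for a vertex $u\in V(P_i)\setminus V''$ forbids only (a) the partial $P_i'$ and (b) the committed $P_{i-1}'$, and you then claim $V(P_i')\cap V(P_{i-1}')=V^*(P_i)\cap V^*(P_{i-1})$. This identity is false: $V(P_i')$ also contains the \emph{kept} vertices $V(P_i)\cap(V''\setminus V^*)$, i.e.\ marked non-cover vertices that the original $P_i$ happens to pass through, and nothing in your scheme prevents a replacement vertex chosen while processing $P_{i-1}$ from being exactly such a vertex --- it only had to avoid $V(P_{i-2}')$ and the partial $P_{i-1}'$, not the original, not-yet-processed $P_i$. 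Such a collision contributes to $V(P_{i-1}')\cap V(P_i')$ without lying in $V(P_{i-1})\cap V(P_i)$, so the bound $\le\ell$ can be exceeded; symmetrically, a replacement in $U'(P_i')$ can collide with a kept marked vertex of $P_{i+1}$, breaking the bound for the pair $(P_i',P_{i+1}')$.

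The repair is exactly what the paper's proof does: when substituting a vertex of $P_i$, exclude the vertices of \emph{three} consecutive paths (previous, current, and next), a forbidden set of size at most $3k-4$. This is why \cref{rr:vc-rule} marks $3k-3$ common neighbours per pair of cover vertices; your accounting only ever needs $k-1$ forbidden candidates, which is a sign that the two-path exclusion is not the intended (or a sufficient) invariant. The paper packages this as an exchange argument --- take a solution whose prefix of snapshots already lives in $\TG'$ is longest, and repair a single offending vertex while avoiding $V(P_{i-1})\cup V(P_i)\cup V(P_{i+1})$ --- but the substance is the three-path forbidden set. Finally, your treatment of \eimsp{} is too quick: invoking \cref{prop:edgetovertexintersect} does not transfer correctness of the reduction rule applied to an \eimsp{} instance; the paper instead argues directly that the two new edges incident with the replacement vertex $w$ cannot appear on $P_{i\pm 1}$ because $w\notin V(P_{i\pm 1})$, so edge intersections do not increase either.
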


\appendixproof{lem:rtpk}
{
\begin{proof}
 We can compute a 2-approximate vertex cover in linear time via a maximal matching (Step 1).
 Next,
 we compute for each of the at most~$\binom{2\ug{\nu}}{2}$ pairs of vertices in~$V'$, 
 in each of the~$\tau$ snapshots,
 their neighborhood and mark a subset therein in linear time.
 Finally,
 we can compute the set~$V''$,
 then~$\TG'$,
 and then~$O$ to output,
 each in linear time.
 Hence, this procedure ends after $O(n \cdot \ug{\nu}^2)$ time.

		Let $I=(\TGcompact,s,t,k,\ell)$ be an instance of \vimsp{} or \eimsp{},
		and let~$O=(\TG',s,t,k,\ell)$ be the output instance of~\cref{rr:vc-rule} on~$I$.
		Furthermore, 
		for all $i \in \set{\tau}$,
		let $G_i$ and $G'_i$ respectively denote the \ith{$i$} snapshot of $\TG$ and of $\TG'$.

		\LD{}
		Since each path in a snapshot of $\TG'$ is also a path in $\TG$, we have that if $O$ is a \yes-instance,
		then $I$ is a \yes-instance as well.

		\RD{}
		Now let $(P_1,\dots,P_\tau)$ be a solution for $I$.
		Clearly, if for each $i \in \set{\tau}$ we have that $P_i$ is a path in $G'_i$, then $(P_1,\dots,P_\tau)$ is also a solution for $O$.
		Let $\mathcal S_p$ be the set of solutions for $I$ such that $P_j$ is a path in $G'_j$, for all $j \in \set{p-1}$ and all $p \in \set{\tau}$.
		Note that if $\mathcal S_{\tau+1}$ is not empty, then $O$ is clearly a \yes-instance.
		Let $i = \max \{ p \in \set{\tau} \mid \mathcal S_p \not = \emptyset \}$ and
		let $S=(P_1,\dots,P_\tau) \in \mathcal S_i$, $P_i = (v_0,v_1,\dots, v_{k'})$, 
		$s=v_0$, 
		and $t=v_{k'}$
		such that $j$ is maximum under the condition that $v_0,\dots,v_{j-1}$ is a path in $G'_i$.
		We can conclude that $v_j$ is not a vertex in~$\TG'$.
		Let $V'' = V' \cup \{s,t\}$ where $V'$ is the vertex cover we computed during the execution of \cref{rr:vc-rule}.
		Hence, $v_j \not\in V''$ but $v_{j-1},v_{j+1} \in V''$, otherwise $V''$ is not a vertex cover.
		Let $N = (N_{(V,E'_i)}(v_{j-1}) \cap N_{(V,E'_i)}(v_{j+1}))\setminus V'$.
		From \cref{rr:vc-rule}, we know that $N$ is of size at most~$3k-3$.
		Now we distinguish  into four cases:
		\begin{compactenum}
				\item If $1=i=\tau$, then set $X = V(P_i)\setminus \{ s,t ,v_j\}$.
				\item If $1=i<\tau$, then set $X = (V(P_i)\cup V(P_{i+1}))\setminus \{ s,t ,v_j\}$.
				\item If $1<i=\tau$, then set $X = (V(P_{i-1})\cup V(P_i))\setminus \{ s,t ,v_j\}$.
				\item If $1<i<\tau$, then set $X = (V(P_{i-1})\cup(V(P_i)\cup V(P_{i+1})))\setminus \{ s,t ,v_j\}$. 
		\end{compactenum}
		Since all paths in~$S$ are of length at most $k$, we know that $X$ is of size at most $3k-4$.
		Hence, there is a vertex $w \in N \setminus X$ such that $P' = ( s=v_0,v_1,\dots, v_{j-1},w,v_{j+1}, \dots ,v_{k'}=t)$
		is an~$s$-$t$ path in~$G_i'$ of length~$k'\leq k$.
		Moreover, we note that
		\begin{itemize}
		 	\item if $i>1$, then $|V(P_{i-1}) \cap V(P')|\leq |V(P_{i-1}) \cap V(P_{i})|$ and
						$|E(P_{i-1}) \cap E(P')| \leq |E(P_{i-1}) \cap E(P_{i})|$;
			\item if $i<\tau$, then $|V(P') \cap V(P_{i+1})| \leq |V(P_{i}) \cap V(P_{i+1})|$ and
							$|E(P') \cap E(P_{i+1})| \leq |E(P_{i}) \cap E(P_{i+1})|$.
		\end{itemize}
		Hence, 
		in either case of $I$ and $O$ both being instances of \vimsp{} or \eimsp{},
		$(P_1,\dots,P_{i-1},P,P_{i+1},\dots,P_\tau)$ is a solution for~$O$.
\end{proof}
}

\begin{proof}[Proof of~\cref{thm:interseckernel-vc}]
 Given an instance~$I=(\TGfull,s,t,k,\ell)$,
 we apply \cref{rr:vc-rule} in polynomial time 
 to obtain the instance~$O=(\TG',s,t,k,\ell)$ being equivalent to~$I$ 
 (\cref{lem:rtpk}),
 containing~$\tau$ snapshots and at most~$\tau\cdot (2\ug{\nu}+2 + \binom{2\ug{\nu}}{2}(3k-3))$ vertices.
\end{proof}

\subsection{Single-exponential kernel for the similarity variant \texorpdfstring{regarding \boldmath$\ug{\nu}+\tau$}{}}
\label{ssec:seksim}

We prove that~\emsp{} and~\vmsp{} admit problem kernels of single-exponential size in~$\ug{\nu}+\tau$,
proving containment in~\FPT{}.
As we will see later,
unless~\NPincoNPslashpoly{} this result for \vmsp{} cannot be improved to size polynomial in~$\ug{\nu}+\tau$.

\begin{theorem}
 \label{thm:symdifexpkernelnutau}
 Each of \emsp{} and \vmsp{} admits a problem kernel with at most~$2\ug{\nu}+4^{\ug{\nu}\tau}(2\ug{\nu}+1)$ vertices and~$\tau$ snapshots.
\end{theorem}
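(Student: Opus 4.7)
The plan is to group vertices outside a small vertex cover by their temporal neighborhoods and retain only a bounded number of representatives of each group. I would first compute in linear time a vertex cover $V'$ of the underlying graph $\ug{\TG}$ with $|V'|\le 2\ug{\nu}$ (via a maximal matching), enforcing $\{s,t\}\subseteq V'$; note also that every $s$-$t$ path in any snapshot has at most $2\ug{\nu}+1$ vertices, so one may assume $k\le 2\ug{\nu}+1$. Each vertex $v\in V\setminus V'$ then receives the \emph{profile} $\pi(v)\ceq (N_{G_i}(v))_{i=1}^{\tau}\in (2^{V'})^\tau$, where $G_i\ceq (V,E_i)$; since $V'$ covers every snapshot, the number of realized profiles is at most $(2^{|V'|})^\tau\le 4^{\ug{\nu}\tau}$. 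For each realized profile $\pi$, mark an arbitrary subset $M_\pi\subseteq U_\pi\ceq \{v\in V\setminus V':\pi(v)=\pi\}$ with $|M_\pi|=\min\{|U_\pi|,2\ug{\nu}+1\}$. The output is the induced temporal subgraph on $V''\ceq V'\cup\bigcup_\pi M_\pi$ with the same $s,t,k,\ell,\tau$; it has at most $2\ug{\nu}+4^{\ug{\nu}\tau}(2\ug{\nu}+1)$ vertices on $\tau$ snapshots, as claimed.

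The reverse direction of the equivalence is immediate because the kernel is a temporal subgraph. For the forward direction, given a solution $(P_1,\ldots,P_\tau)$ in the original, I would rewrite it profile by profile by replacing each vertex in $U_\pi\setminus M_\pi$ by some vertex in $M_\pi$ of the same profile. Fix $\pi$ and set $A_i(\pi)\ceq V(P_i)\cap U_\pi$; since $V\setminus V'$ is independent in every snapshot, the vertices of $A_i(\pi)$ are pairwise non-adjacent on $P_i$, so together with $k\le 2\ug{\nu}+1$ we get $|A_i(\pi)|\le \ug{\nu}+1\le |M_\pi|$. I build injections $g_i:A_i(\pi)\to M_\pi$ inductively that agree on consecutive overlaps: let $g_1$ be any injection, and for $i\ge 2$ enforce $g_i(v)=g_{i-1}(v)$ for $v\in A_{i-1}(\pi)\cap A_i(\pi)$ and extend to the remaining $\le \ug{\nu}+1$ vertices of $A_i(\pi)$ by assigning pairwise distinct still-unused values of $M_\pi$; this always succeeds because $|M_\pi|=2\ug{\nu}+1$ leaves room. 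The modified path $P'_i$ is obtained from $P_i$ by simultaneously replacing every $v\in A_i(\pi)$ (over all profiles) by $g_i(v)$, and it is a valid $s$-$t$ path in $(V'',E'_i)$ since $v$ and $g_i(v)$ share their neighborhood in $V'$ in snapshot $G_i$.

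The remaining task is to show that the consecutive-distance bound is preserved. For \vmsp{}, the vertex symmetric difference $|V(P'_i)\triangle V(P'_{i+1})|$ decomposes into the (unchanged) $V'$-contribution plus a contribution of $|g_i(A_i(\pi))\triangle g_{i+1}(A_{i+1}(\pi))|$ per profile $\pi$; the consistency $g_i\equiv g_{i+1}$ on $A_i(\pi)\cap A_{i+1}(\pi)$ together with the injectivity of each $g_i$ yields $|g_i(A_i(\pi))\cap g_{i+1}(A_{i+1}(\pi))|\ge |A_i(\pi)\cap A_{i+1}(\pi)|$, so the profile-$\pi$ contribution is at most the original. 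For \emsp{}, every edge of a path has both endpoints in $V'$ (untouched) or exactly one endpoint $v$ in some $U_\pi$; for $v\in A_i(\pi)\cap A_{i+1}(\pi)$ its two incident path-edges relabel to the incident edges of $m\ceq g_i(v)=g_{i+1}(v)$ with exactly the same $V'$-partners in each snapshot, so the contribution is preserved, whereas for $v\in A_i(\pi)\triangle A_{i+1}(\pi)$ any accidental collision $g_i(v)=g_{i+1}(v')$ merely identifies two separate preimage-stars of at most two edges each into a single image-star of at most two edges, which cannot increase the edge symmetric difference. I expect the main obstacle to be this last bookkeeping step for \emsp{}: one must verify that image-collisions never create \emph{new} contributions to the edge symmetric difference, which follows because the $V'$-endpoints of the relabeled edges are exactly the original $V'$-partners of~$v$ respectively~$v'$. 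Combined, these bounds show that $(P'_1,\ldots,P'_\tau)$ is a solution in the kernel.
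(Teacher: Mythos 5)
Your kernel is the same object as the paper's: your ``profiles'' are exactly the paper's \emph{temporal twin classes} (\cref{def:temptwin}), the profile count $\le 2^{|V'|\tau}$ matches \cref{obs:numberofttwinclasses}, and keeping $2\ug{\nu}+1$ representatives per class matches what \cref{rr:deletetemptwins} leaves behind. Where you genuinely diverge is the correctness proof. The paper proves a \emph{one-vertex-at-a-time} deletion rule correct via an extremal exchange argument (\cref{lem:ttwinsrrcorrect}): among all solutions pick one in which the deleted vertex $v$ appears latest, locate the maximal run of consecutive paths containing $v$, swap in an unused twin $w$ along a prefix of that run, and verify through a three-case boundary analysis that no symmetric difference grows. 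You instead do a single global relabeling: consecutive-consistent injections $g_i:A_i(\pi)\to M_\pi$ built greedily left to right, with the distance bounds falling out uniformly from $|E\triangle F|=|E|+|F|-2|E\cap F|$ together with $|g_i(A_i)\cap g_{i+1}(A_{i+1})|\ge|A_i\cap A_{i+1}|$ (and the analogous inequality for edge sets, where your observation that accidental image collisions can only enlarge the intersection is exactly right). Your counting is sound: $|A_i(\pi)|\le\lceil k/2\rceil\le\ug{\nu}+1$ guarantees the greedy extension of $g_i$ never runs out of room in $M_\pi$, and injectivity plus disjointness of the classes ensures each $P_i'$ is a genuine path. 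The only blemish is cosmetic: forcing $\{s,t\}\subseteq V'$ pushes $|V'|$ to $2\ug{\nu}+2$ and the profile count to $4^{(\ug{\nu}+1)\tau}$, slightly overshooting the stated constants; the paper avoids this by leaving $s,t$ outside the cover and simply never deleting them (you could likewise just always mark $s$ and $t$). On balance your proof buys a cleaner, case-free accounting of the distance measures at the price of a slightly more delicate simultaneous construction, whereas the paper's iterative rule is locally simpler but pays for it with the boundary case analysis.
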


\noindent
To prove~\cref{thm:symdifexpkernelnutau},
we lift the well-known graph-theoretic notion of (false) twins to temporal graphs as follows.

\begin{definition}
 \label{def:temptwin}
 Two vertices~$v,w$ in a temporal graph~$\TGfull$ are called \emph{(false) temporal twins} if~$N_{(V,E_i)}(v)=N_{(V,E_i)}(w)$ for every~$i\in\set{\tau}$.
\end{definition}

\noindent
Note that \cref{def:temptwin} implies an equivalence relation~$\sim$ on the vertex set~$V$, where~$v\sim w$ if and only if they are temporal twins,
and, hence,
a partition of the vertex set into classes of temporal twins.
Moreover,
every pair of vertices in the same temporal twin class is non-adjacent.
We show that such a partition is efficiently computable.

\ifarxiv{}
\begin{lemma}
		\else{}
\begin{lemma}[\appref{lem:computetemptwins}]
\fi{}
 \label{lem:computetemptwins}
 For a temporal graph~$\TGfull$,
 a partition~$V=(V_1,\dots,V_p)$ of~$V$ into temporal twin classes is computable in~$O(\tau\cdot |V|^2)$ time.
\end{lemma}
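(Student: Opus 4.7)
The plan is to apply partition refinement, extending the standard algorithm for computing (false) twins in a static graph to the temporal setting. I would initialize a partition $\Pi \ceq \{V\}$ and then, iterating over every pair $(i,u) \in \set{\tau} \times V$, refine $\Pi$ using the pivot $N_{(V,E_i)}(u)$, i.e., each class $C \in \Pi$ is replaced by the (at most two) nonempty sets $C \cap N_{(V,E_i)}(u)$ and $C \setminus N_{(V,E_i)}(u)$. The output is the final partition $\Pi$.

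For correctness, I would argue that two vertices $v, w$ lie in the same final class of $\Pi$ if and only if for every processed pivot $N_{(V,E_i)}(u)$ one has $v \in N_{(V,E_i)}(u) \iff w \in N_{(V,E_i)}(u)$. By symmetry of the neighborhood relation in an undirected graph, this is equivalent to $u \in N_{(V,E_i)}(v) \iff u \in N_{(V,E_i)}(w)$ holding for all $u \in V$ and $i \in \set{\tau}$, which in turn is equivalent to $N_{(V,E_i)}(v) = N_{(V,E_i)}(w)$ for all $i \in \set{\tau}$. By \cref{def:temptwin}, this is precisely the definition of temporal twins, so the final classes of $\Pi$ coincide with the temporal twin classes.

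For the running time, I would use the textbook doubly linked list based implementation of partition refinement in which a single refinement step with pivot~$S$ takes $O(|S|)$ time (each element points to its class, each class to its head, and a scan of~$S$ suffices to split affected classes). Summing over all $\tau \cdot |V|$ pivots, the total refinement time is $\sum_{i=1}^{\tau} \sum_{u \in V} |N_{(V,E_i)}(u)| = \sum_{i=1}^{\tau} 2|E_i|$, which, using $|E_i| \leq \binom{|V|}{2}$, is in $O(\tau \cdot |V|^2)$. A single pass over the input produces adjacency structures from which each $N_{(V,E_i)}(u)$ can be enumerated in time linear in its size, at cost $O(\tau \cdot |V|^2)$ as well. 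The only step requiring care is the implementation of partition refinement so that each pivot is truly processed in time linear in its size; this is standard, so there is no real obstacle beyond this bookkeeping.
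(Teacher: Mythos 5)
Your proof is correct and follows essentially the same route as the paper's, which likewise iteratively refines the twin classes across the snapshots; you merely implement the refinement via standard pivot-based partition refinement (with pivots $N_{(V,E_i)}(u)$) instead of direct per-class neighborhood comparisons. Both the correctness argument via symmetry of adjacency and the running-time bound $O\bigl(\sum_{i=1}^{\tau}|E_i|+\tau|V|\bigr)\subseteq O(\tau\cdot|V|^2)$ are sound.
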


\appendixproof{lem:computetemptwins}
{
\begin{proof}
 Firstly,
 we compute all (false) twin classes in the first snapshot~$(V,E_1)$ in time linear in~$|V|+|E_1|$.
 Next,
 for each vertex~$v\in V$,
 check for each~$w$ with~$v\sim w$ whether~$w$ is a false twin in each snapshot~$(V,E_2),\dots,(V,E_\tau)$,
 and adjust~$\sim$ accordingly.
\end{proof}
}

In a nutshell,
given a vertex cover~$X$ of our underlying graph,
we aim for having few (i.e., upper-bounded by some single-exponential function in~$\ug{\nu}+\tau$) temporal twin classes in the independent set~$Y=V\setminus X$,
where each temporal twin class in turn contains only few vertices.
By definition we have only few temporal twin classes.

\ifarxiv{}
\begin{observation}
		\else{}
\begin{observation}[\appref{obs:numberofttwinclasses}]
\fi{}
 \label{obs:numberofttwinclasses}
 Let~$\TGfull$ be a temporal graph with partition~$V=(X,Y)$ of~$V$ such that~$Y$ is an independent set in each snapshot.
 Then the size of every partition of~$Y$ into temporal twin classes is at most~$2^{|X|\tau}$.
\end{observation}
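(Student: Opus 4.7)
The plan is to assign each vertex of $Y$ a signature describing its neighborhoods across all snapshots, then bound the number of possible signatures. Since $Y$ is independent in every snapshot $(V, E_i)$, any edge incident to a vertex $y \in Y$ in snapshot $i$ must have its other endpoint in $X$. Hence $N_{(V, E_i)}(y) \subseteq X$ for every $y \in Y$ and every $i \in \set{\tau}$.

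For each $y \in Y$, I would define the signature
\[
\sigma(y) \ceq \bigl(N_{(V,E_1)}(y), N_{(V,E_2)}(y), \ldots, N_{(V,E_\tau)}(y)\bigr) \in (2^X)^\tau.
\]
By \cref{def:temptwin}, two vertices $y, y' \in Y$ are temporal twins if and only if $\sigma(y) = \sigma(y')$. Consequently, the temporal twin classes contained in $Y$ are in bijection with the distinct signatures realized by vertices of $Y$, so their number is at most $|(2^X)^\tau| = 2^{|X|\tau}$.

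The only remaining concern is whether temporal twin classes can mix vertices from $X$ and $Y$; but this does not affect the bound, since we are only counting classes that consist entirely of vertices from $Y$ (or equivalently, the partition of $Y$ induced by restricting $\sim$ to $Y$, as stated in the observation). The main subtlety, and really the only nonroutine step, is observing that independence of $Y$ in each snapshot forces all relevant neighborhoods to live inside $X$; once that is established the counting is immediate.
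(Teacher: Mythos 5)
Your proof is correct and follows essentially the same route as the paper's: the paper likewise counts at most $2^{|X|}$ possible neighborhoods per snapshot (all contained in $X$ by independence of $Y$) and raises this to the power $\tau$. Your version merely makes the signature map and the bijection with twin classes explicit, which the paper leaves implicit.
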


\appendixproof{obs:numberofttwinclasses}
{
\begin{proof}
 There are at most~$2^{|X|}$ different neighborhoods for any vertex in~$Y$ per snapshot.
 As there are~$\tau$ snapshots,
 there are at most~$(2^{|X|})^\tau$ many temporal twin classes.
\end{proof}
}

\noindent
We next aim for shrinking temporal twin classes.
Note that for every temporal twin class,
any $s$-$t$ path contains at most the number of vertices neighboring the class minus one vertex from the temporal twin class:
recall that each temporal twin class forms an independent set,
and hence every $s$-$t$ path must ``alternate'' between the class and its neighboring vertices. 
In fact, 
temporal twin classes that are large compared to their neighborhood size can be shrunk. 

\begin{rrule}
 \label{rr:deletetemptwins}
 Let~$S$ be a temporal twin class with~$|S\setminus\{s,t\}|\geq \max_{1\leq i\leq \tau} |N_{(V,E_i)}(S)|$.
 Then delete a vertex~$v\in S\setminus\{s,t\}$.
\end{rrule}

\ifarxiv{}
\begin{lemma}
		\else{}
\begin{lemma}[\appref{lem:ttwinsrrcorrect}]
\fi{}
 \label{lem:ttwinsrrcorrect}
 \cref{rr:deletetemptwins} is correct and exhaustively applicable in $O(\tau \cdot |V|^3)$~time.
\end{lemma}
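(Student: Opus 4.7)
The backward direction is immediate: any $s$-$t$ path in the reduced temporal graph is also an $s$-$t$ path in the original one with the same vertex and edge set, so both distance measures between consecutive paths are preserved.

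For the forward direction, let $(P_1,\ldots,P_\tau)$ be a solution of the original instance and write $v$ for the deleted vertex. The first ingredient is an upper bound on $|V(P_i)\cap S|$: because $S$ is an independent set in every snapshot, the $S$-vertices on $P_i$ alternate with vertices of $N_{(V,E_i)}(S)$, and counting intermediate and bookend neighbors yields $|V(P_i)\cap S| \le |N_{(V,E_i)}(S)| + |\{s,t\}\cap S| - 1$. Combined with the rule's premise $|S\setminus\{s,t\}|\ge \max_i |N_{(V,E_i)}(S)|$, this forces $(S\setminus\{s,t\})\setminus V(P_i)\neq \emptyset$ for each $i$. Hence, whenever $v\in V(P_i)$ we may pick $w_i\in S\setminus V(P_i)\setminus\{s,t\}$, and the temporal twin property guarantees that substituting $v$ by $w_i$ keeps $P_i$ a valid $s$-$t$ path of the same length in snapshot $i$.

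The main obstacle is to choose the $(w_i)$'s so that no consecutive symmetric difference grows; the dangerous configuration is a transition $(i,i+1)$ with $v\in V(P_i)\cap V(P_{i+1})$, since the identity $V(P_i')\triangle V(P_{i+1}') = (V(P_i)\triangle V(P_{i+1}))\triangle\{w_i,w_{i+1}\}$ shows that distinct replacements both lying outside the opposite path would inflate the symmetric difference by $2$. I process each maximal block $[a,b]$ of snapshots containing $v$ greedily: pick $w_a \in S\setminus V(P_a)\setminus\{s,t\}$ arbitrarily, and for $i=a+1,\ldots,b$ set $w_i\ceq w_{i-1}$ whenever $w_{i-1}\notin V(P_i)$, otherwise pick any valid $w_i\in S\setminus V(P_i)\setminus\{s,t\}$. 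A short case analysis then shows the symmetric difference never grows: on the consistent branch the XOR cancels since $w_i=w_{i-1}$; on the fresh branch $w_{i-1}\in V(P_i)$ already lies in the original $V(P_{i-1})\triangle V(P_i)$, so the XOR removes it while $w_i$'s toggling adds at most one element. Transitions where $v$ lies in exactly one side are handled by the same identity (the removal of $v$ from the symmetric difference compensates for $w_i$), and transitions where $v$ lies in neither side are left untouched. The argument for \emsp{} proceeds analogously, substituting $\{v,w_i\}$ by the at-most-four edges introduced or removed when replacing $v$ by $w_i$ on $P_i$ and exploiting the twin property in every snapshot.

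For the running time, the temporal twin partition is obtained in $O(\tau\cdot|V|^2)$ time by~\cref{lem:computetemptwins}, and the quantities $|N_{(V,E_i)}(S)|$ for every class~$S$ and every snapshot~$i$ fit into the same budget. A single vertex deletion can affect the twin partition only by merging classes whose neighborhoods differed precisely in whether $v$ was present, so recomputing the partition from scratch after each deletion is safe. Because at most $|V|$ vertices are ever removed during exhaustive application, the total running time is $O(\tau\cdot|V|^3)$.
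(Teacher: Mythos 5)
Your argument for \vmsp{} is correct, and it takes a genuinely different route from the paper's. The paper fixes a solution in which the deleted vertex~$v$ occurs as late as possible, replaces $v$ by a \emph{single} twin $w$ in a prefix of the first block of snapshots containing~$v$ (stopping just before the first path that already contains~$w$), and derives a contradiction to the extremal choice; you instead construct the full replacement directly, block by block, with a per-snapshot greedy choice of~$w_i$. Both proofs hinge on the same two ingredients — the counting fact that an independent twin class meets a length-bounded path in at most $|N_{(V,E_i)}(S)|-1$ non-terminal vertices, so a free twin always exists, and the XOR identity for consecutive symmetric differences — and for the vertex measure your direct construction is arguably the cleaner of the two. (The only cosmetic issue is that $\triangle\{w_i,w_{i+1}\}$ should be read as $\{v,w_i\}\triangle\{v,w_{i+1}\}$, which is empty when $w_i=w_{i+1}$ rather than a singleton; your case analysis makes clear this is what you intend.)

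The gap is in the sentence claiming that \emsp{} "proceeds analogously". The compensation that makes your fresh branch work for vertices — $w_{i-1}\in V(P_i)\setminus V(P_{i-1})$ leaves the symmetric difference, paying for the arrival of $w_i$ — has no edge counterpart. Concretely, suppose $v$ has the same path-neighbors $x,y$ on $P_{i-1}$ and on $P_i$, that $w_{i-1}$ lies on $P_i$ between $p,q$ with $\{p,q\}\cap\{x,y\}=\emptyset$, and that $w_i\notin V(P_{i-1})$. Before the replacement, the only edges of $\symdif{E(P_{i-1})}{E(P_i)}$ incident to $\{v,w_{i-1},w_i\}$ are $\{w_{i-1},p\}$ and $\{w_{i-1},q\}$; afterwards the symmetric difference contains $\{w_{i-1},x\},\{w_{i-1},y\},\{w_{i-1},p\},\{w_{i-1},q\},\{w_i,x\},\{w_i,y\}$ — an increase of four, which is fatal when the original difference already equals~$\ell$. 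The old edges $\{w_{i-1},p\},\{w_{i-1},q\}$ do \emph{not} disappear (they are untouched in $P_i'$ and absent from $P_{i-1}'$, whose $w_{i-1}$-edges go to $x,y$), so nothing is removed to offset the additions, and no choice of $w_i$ alone repairs this. Closing the gap requires more than substituting $v$ by a twin in its own slot — for instance, additionally permuting which members of $S$ occupy which slots of the adjacent path so that $w_{i-1}$ inherits in $P_{i-1}'$ the very edges it already contributes to the symmetric difference. I note that the paper's own Case~2 for the edge measure implicitly assumes $N_{P_{r_s}}(v)=N_{P_{r_s+1}}(w)$ and is therefore terse on exactly the same point, but your write-up should not inherit that omission.
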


\appendixproof{lem:ttwinsrrcorrect}
{
\begin{proof}
		The reduction is clearly applicable in $O(\tau \cdot |V|^3)$ time.
 We prove its correctness.
 To this end, 
 let~$\TG$ and~$\TG'$ respectively denote the temporal graphs before and after application of~\cref{rr:deletetemptwins},
 and let~$S'\ceq S\setminus\{v,s,t\}$.
 Note that~$|S'|\geq\max_{1\leq i\leq \tau} |N_{(V,E_i)}(S')|-1$.
 Moreover,
 observe that due to~\cref{lem:computetemptwins}
 we can exhaustively apply \cref{rr:deletetemptwins} in polynomial time.
 We claim that~$I=(\TG,s,t,k,\ell)$ is a \yes-instance if and only if $I'=(\TG',s,t,k,\ell)$ is a \yes-instance.
 
 \LD{}
 As~$\TG'=\TG-v$,
 every sequence of~$s$-$t$ paths forming a solution for~$I'$ is also a solution to~$I$.
 
 \RD{}
 Let~$I$ be a \yes-instance, 
 and assume that every solution to~$I$ contains the vertex~$v$ (otherwise we are done).
 Let~$\calP=(P_1,\ldots,P_\tau)$ be a solution to~$I$ such that~$v$ appears latest in the sequence among all solutions.
 Let~$P_{r_1}$ be the first~$s$-$t$ path that contains~$v$,
 and let~$r_1,\ldots,r_p$ be a maximal sequence such that~$v\in V(P_{r_q})$ for each~$1\leq q\leq p$.
 Since~$|S\setminus\{s,t\}|\geq \max_{1\leq i\leq \tau} |N_{(V,E_i)}(S)|$ and~$S$ forms an independent set,
 there is a vertex~$w\in S'$ such that~$w\not\in V(P_r)$.
 We claim that ``replacing''~$v$ by~$w$ in~$P_{r_1},\dots,P_{r_p}$ forms a solution to~$I$ where~$v$ appears later than in~$\calP$,
 yielding a contradiction.
 Let~$r_s>r_1$ denote the smallest index such that~$w\in V(P_{r_{s}+1})$, or~$r_s=r_p$ if no such index exists.
 For all~$1\leq q\leq s$,
 let~$P_{r_q}'$ be the $s$-$t$ path with~$V(P_{r_q}')=(V(P_{r_q})\setminus\{v\})\cup\{w\}$ and~$E(P_{r_q}')=(E(P_{r_q})\setminus\{\{v,u\}\mid u\in N_{P_{r_q}}(v)\})\cup\{\{w,u\}\mid u\in N_{P_{r_q}}(v)\}$.
 For each~$i\in\set{\tau}\setminus\{r_1,\dots,r_s\}$,
 we set~$P_i'=P_i$.
 Observe that~$|V(P_{r_q}')|=|V(P_{r_q})|$ and~$|E(P_{r_q}')|=|E(P_{r_q})|$.
 Moreover,
 for all~$1\leq q<r_s$
 we have that~$|\symdif{V(P_{r_q}')}{V(P_{r_{q+1}}')}|=|\symdif{V(P_{r_q})}{V(P_{r_{q+1}})}|$ and~$|\symdif{E(P_{r_q}')}{E(P_{r_{q+1}}')}|=|\symdif{E(P_{r_q})}{E(P_{r_{q+1}})}|$.
 If~$r_1>1$,
 then it also holds true that~$|\symdif{V(P_{r_1-1}')}{V(P_{r_{1}}')}|=|\symdif{V(P_{r_1-1})}{V(P_{r_{1}})}|$ and~$|\symdif{E(P_{r_1-1}')}{E(P_{r_{1}}')}|=|\symdif{E(P_{r_1-1})}{E(P_{r_{1}})}|$.
 Finally,
 we consider the case of~$r_s<\tau$, 
 the cases herein whether or not~$w\in V(P_{r_{s}+1})$.
 
 \emph{Case 1: $w\not\in V(P_{r_{s}+1})$, $r_s\leq r_q$.}
 Then for the vertices we have that~$\symdif{V(P_{r_s}')}{V(P_{r_{s}+1}')}=((\symdif{V(P_{r_s})}{V(P_{r_{s}+1})})\setminus\{v\})\cup \{w\}$.
 For the edges,
 we have that 
 \begin{align*}
    \symdif{E(P_{r_s}')}{E(P_{r_{s}+1}')} &= ((\symdif{E(P_{r_s})}{E(P_{r_{s}+1})})\setminus\{\{v,u\}\mid u\in N_{P_{r_s+1}}(v)\}) \\ &\qquad \cup \{\{w,u\}\mid u\in N_{P_{r_s}'}(w)\}.
 \end{align*}

 \emph{Case 2: $w\in V(P_{r_{s}+1})$, $r_s< r_q$.}
 Then for the vertices we have that~$\symdif{V(P_{r_s}')}{V(P_{r_{s}+1}')}=((\symdif{V(P_{r_s})}{V(P_{r_{s}+1})})\setminus\{w\})\cup \{v\}$.
 For the edges,
 we have that 
 \begin{align*}
    \symdif{E(P_{r_s}')}{E(P_{r_{s}+1}')} &= (\symdif{E(P_{r_s})}{E(P_{r_{s}+1})}\setminus\{\{w,u\}\mid u\in N_{P_{r_s+1}}(w)\}) \\ &\qquad\cup \{\{v,u\}\mid u\in N_{P_{r_s+1}}(v)\}.
 \end{align*}
 
 \emph{Case 3: $w\in V(P_{r_{s}+1})$, $r_s= r_q$.}
 Then for the vertices we have that~$\symdif{V(P_{r_s}')}{V(P_{r_{s}+1}')}=(\symdif{V(P_{r_s})}{V(P_{r_{s}+1})})\setminus(\{w\}\cup \{v\})$.
 For the edges,
 we have that~
 \begin{align*}\symdif{E(P_{r_s}')}{E(P_{r_{s}+1}')}&=\big((\symdif{E(P_{r_s})}{E(P_{r_{s}+1})})  \setminus \\&\qquad(\{\{v,u\}\mid u\in N_{P_{r_s}}(v)\}\cup \{\{w,u\}\mid u\in N_{P_{r_s+1}}(w)\})\big) \\ &\qquad \cup (\{\{w,u\}\mid u\in N_{P_{r_s}'}(w)\}\cup \{\{w,u\}\mid u\in N_{P_{r_s+1}}(w)\}).
 \end{align*}
 
 Hence,
 in either case we have that the sizes of the symmetric differences both for vertex and edge sets are not increased.
 It follows that~$\calP'=(P_1',\dots,P_\tau')$  is a solution in which~$v$ appears later than in~$\calP$,
 contradicting the choice of~$\calP$.
\end{proof}
}

\begin{proof}[Proof of~\cref{thm:symdifexpkernelnutau}]
 First,
 in~$\ug{\TG}$ compute (via a maximal matching) a vertex cover~$X$ of size at most~$2\ug{\nu}$ in linear time.
 Let~$V=(X,Y)$,
 where~$Y=V\setminus X$ is an independent set.
 Next,
 compute all temporal twin classes of~$Y$ in polynomial time (\cref{lem:computetemptwins}).
 Apply~\cref{rr:deletetemptwins} exhaustively on every temporal twin class.
 Due to~\cref{lem:ttwinsrrcorrect},
 this returns an equivalent instance in polynomial time where every temporal twin class contains at most~$|X|+1$ vertices.
 Due to~\cref{obs:numberofttwinclasses},
 there are at most~$2^{|X|\tau}$ many temporal twin classes.
 In total,
 the obtained temporal graph contains at most~$|X|+2^{|X|\tau}(|X|+1)$ vertices and~$\tau$ snapshots.
\end{proof}

\subsection{Lower bounds on kernelization \texorpdfstring{regarding~\boldmath$n$ and~$\ug{\nu}+\tau$}{}}
\label{sec:incompr}
\appendixsection{sec:incompr}

We know that~relaxing~$n$ to~$\ug{\nu}$ in~$n+\tau$ allows for polynomial and single-exponential kernelization for dissimilarity and similarity, respectively.
We know that dropping~$n$ is not possible (\cref{prop:vmspvcwhard}).
In this section,
we prove that,
unless \NPincoNPslashpoly,
dropping~$\tau$ is not possible.

\ifarxiv{}
\begin{theorem}
		\else{}
\begin{theorem}[\appref{thm:empvincompr}]
\fi{}
 \label{thm:empvincompr}
 Unless~$\NPincoNPslashpoly$, 
 none of \emsp{}, \vmsp{}, \eimsp{}, and \vimsp{} admits a problem kernel of size polynomial in~$n$.
\end{theorem}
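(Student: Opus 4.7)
The plan is to apply the cross-composition framework (extended to AND-cross-compositions via Drucker's AND-distillation theorem): exhibiting, for each of the four problems, an AND-cross-composition from an NP-hard source problem whose output has~$n$ polynomially bounded in the size of a single input rules out a polynomial kernel in~$n$ unless $\NPincoNPslashpoly$. Since Propositions~\ref{prop:edgetovertex} and~\ref{prop:edgetovertexintersect} are polynomial parameter transformations from \emsp{} to \vmsp{} and from \eimsp{} to \vimsp{} with~$n'$ polynomial in~$n$ (concretely $O(n^3)$ and $O(n^2)$, respectively), it suffices to establish the no-polynomial-kernel result for \emsp{} and \eimsp{}: the vertex-variants then inherit the result by composing the cross-composition with the respective PPT and applying a hypothetical polynomial kernel to the vertex-variant.

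For \eimsp{} I would AND-cross-compose from \prob{3-SAT}, declaring two \prob{3-SAT} formulas equivalent iff they share the triple $(n,m,d)$ of variables, clauses, and maximum number of literal occurrences; this is a polynomial equivalence relation on which Construction~\ref{constr:emspnphard} produces temporal graphs on an identical vertex set~$V^*$ of size $O(dn)$. Given $t$ equivalent formulas $C_1,\dots,C_t$, I concatenate the $t$ pairs of snapshots on the shared~$V^*$, inserting between each pair of consecutive sub-instances a \emph{buffer snapshot} whose only edges are $\{s,b\}$ and $\{b,t\}$ on one additional dedicated vertex $b \notin V^*$. The unique $s$-$t$ path $s\text{-}b\text{-}t$ in every buffer snapshot is edge-disjoint (and its only internal vertex~$b$ is vertex-disjoint) from every instance path, so the $\ell=0$ edge-intersection constraint at every boundary is trivially satisfied. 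The composed temporal graph has $|V^*|+1 = O(n^2)$ vertices, polynomial in a single input, and is a YES-instance iff every~$C_i$ is satisfiable, yielding the desired AND-cross-composition.

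For \emsp{} the analogous approach is more delicate because Construction~\ref{constr:emspnphard}'s distance budget $\ell = 5n+2dn+4$ is tight for its backward direction, whereas the cross-instance edge symmetric difference $|E(P_2^{(i)}) \triangle E(P_1^{(i+1)})|$ may reach $2(k-1)$. I would bridge each boundary by inserting a short chain of at most two intermediate buffer snapshots, each containing enough edges on~$V^*$ so that a valid intermediate $s$-$t$ path exists at edit distance at most~$\ell$ from each of its neighbours; concretely, the buffer edge set can be taken as a subset of $E(P_2^{(i)}) \cup E(P_1^{(i+1)})$ augmented by a few scaffolding edges on~$V^*$ ensuring that a suitable ``midpoint'' path is present. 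No fresh vertices are needed, so the output still has $O(n^2)$ vertices.

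The main obstacle will be the precise design of the buffer gadgets for \emsp{}: one must verify both that every intended intermediate path is available in its buffer snapshot (so the forward direction is preserved) and that no alternative intermediate path permits a ``shortcut'' that decouples the sub-solutions from the individual sub-instance Constructions~\ref{constr:emspnphard}'s backward direction. A secondary but crucial bookkeeping step is to confirm that the vertex set~$V^*$ produced by Construction~\ref{constr:emspnphard} depends only on $(n,d)$, which is needed for the polynomial equivalence relation to yield a shared vertex set across equivalent inputs.
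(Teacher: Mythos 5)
Your overall strategy---AND-cross-compositions for the two edge variants, followed by the polynomial parameter transformations of \cref{prop:edgetovertex,prop:edgetovertexintersect} to cover the vertex variants---matches the paper, and your \eimsp{} composition is essentially the one given there (the paper uses a single buffer snapshot consisting of the lone edge $\{s,t\}$ instead of a fresh vertex $b$, and composes from \eimsp{} with $\ell=0$ rather than from \prob{3-SAT}, but these differences are immaterial).

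The genuine gap is in the \emsp{} composition. You define the buffer snapshots as ``a subset of $E(P_2^{(i)})\cup E(P_1^{(i+1)})$ augmented by a few scaffolding edges,'' but $P_2^{(i)}$ and $P_1^{(i+1)}$ are solution paths: they are not available to the deterministic polynomial-time composition algorithm, which must be constructed before anyone knows whether the inputs are satisfiable, let alone which of the exponentially many candidate $s$-$t$ paths a solution would use. A buffer tailored to one choice of boundary paths need not contain a valid midpoint for another, and making the buffer rich enough to serve \emph{all} choices---while simultaneously ruling out shortcut paths that would break the backward direction of \cref{constr:emspnphard}---is exactly the part you leave open; a chain of ``at most two'' intermediate snapshots also gives essentially no slack for this. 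The paper resolves the boundary problem differently: between consecutive sub-instances it inserts a chain of $k'$ transition snapshots, each a \emph{complete graph} on the whole vertex set, and shows (\cref{obs:tranferthroughclique}) that in a clique any $s$-$t$ path on at most $k'$ vertices can be morphed into any other through a sequence of intermediate $s$-$t$ paths with consecutive symmetric edge difference at most~$4$. This works uniformly for every pair of boundary paths, needs no vertices beyond $s$ and $t$, and lets the entire composition run at the constant budget $\ell=4$ (composing from \emsp{} restricted to $\ell=4$, which is NP-hard by \cref{thm:w1hard}) rather than the instance-dependent budget $5n+2dn+4$ you inherit from the \prob{3-SAT} reduction. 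Until you supply a buffer gadget with these properties, the \emsp{} (and hence \vmsp{}) half of the theorem is not established.
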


\toappendix
{
  For proving that kernels of polynomial size are unlikely to exists,
  we use the cross-composition framework of~Bodlaender~\etal~\cite{BodlaenderJK14}.
  The framework,
  like the original framework~\cite{BodlaenderDFH09,FortnowS11},
  bases upon the complexity-theoretic assumption that the polynomial time hierarchy does not collapse to its third level,
  which implies that~\NPnotincoNPslashpoly{}~\cite{Yap83}.
  The central notions of the framework are OR- and AND-cross-compositions,
  which require the notion of polynomial equivalence relations~\cite{BodlaenderJK14}:
  we call~$\calR$ a polynomial equivalence relation on~$\Sigma^*$ if we can decide in polynomial time whether any two~$x,y\in\Sigma^*$ are~$\calR$-equivalent,
  and the number of equivalence classes in any finite set~$S\subseteq \Sigma^*$ is in~$(\max_{x\in S}|x|)^{O(1)}$.

  \begin{definition}[\cite{BodlaenderJK14}]
  Given an \NP-hard problem $L \subseteq \Sigma^*$, 
  a parameterized problem $P\subseteq  \Sigma^* \times \N$, 
  and a polynomial equivalence relation~$\calR$ on the instances of L, 
  an OR-cross-composition of~$L$ into $P$ (with respect to~$\calR$) is an algorithm
  that takes $p$ $\calR$-equivalent
  instances $x_1 \ldots,x_p$ of $L$ and constructs in time~$(\sum_{i=1}^p |x_i|)^{O(1)}$ 
  an instance $(x, k)$ of~$P$ such that
  \begin{inparaenum}[(i)]
      \item $k\in (\max_{1\leq i\leq p} |x_i | + \tlog{p})^{O(1)}$ and
      \item $(x, k) \in P \iff x_i \in L$ for at least one $i\in\set{p}$.
  \end{inparaenum}
  An AND-cross-composition is an OR-cross-composition where (ii) is replaced by
  $(x, k) \in P \iff x_i \in L$ for all~$i\in\set{p}$.
  \end{definition}

  \noindent
  The connection is now the following:
  If a parameterized problem admits an OR-cross-composition (or AND-cross-composition)
  and a polynomial problem kernelization,
  then~$\NPincoNPslashpoly$ and the polynomial hierarchy collapses to its third level~\cite{BodlaenderJK14,Drucker15}.
}

\appendixproof{thm:empvincompr}
{
We call two instances~$I=(\TG,s,t,k,\ell), I'=(\TG',s',t',k',\ell')$ $\calR$-equivalent if~$|V(\TG)|=|V(\TG')|$,
$\tau(\TG)=\tau(\TG')$,
$k=k'$, 
and~$\ell=\ell'$.

\begin{proposition}
 \label{prop:emspcrocon}
 There is an algorithm that given~$p$ $\calR$-equivalent instances~$I_1,\dots,I_p$ of \emsp{},
 computes in polynomial time an instance~$I$ of \emsp{} such that~$n\in (\max_{1\leq q\leq p}|V_q|)^{O(1)}$ and 
 $I$ is a \yes-instance if and only if each of~$I_1,\dots,I_p$ is a \yes-instance.
\end{proposition}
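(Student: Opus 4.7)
The plan is to prove the proposition by designing an AND-cross-composition from \emsp{} into itself parameterized by~$n$. We take $\calR$ to be the equivalence relation that groups two \emsp{}-instances whenever they agree on the number of vertices, the lifetime~$\tau$, the path-length bound~$k$, and the budget~$\ell$; this is clearly a polynomial equivalence relation. Given $\calR$-equivalent instances $I_1,\ldots,I_p$ with common parameters $(n_0,\tau_0,k,\ell)$, we may, by renaming vertices, assume that all~$I_q$ share the same vertex set~$V$, source~$s$, and sink~$t$.

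We distinguish two cases. If $\ell\leq 2$, then any two consecutive $s$-$t$ paths in a solution must be identical, since the edge symmetric difference of two distinct simple $s$-$t$ paths is always at least~$3$. Hence each $I_q$ reduces to deciding whether $\bigcap_{i=1}^{\tau_0} E_i^q$ contains an $s$-$t$ path of length at most~$k-1$, which we solve in polynomial time, and we output a trivial single-snapshot \emsp{}-instance that is a~\yes-instance if and only if every~$I_q$ is a~\yes-instance. From now on assume $\ell\geq 3$. We construct the combined instance $I=(\TG,s,t,k,\ell)$ on vertex set~$V$, so that $|V(\TG)|=n_0\in(\max_q |V_q|)^{O(1)}$. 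The lifetime of~$\TG$ consists of $p$ \emph{blocks} separated by $p-1$ \emph{separator sequences}: the $q$-th block is the sequence of snapshots $E_1^q,\ldots,E_{\tau_0}^q$ of~$I_q$, and each separator sequence comprises $m\ceq 2k-4$ consecutive copies of the complete graph on~$V$, so that every $s$-$t$ path of length at most~$k-1$ is available in each separator snapshot.

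Correctness is established by two directions. For the backward direction, restricting any solution of~$I$ to a single block yields an $s$-$t$-path sequence in the snapshots of~$I_q$ whose consecutive edge symmetric differences are at most~$\ell$, hence a solution for~$I_q$. For the forward direction, given a solution $(P_1^q,\ldots,P_{\tau_0}^q)$ of each~$I_q$, we concatenate these sequences and fill each separator by interpolating between $P_{\tau_0}^q$ and $P_1^{q+1}$ as follows: starting from the current path $(s=v_0,v_1,\ldots,v_{r-1},v_r=t)$, we repeatedly \emph{shorten} it by replacing the last two edges $\{v_{r-2},v_{r-1}\}$ and $\{v_{r-1},v_r\}$ with the single shortcut edge $\{v_{r-2},v_r\}$ (available in the complete graph), which alters the edge set by exactly~$3$ edges and is thus admissible because $\ell\geq 3$. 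We iterate until we reach the direct-edge path~$(s,t)$, and then reverse the procedure, \emph{expanding} step by step toward $P_1^{q+1}$. Since both endpoint paths have at most~$k-1$ edges, at most $2k-4$ intermediate paths are needed, matching our choice of~$m$.

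The main obstacle is designing a separator that genuinely \emph{decouples} consecutive blocks while respecting the tight edge budget~$\ell$. The complete-graph separator achieves this cleanly: a separator path may match either flanking block path at zero cost, so the last path of one block and the first path of the next remain fully independent, and every nontrivial transition occurs inside a separator sequence where the above shortcut interpolation works. The degenerate small-$\ell$ regime, where no nontrivial interpolation is possible, is handled by the polynomial-time preprocessing step above. Overall the construction runs in polynomial time and yields the desired AND-cross-composition with $|V(\TG)|= n_0 \leq \max_q |V_q|$.
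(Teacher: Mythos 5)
Your proposal is correct and follows essentially the same route as the paper: an AND-cross-composition that concatenates the $p$ instances in time and inserts buffer phases of complete-graph snapshots through which one interpolates between arbitrary short $s$-$t$ paths with small per-step edge symmetric difference (the paper's observation on transferring through a clique). The only deviations are cosmetic or mildly strengthening: the paper attaches fresh terminals $s,t$ (setting $k'=k+2$) instead of renaming vertices, and its interpolation swaps one internal vertex at a time at cost at most $4$ under the assumption $\ell=4$, whereas your shorten-then-expand interpolation costs exactly $3$ per step and, combined with your correct observation that for $\ell\le 2$ consecutive paths must coincide (the symmetric difference of two distinct $s$-$t$ paths is a nonempty even graph, hence has at least three edges), covers every value of $\ell$ rather than only the $\ell=4$ case treated in the paper's written proof.
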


\begin{construction*}
 \label{constr:emspcrocon}
 Let $I_1=(\TG_1=(V,E_1^1,\dots E_\tau^1),s_1,t_1,k,\ell),\dots,I_p=(\TG_1=(V,E_1^p,\dots E_\tau^p),\allowbreak s_p,t_p,k,\ell)$ be~$p$ $\calR$-equivalent instances of~\emsp{}.
 We construct an instance~$I=(\TG',s,t,k',\ell)$
 with~$\TG'=(V',E_1,\dots,E_{\tau'})$ and~$k'=k+2$ as follows.
 Let~$V'=\{s,t\}\cup V$ with two new distinct vertices $s$ and~$t$.
 Let~$E_{trans}=\{\{v,w\}\mid v,w\in V'\}$, 
 that is,
 $E_{trans}$ describes the edge set of a clique on~$V'$.
 Next,
 let~$\vmod{E}_r^q=E_r^q\cup \{\{s,s_q\},\{t,t_q\}\}$ for every~$r\in\set{\tau}$ and~$q\in\set{p}$.
 For~$1\leq q\leq p$ and~$1\leq j\leq \tau+k'$,
 we set~$E_{(q-1)(\tau+k')+j}=\vmod{E}_j^{q}$ if~$j\leq \tau$,
 and~$E_{(q-1)(\tau+k')+j}=E_{trans}$ if~$j> \tau$.
 This finishes the construction.
 Note that the construction is computable in polynomial time.
\end{construction*}

\begin{observation}
 \label{obs:tranferthroughclique}
 Let~$G$ be a clique with two distinct vertices~$s,t$,
 and let~$P,P'$ be two $s$-$t$ paths each with at most~$k\in \N$ vertices.
 Then there is a sequence~$(P=P_1,\ldots,P_k=P')$ of~$k$ $s$-$t$~paths each with at most~$k-1$ vertices, such that~$|\symdif{E(P_i)}{E(P_{i+1})}|\leq 4$ for all~$i\in\set{k-1}$ computable in polynomial time.
\end{observation}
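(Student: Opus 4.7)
The plan is to construct the sequence $(P = P_1, \ldots, P_k = P')$ explicitly via single-vertex modifications. Since $G$ is a clique, every sequence of distinct vertices from $s$ to $t$ is a valid $s$-$t$ path, so three elementary operations are always available on a current path: inserting a vertex at some position (symmetric difference $3$), deleting an internal vertex (symmetric difference $3$), and substituting an internal vertex by another vertex at the same position (symmetric difference $4$). All three operations stay within the allowed bound of $4$.

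Writing $p \ceq |V(P)|$ and $q \ceq |V(P')|$ (both at most $k$), I would pair up each internal vertex of $P$ not contained in $V(P')$ with an internal vertex of $P'$ not contained in $V(P)$, and for each such pair perform one substitution. The $|p-q|$ leftover vertices (either deletions from $P$ or insertions from $P'$) are then handled one at a time. The total number of transitions is at most $\max(p,q)-2 \leq k-2$, which gives a sequence of at most $k-1$ paths; the sequence is then padded to length exactly $k$ by repeating the last path. Polynomial-time computability is immediate from the description.

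The main technical obstacle is handling common internal vertices that appear in different positions in $P$ and $P'$: a direct substitution at the current position could introduce a vertex already present elsewhere in the path, violating simplicity. I would resolve this by first deleting one conflicting vertex in the very first step (this also ensures every intermediate path has at most $k-1$ vertices, as required) and re-inserting it at its target position in a later step, and by using the cheap reorderings available inside a clique. In particular, reversing the order of all internal vertices of a path has symmetric difference only $4$, because each internal edge $\{w_i,w_{i+1}\}$ is unchanged as an unordered pair; similarly, adjacent swaps have symmetric difference $4$. Combining these with substitutions and a case distinction on $|V(P) \cap V(P')|$ and the induced permutation of the common internal vertices shows that the total operation count remains within $k-1$, so the claimed sequence exists.
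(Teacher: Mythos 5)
Your overall strategy is the same as the paper's: morph $P$ into $P'$ by local modifications of edge-symmetric-difference at most $4$, spend roughly one step per internal vertex (so at most $k-2$ transitions), and pad the sequence to length $k$. The paper realizes this as a position-wise prefix replacement $P_i=(s,v_1',\dots,v_{i-1}',v_i,\dots,v_x,t)$ followed by appending or trimming trailing vertices; your pairing of $V(P)\setminus V(P')$ with $V(P')\setminus V(P)$ plus single insertions/deletions is essentially the same computation with different bookkeeping, and your costs ($3$ for insert/delete, $4$ for substitution, $4$ for a reversal) are all correct. One remark on the statement itself: the bound ``at most $k-1$ vertices'' cannot hold for $P_1=P$ and $P_k=P'$ and is evidently a typo for $k$ (this is how the observation is invoked in the proof of \cref{prop:emspcrocon}), so you need not contort the construction to shave a vertex off the intermediate paths.

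The genuine gap is in your third paragraph. You correctly identify that common internal vertices occupying different positions in $P$ and $P'$ break the naive substitution scheme (an intermediate ``path'' may repeat a vertex), but your repair is asserted rather than proved. Deleting one conflicting vertex up front and reinserting it later handles a single conflict, whereas there can be up to $k-3$ of them; realizing the induced permutation of the common vertices by adjacent swaps costs up to $\Theta(k^2)$ transitions; and even segment reversals (at most $m-1$ to sort $m$ elements, each of cost $4$) add $\Omega(k)$ extra steps on top of the $\max(p,q)-2$ you have already spent, so ``the total operation count remains within $k-1$'' does not follow from anything you wrote. In fairness, the paper's own proof is no more careful on exactly this point: its intermediate paths $P_i$ need not be simple for the same reason, and the issue is not even flagged there. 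So you have correctly located the weak spot of the published argument, but you have not closed it; a complete proof needs an explicit accounting that interleaves the reordering of common vertices with the exchange of the others inside a total budget of $k-1$ moves (or one observes that in the only place the observation is used, \cref{constr:emspcrocon}, the number of transition snapshots is a free design parameter, so any $O(k)$ bound would do).
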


\begin{proof}
 Let~$P=(s,v_1,\ldots,v_x,t)$ and~$P'=(s,v_1',\ldots,v_{x'}',t)$.
 We consider two cases:
 
 \emph{Case 1: $x\leq x'$.}
 Set~$P_i=(s,v_1',\dots,v_{i-1}',v_i,\dots,v_x,t)$ for every~$2\leq i\leq x$.
 Note that~$|\symdif{E(P_i)}{E(P_{i+1})}|\leq 4$ as we switch two vertices yielding four edges.
 If~$x=x'$, then~$P_x=P'$.
 Otherwise,
 for~$1\leq i\leq x'-x$,
 let~$P_{x+i}=(s,v_1',\dots,v_{x}',v_{x+1}',\dots,v_{x+i}',t)$.
 Note that~$|\symdif{E(P_{x+i})}{E(P_{x+i+1})}|\leq 4$ as we replace the edge~$\{v_{x+i}',t\}$ by the edges~$\{v_{x+i}',v_{x+i+1}'\}$ and $\{v_{x+i+1}',t\}$.
 
 \emph{Case 2: $x>x'$.}
 Set~$P_i=(s,v_1',\dots,v_{i-1}',v_i,\dots,v_x,t)$ for every~$2\leq i\leq x'$.
 Note that~$|\symdif{E(P_i)}{E(P_{i+1})}|\leq 4$ as we switch two vertices yielding four edges.
 For~$1\leq i\leq x-x'$,
 let~$P_{x+i}=(s,v_1',\dots,v_{x'}',v_{x+1},\dots,v_{x-i},t)$.
 Note that~$|\symdif{E(P_{x+i})}{E(P_{x+i+1})}|\leq 4$ as we replace the edges~$\{v_{x-i},v_{x-i-1}\}$ and $\{v_{x-i},t\}$ by the edge~$\{v_{x-i-1},t\}$.
 
 Finally, 
 if~$r=\max\{x,x'\}<k$, then
 pad the path~$P_r$~$k-r$ times (note that since the paths are identical, their symmetric difference is zero).
 The sequence is computable in polynomial time.
\end{proof}

\begin{proof}[Proof of~\cref{prop:emspcrocon}]
 Let $I_1=(\TG_1,s_1,t_1,k,\ell),\dots,I_p=(\TG_p,s_p,t_p,k,\ell)$ be~$p$ $\calR$-equiv\-a\-lent instances of~\emsp{} with $\TG_q=(V,E_1^q,\dots E_\tau^q)$ for every~$q\in\set{p}$ and~$\ell=4$,
 and let~$I=(\TG',s,t,k',\ell)$
 with~$\TG'=(V',E_1,\dots,E_{\tau'})$ and~$k'=k+2$ be the instance obtained from~$I_1,\dots,I_p$ using~\cref{constr:emspcrocon}.
 Note that~$|V(\TG')|=|V|+2$
 We claim that $I$~is a \yes-instance if and only if each of~$I_1,\dots,I_p$ is a \yes-instance.
 
 \RD{}
 Let~$(P_1,\ldots,P_{\tau'})$ be a solution to~$I$.
 For~$1\leq q\leq p$ and~$1\leq j\leq \tau$,
 we define $P_j^{q}=P_{(q-1)(\tau+k')+j}-\{s,t\}$ as the path obtained from~$P_{(q-1)(\tau+k')+j}$ when deleting $s$ and~$t$.
 with vertex set~$V(P_{(q-1)(\tau+k')+j})\setminus\{s,t\}$ and edge set~$E(P_{(q-1)(\tau+k')+j})\setminus\{\{s,s_q\},\{t,t_q\}\}$.
 We claim that for each~$1\leq q\leq p$,
 $(P_1^{q},\dots,P_\tau^{q})$ is a solution for~$I_q$.
 First note that for every~$j\in\set{\tau}$,
 $P_j^{q}$ is an~$s_q$-$t_q$ path in~$(V,E_j^q)$ and
 $|V(P_j^{q})|=|V(P_{(q-1)(\tau+k')+1})\setminus\{s,t\}|\leq k'-2=k$.
 Moreover,
 for every~$j\in\set{\tau-1}$,
 $|\symdif{E(P_j^{q})}{E(P_{j+1}^{q})}|= |\symdif{E(P_{(q-1)(\tau+k')+j})}{E(P_{(q-1)(\tau+k')+j+1})}|\leq \ell$ (recall that~$s$ is only adjacent with~$s_q$ and~$t$ is only adjacent with~$t_q$).
 Hence,
 the claim follows.
 
 \LD{}
 Let~$(P_1^q,\dots,P_\tau^q)$ be a solution for~$I_q$ for every~$q\in\set{p}$.
 For each~$q\in\set{p}$ and each~$i\in\set{\tau}$,
 let~$\vmod{P}_i^q$ be the path obtained from~$P_i^q$ with~$V(\vmod{P}_i^q)=V(P_i^q)\cup\{s,t\}$ and~$E(\vmod{P}_i^q)=E(P_i^q)\cup\{\{s,s_q\},\{t_q,t\}\}$.
 Note that~$\vmod{P}_i^q$ is an~$s$-$t$ path and~$|V(\vmod{P}_i^q)|=|V(P_i^q)|+2\leq k'$,
 and~$|\symdif{E(\vmod{P}_i^q)}{E(\vmod{P}_{i+1}^q)}|=|\symdif{E(P_i^q)}{E(P_{i+1}^q)}|\leq \ell$.
 Due to~\cref{obs:tranferthroughclique},
 for each~$q\in\set{p-1}$,
 we can compute for~$\vmod{P}_\tau^q$ and~$\vmod{P}_1^{q+1}$ a sequence~$(\vmod{P}_\tau^q=P^{q,q+1}_1,\dots,P^{q,q+1}_{k'}=\vmod{P}_1^{q+1})$ of~$k'$ $s$-$t$ paths such that
 each path has at most~$k'$ vertices and~$|\symdif{E(P^{q,q+1}_i)}{E(P^{q,q+1}_{i+1})}|\leq 4=\ell$ for all~$i\in\set{k'-1}$.
 Next we construct the path sequence~$\calP=(P_1,\dots,P_{\tau'})$.
 For each~$q\in\set{p}$,
 we set~$P_{(q-1)(\tau+k')+j}=\vmod{P}_j^q$ for~$1\leq j\leq \tau$,
 and we set~$P_{(q-1)(\tau+k')+\tau+j}=P^{q,q+1}_j$ for~$1\leq j\leq k'$.
 Clearly,
 $|\symdif{E(P_{(q-1)(\tau+k')+\tau})}{E(P_{(q-1)(\tau+k')+\tau+1})}|=|\symdif{E(P_{(q-1)(\tau+k')+\tau+k'})}{E(P_{q(\tau+k')+1})}|=0$ by construction for every~$q\in\set{p}$.
 It follows that
 for every~$i\in\set{\tau'}$, 
 $P_i$ is an~$s$-$t$ path with at most~$k'$ vertices,
 and for every~$i\in\set{\tau'-1}$,
 it holds true that
 $|\symdif{E(P_i)}{E(P_{i+1})}|\leq \ell$.
 Hence,
 $\calP$ is a solution to~$I$,
 and the claim follows.
\end{proof}

\begin{proposition}
 \label{prop:eimspcrocon}
 There is an algorithm that given~$p$ $\calR$-equivalent instances~$I_1,\dots,I_p$ of \eimsp{},
 computes in polynomial time an instance~$I$ of \eimsp{} such that~$n\in (|V_1|)^{O(1)}$ and 
 $I$ is a \yes-instance if and only if each of~$I_1,\dots,I_p$ is a \yes-instance.
\end{proposition}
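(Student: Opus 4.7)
The plan is to build an AND-cross-composition very similar in spirit to the one in \cref{prop:emspcrocon} for \emsp{}, but exploiting a crucial simplification afforded by the dissimilarity objective: consecutive paths only need to \emph{share few edges}, so between two sub-instances I can insert a \emph{single} buffer snapshot realising an $s$-$t$ path that is edge-disjoint from every in-instance path, rather than the length-$k'$ clique transition used in the similarity case.

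Concretely, given $\calR$-equivalent instances $I_q=(\TG_q=(V,E_1^q,\ldots,E_\tau^q),s_q,t_q,k,\ell)$ for $q\in\set{p}$, I would introduce three fresh vertices~$s,t,a$ and set $V'\ceq V\cup\{s,t,a\}$. For each $q\in\set{p}$ and $j\in\set{\tau}$, let $E_{(q-1)(\tau+1)+j}\ceq E_j^q\cup \{\{s,s_q\},\{t,t_q\}\}$, and for each $q\in\set{p-1}$ insert the transition snapshot $E_{(q-1)(\tau+1)+\tau+1}\ceq \{\{s,a\},\{a,t\}\}$. The output \eimsp{} instance is $I\ceq (\TG',s,t,k',\ell')$ with $k'\ceq k+2$ and $\ell'\ceq \ell+2$, where $\TG'$ is the resulting temporal graph. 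Clearly $|V'|=|V|+3\in O(|V_1|)$ and the whole construction runs in polynomial time.

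Correctness splits into the usual two directions. For the ``if'' direction, extend each $P_j^q$ from a solution of~$I_q$ to an $s$-$t$ path $\widehat P_j^q$ by prepending edge $\{s,s_q\}$ and appending edge $\{t_q,t\}$, use the transition path $(s,a,t)$ in every buffer snapshot, and concatenate. Each $\widehat P_j^q$ has at most $k+2=k'$ vertices; two consecutive extended paths from the same instance share exactly the two pinning edges $\{s,s_q\},\{t,t_q\}$ beyond the original intersection, so their edge intersection is at most~$\ell+2=\ell'$; and a transition path shares no edges with its neighbours because $a\notin V$ and $a\neq s_q,t_q$ for any~$q$. For the ``only if'' direction, observe that in the snapshot labelled $(q-1)(\tau+1)+j$ the only edge at~$s$ is $\{s,s_q\}$ and the only edge at~$t$ is $\{t,t_q\}$, so any $s$-$t$ path there must begin $s\to s_q$ and end $t_q\to t$. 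Stripping these endpoints yields an $s_q$-$t_q$ path of length at most $k'-2=k$ in $(V,E_j^q)$, and its edge intersection with its successor is exactly the composed-instance intersection minus the two pinning edges, hence at most~$\ell'-2=\ell$.

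The main subtlety, and essentially the only nontrivial step, is the off-by-two bookkeeping around the pinning edges $\{s,s_q\},\{t,t_q\}$: these mandatorily appear in every extended in-instance path, inflating the similarity budget by exactly~$2$, and I must check that the single edge-disjoint transition snapshot still fits within the tighter budget~$\ell'$ (which holds because its two edges touch the fresh vertex~$a$, absent from everywhere else). With this calibration the cross-composition produces, in polynomial time, an instance on $O(|V_1|)$ vertices that is a \yes-instance precisely when all inputs are \yes-instances.
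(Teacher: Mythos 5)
Your proposal is correct and follows essentially the same route as the paper: pin each block's paths to $s$ and $t$ via the edges $\{s,s_q\}$ and $\{t,t_q\}$, set $k'=k+2$, and separate consecutive blocks by a single buffer snapshot whose $s$-$t$ path is edge-disjoint from everything else (the paper uses the edge $\{s,t\}$ directly rather than a fresh vertex $a$, which is immaterial). The one substantive difference is your budget $\ell'=\ell+2$: the paper keeps $\ell'=\ell$ (composing from instances with $\ell=0$) and asserts $|E(\widehat{P}^q_j)\cap E(\widehat{P}^q_{j+1})|=|E(P^q_j)\cap E(P^q_{j+1})|$, overlooking that both extended paths necessarily contain the two pinning edges; your off-by-two calibration is the correct bookkeeping and is in fact what is needed for the composition to go through.
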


\begin{construction*}
 \label{constr:eimspcrocon}
 Let $I_1=(\TG_1,s_1,t_1,k,\ell),\dots,I_p=(\TG_p,s_p,t_p,k,\ell)$ be~$p$ $\calR$-equivalent instances of~\eimsp{} with~$\TG_q=(V,E_1^q,\dots E_\tau^q)$ for all~$q\in\set{p}$ and~$\ell=0$.
 We construct an instance~$I=(\TG',s,t,k',\ell)$
 with~$\TG'=(V',E_1,\dots,E_{\tau'})$ and~$k'=k+2$.
 Let~$V'=\{s,t\}\cup V$ with two new distinct vertices~$s,t$.
 Let~$E_{trans}=\{\{s,t\}\}$, 
 that is,
 $E_{trans}$ only contains the edge~$s,t$.
 Next,
 let~$\vmod{E}_r^q=E_r^q\cup \{\{s,s_q\},\{t,t_q\}\}$ for every~$r\in\set{\tau}$ and~$q\in\set{p}$.
 For~$1\leq q\leq p$ and~$1\leq j\leq \tau+1$,
 we set~$E_{(q-1)(\tau+1)+j}=\vmod{E}_j^{q}$ if~$j\leq \tau$,
 and~$E_{(q-1)(\tau+1)+j}=E_{trans}$ if~$j=\tau+1$.
 This finishes the construction.
 Note that the construction runs in polynomial time.
\end{construction*}

\begin{proof}[Proof of \cref{prop:eimspcrocon}]
 Let $I_1=(\TG_1,s_1,t_1,k,\ell),\dots,I_p=(\TG_p,s_p,t_p,k,\ell)$ be~$p$ $\calR$-equiv\-a\-lent instances of~\eimsp{} with $\TG_q=(V,E_1^q,\dots E_\tau^q)$ for every~$q\in\set{p}$ and~$\ell=0$,
 and let~$I=(\TG',s,t,k',\ell)$
 with~$\TG'=(V',E_1,\dots,E_{\tau'})$ and~$k'=k+2$ be the instance obtained from~$I_1,\dots,I_p$ using~\cref{constr:eimspcrocon}.
 Note that~$|V(\TG')|=|V|+2$
 We claim that $I$~is \yes-instance if and only if each of~$I_1,\dots,I_p$ is a \yes-instance.
 
 \RD{}
 Let~$(P_1,\ldots,P_{\tau'})$ be a solution to~$I$.
 For~$1\leq q\leq p$ and~$1\leq j\leq \tau$,
 we define $P_j^{q}=P_{(q-1)(\tau+1)+j}-\{s,t\}$ as the path obtained from~$P_{(q-1)(\tau+1)+j}$ when deleting~$s$ and~$t$,
 which has vertex set~$V(P_{(q-1)(\tau+1)+j})\setminus\{s,t\}$ and edge set~$E(P_{(q-1)(\tau+1)+j})\setminus\{\{s,s_q\},\{t,t_q\}\}$.
 We claim that for each~$1\leq q\leq p$,
 $(P_1^{q},\dots,P_\tau^{q})$ is a solution for~$I_q$.
 First note that for every~$j\in\set{\tau}$,
 $P_j^{q}$ is an~$s_q$-$t_q$ path in~$(V,E_j^q)$ and
 $|V(P_j^{q})|=|V(P_{(q-1)(\tau+k')+1})\setminus\{s,t\}|\leq k'-2=k$.
 Moreover,
 for every~$j\in\set{\tau-1}$,
 $|E(P_j^{q})\cap E(P_{j+1}^{q})|= |E(P_{(q-1)(\tau+1)+j})\cap E(P_{(q-1)(\tau+1)+j+1})|\leq \ell$ (recall that~$s$ is only adjacent with~$s_q$ and~$t$ is only adjacent with~$t_q$).
 Hence,
 the claim follows.
 
 \LD{}
  Let~$(P_1^q,\dots,P_\tau^q)$ be a solution for~$I_q$ for every~$q\in\set{p}$.
 For each~$q\in\set{p}$ and each~$i\in\set{\tau}$,
 let~$\vmod{P}_i^q$ be the path obtained from~$P_i^q$ with~$V(\vmod{P}_i^q)=V(P_i^q)\cup\{s,t\}$ and~$E(\vmod{P}_i^q)=E(P_i^q)\cup\{\{s,s_q\},\{t_q,t\}\}$.
 Note that~$\vmod{P}_i^q$ is an~$s$-$t$ path and~$|V(\vmod{P}_i^q)|=|P_i^q|+2\leq k'$,
 and~$|E(\vmod{P}_i^q)\cap E(\vmod{P}_{i+1}^q)|=|E(P_i^q)\cap E(P_{i+1}^q)|\leq \ell$.
 Let~$P=(s,t)$ be the~$s$-$t$ path with vertex set~$V(P)=\{s,t\}$ and edge set~$E(P)=\{\{s,t\}\}$.
 Next we construct the path sequence~$\calP=(P_1,\dots,P_{\tau'})$.
 For each~$q\in\set{p}$,
 we set~$P_{(q-1)(\tau+1)+j}=\vmod{P}_j^q$ for~$1\leq j\leq \tau$,
 and we set~$P_{(q-1)(\tau+1)+\tau+1}=P$.
 Clearly,
 $|E(P_{(q-1)(\tau+1)+\tau})\cap E(P_{(q-1)(\tau+1)+\tau+1})|=|E(P_{(q-1)(\tau+1)+\tau+1})\cap E(P_{q(\tau+k')+1})|=0$ by construction for every~$q\in\set{p}$,
 since~$P$ is the only path using only the edge~$\{s,t\}$.
 It follows that
 for every~$i\in\set{\tau'}$, 
 $P_i$ is an~$s$-$t$ path with at most~$k'$ vertices,
 and for every~$i\in\set{\tau'-1}$,
 it holds true that
 $|E(P_i)\cap E(P_{i+1})|\leq \ell$.
 Hence,
 $\calP$~is a solution to~$I$,
 and the claim follows.
\end{proof}

\noindent
While~\cref{thm:empvincompr} is proven via an AND-cross-composition~\cite{BodlaenderJK14},
we prove that \vmsp{} admits no problem kernel of size polynomial in~$\tau+\ug{\nu}$ (unless~\NPincoNPslashpoly)
via an OR-cross-composition.
Recall that~$\ug{\nu}$ denotes the vertex cover number of the underlying graph,
and the result can be understood as that relaxing~$n$ in~$n+\tau$ does not allow for efficient preprocessing.
}

\noindent
We prove that,
unless~\NPincoNPslashpoly{},
improving the single-exponential kernel for \vmsp{} regarding~$\ug{\nu}+\tau$
to polynomial size is not possible.

\ifarxiv{}
\begin{theorem}
		\else{}
\begin{theorem}[\appref{thm:vmspnopkvctau}]
\fi{}
 \label{thm:vmspnopkvctau}
 Unless~$\NPincoNPslashpoly$,
 \vmsp{} admits no problem kernel of size polynomial in~$\ug{\nu}+\tau$.
\end{theorem}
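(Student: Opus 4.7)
We intend to prove this kernel lower bound via an OR-cross-composition from an \NP-hard source problem, most naturally \prob{Hamiltonian $s$-$t$ Path}, which we know (by the argument behind \cref{thm:vmsppNPhelltau}) encodes cleanly into \vmsp{} with $\ell=0$ and $\tau=2$. Define the polynomial equivalence relation~$\mathcal R$ so that two \prob{Hamiltonian $s$-$t$ Path} instances are equivalent iff they have the same number of vertices and agree (after renaming) on the designated source/target pair. Given $p$ $\mathcal R$-equivalent instances $(G_1,s,t),\dots,(G_p,s,t)$ sharing vertex set $V=\{v_1,\dots,v_n\}$, and padding with trivially-no instances so that $p=2^q$, we need to construct in polynomial time a single \vmsp{} instance $(\TG,s,t,k,\ell)$ that is a yes-instance if and only if at least one~$G_i$ admits a Hamiltonian $s$-$t$ path, with the parameter bound $\ug{\nu}(\TG)+\tau(\TG)\in (n+\log p)^{O(1)}$.

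The construction I would develop decomposes into a \emph{selector block} and a \emph{verification block}, glued together by a small similarity budget~$\ell$. The selector uses $2q$ fresh vertices $C=\{c_j^0,c_j^1:j\in\set{q}\}$ and $O(q)$ snapshots; its purpose is to force any solution restricted to these snapshots to encode a binary string $b\in\{0,1\}^q$ in the common vertex set of its paths as $\{c_j^{b_j}\}_{j\in\set{q}}$, identifying one instance index $i^\star\in\set{p}$. The verification block then adapts \cref{constr:vmsppNPhelltau}: a spanning-path snapshot through~$V$ and one or more snapshots whose $s$-$t$ paths are meant to correspond to Hamiltonian $s$-$t$ paths in $G_{i^\star}$. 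A natural gating mechanism is a chain of $O(q)$ verification-bit snapshots, the $j$-th of which admits only edges belonging to instances~$G_i$ with matching $j$-th bit~$b_j$; intersecting across all $j$ leaves exactly~$E(G_{i^\star})$ usable in the Hamiltonian-path check.

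For the parameter count, the vertex cover of the underlying graph can be taken to be $\{s,t\}\cup V\cup C$ together with a handful of auxiliaries, of size $O(n+\log p)$, and the total number of snapshots will be $O(n+\log p)$, so $\ug{\nu}(\TG)+\tau(\TG)\in(n+\log p)^{O(1)}$, as required by the cross-composition framework of~\cite{BodlaenderJK14}. The theorem then follows from the standard connection between OR-cross-compositions and kernel lower bounds under~$\NPincoNPslashpoly$.

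The hard part, I expect, will be engineering the selector-verification interface so that the binary choice~$b$ encoded in the selector's vertex set \emph{uniquely} unlocks $E(G_{i^\star})$ rather than permitting ``mix-and-match'' Hamiltonian paths spread across several of the~$G_i$'s, while simultaneously preserving the propagation of the selector vertex set through the verification block under the tight consecutive vertex-symmetric-difference budget~$\ell$. Carefully accounting for how~$\ell$ is spent at the selector-verification seam, and ensuring that the verification block implements the Hamiltonian-path check of \cref{constr:vmsppNPhelltau} without inflating either the vertex cover or the number of snapshots beyond $\poly(n+\log p)$, is where the bulk of the proof's engineering will lie.
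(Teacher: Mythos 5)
There is a genuine gap, and it sits exactly where you flag ``the hard part'': the selector--verification interface cannot be made to work the way you sketch it under the vertex-symmetric-difference measure. Your verification block needs a snapshot in which an $s$-$t$ path witnesses a Hamiltonian path \emph{using only the edges of the selected instance}~$G_{i^\star}$. But snapshot edge sets are fixed at construction time, so the $j$-th ``verification-bit'' snapshot can at best offer the edge union $\bigcup_{i:\,\mathrm{bit}_j(i)=b_j}E(G_i)$ on the shared vertex set~$V$; and since all Hamiltonian candidates have the \emph{same} vertex set $V\cup\{s,t\}$, the \vmsp{} constraint $|\symdif{V(P_r)}{V(P_{r+1})}|\le\ell$ gives you zero control over which edges inside~$V$ each path uses. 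The ``intersection across all~$j$'' is an intersection of edge sets that nothing in the vertex-similarity measure enforces: snapshot~$j$ may use a Hamiltonian path of~$G_i$ and snapshot~$j'$ one of~$G_{i'}$ with $i\neq i'$, so the composed instance is a \yes-instance whenever the mix-and-match succeeds, breaking the OR-semantics. You also cannot escape by giving each instance its own copy of~$V$: a Hamiltonian path forces essentially all of~$V$ into the vertex cover of the underlying graph, so $p$ copies would blow $\ug{\nu}$ up to $\Omega(pn)$, violating the cross-composition size bound.

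The paper sidesteps this by OR-cross-composing from \prob{Positive 1-in-3 SAT} instead (\cref{prop:orcrocoemps}, \cref{constr:orcrocoemps}). The crucial structural difference is that there a solution is a \emph{subset of an independent set} of candidate vertices: the $pN$ variable-vertices $D=\bigcup_q D^q$ can all be attached as degree-two pendants to an $O(N)$-size backbone $A\cup B\cup\{s,t\}$, which alone forms the vertex cover, so duplicating the per-instance structure $p$ times costs nothing in~$\ug{\nu}$. Instance selection is done in $\log p$ snapshots by routing the two backbone rails so that only $D^q$'s with a fixed $r$-th index bit are reachable from one rail, and consistency of the selected $D$-vertices across all $\tau=\log p+M$ snapshots is enforced by setting $\ell=2(N+1)$ \emph{exactly} equal to the unavoidable backbone alternation between $A$ and $B$, leaving no slack to change any $D$-vertex. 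Clause verification is then a purely local detour choice. If you want to salvage your plan, you would need a source problem and gadgetry with this ``solution lives in an independent set'' property; \prob{Hamiltonian $s$-$t$ Path} does not have it, which is why your reduction from \cref{thm:vmsppNPhelltau} does not compose.
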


\noindent
To prove~\cref{thm:vmspnopkvctau},
we 
OR-cross-compose~\cite{BodlaenderJK14} from the following \NP-complete
\cite{schaefer1978complexity} problem.

\problemdef{Positive 1-in-3 SAT}
{A set~$X$ of variables and a set~$\calC$ of clauses each containing three positive literals over~$X$.}
{Is there~$X'\subseteq X$ such that setting exactly the variables in~$X'$ to true results in each clause having exactly one variable set to true?}
We call two instances~$(X,\calC),(X',\calC')$ of \prob{Positive 1-in-3 SAT} $\calR$-equivalent if~$|X|=|X'|$ and~$|\calC|=|\calC'|$.
Note that~$\calR$ defines a \per~\cite{BodlaenderJK14}.
In particular, we show the following.
\ifarxiv{}
\begin{proposition}
		\else{}
\begin{proposition}[\appref{prop:orcrocoemps}]
\fi{}
 \label{prop:orcrocoemps}
 There is an algorithm that given a power~$p$ of two $\calR$-equivalent instances~$I_1=(X_1,\calC_1),\dots,I_p=(X_p,\calC_p)$ of \prob{Positive 1-in-3 SAT},
 computes in polynomial time an instance~$I$ of \vmsp{} such that~$k+\tau+\ug{\nu}\in (\max_{i\in \set{p}}{|X_i|+|\calC_i|}+\tlog{p})^{O(1)}$ and 
 $I$ is a \yes-instance if and only if at least one of~$I_1,\dots,I_p$ is a \yes-instance.
\end{proposition}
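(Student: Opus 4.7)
The plan is to build an OR-cross-composition that packs $p=2^q$ $\calR$-equivalent \prob{Positive 1-in-3 SAT} instances $I_1,\ldots,I_p$ (each with $n$ variables and $m$ clauses) into a single \vmsp{} instance in which the three parameters $k$, $\tau$, and $\ug{\nu}$ are simultaneously bounded by a polynomial in $n+m+q$. First, I would verify that $\calR$ is a polynomial equivalence relation and, by standard padding (repeating the smallest instance and/or adding dummy bits), reduce to the case that $p$ is a power of two. Second, I would introduce a shared \emph{hub skeleton} consisting of $s$, $t$, a pair $\{B_i^0,B_i^1\}$ for each bit $i\in[q]$ of the instance index, a pair $\{V_j^T,V_j^F\}$ per variable $j\in[n]$, and $O(m)$ snapshot-specific ``gate'' vertices. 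These $O(n+m+q)$ vertices will form a vertex cover of the underlying graph, yielding $\ug{\nu}\in O(n+m+q)$; any other vertices used as internal wires will be arranged as independent-set subdivisions that do not contribute to the vertex cover.

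In each snapshot, small cuts and subdivisions force every $s$-$t$ path of length at most $k=O(n+m+q)$ to select exactly one vertex from each bit-pair and each variable-pair, so the vertex set of any candidate path uniquely encodes an (instance-index, truth-assignment) pair. I would use $\tau=m+O(1)$ snapshots: the $k$-th verification snapshot keeps the shared skeleton and replaces only a small ``checker'' fragment at the end by a snapshot-specific gadget, designed so that a path can close through it if and only if the bit hubs it currently uses encode some index $j\in[p]$ and the variable hubs it currently uses form a 1-in-3 satisfying selection for the $k$-th clause of $I_j$. The similarity budget~$\ell$ would be chosen just large enough to absorb the bounded change of this local fragment between consecutive snapshots, but strictly smaller than the cost of swapping any bit or variable hub. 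This is the crucial tightness: it forces one globally fixed index $j^*$ and one globally fixed assignment $\alpha$ throughout the path sequence, so a solution exists if and only if $\alpha$ is a 1-in-3 satisfying assignment for $I_{j^*}$.

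The hard part will be realising the snapshot-$k$ checker fragment: it must simultaneously verify 1-in-3 for $C_k^{I_1},\ldots,C_k^{I_p}$ while using only $O(n+m+q)$ hub vertices overall. The intended trick is to reuse the bit hubs themselves as the ``address read'' for instance selection; concretely, the fragment is a constant-depth gadget whose \emph{edges} (but not vertices) depend on $k$ and on the list $C_k^{I_1},\ldots,C_k^{I_p}$, with the $\log p$ address bits read off the bit hubs already lying on the path, and with any auxiliary wires being independent-set subdivisions. Once this is in place, both directions of equivalence are routine: a satisfying assignment of some $I_{j^*}$ yields the path sequence with bit hubs encoding $j^*$, variable hubs encoding $\alpha$, and the checker traversed in each snapshot; conversely, any solution induces a fixed $j^*$ and $\alpha$ that must 1-in-3-satisfy every $C_k^{I_{j^*}}$. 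Combining this OR-cross-composition with the framework of Bodlaender~\etal~yields \cref{thm:vmspnopkvctau}.
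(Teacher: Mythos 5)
There is a genuine gap: the entire difficulty of the proposition is concentrated in your ``checker fragment,'' and you never construct it. You want a single static snapshot in which an $s$-$t$ path of length $O(n+m+\log p)$ must traverse all $\log p$ selected bit hubs and all $n$ selected variable hubs, and additionally pass a gadget that is traversable if and only if, for the index $j^*$ encoded by the bit hubs, the $r$-th clause of $I_{j^*}$ has \emph{exactly one} of its three (shared) variable hubs in the true state. With shared hubs $V_j^T/V_j^F$, the $p$ different clause checks for snapshot $r$ all attach to the same hub vertices, and a simple path visits each selected hub exactly once at one fixed position; it is not clear how a static gadget with only $O(n+m+\log p)$ cover vertices can both ``read'' the address and conditionally enforce the exactly-one constraint for the addressed clause while the path is simultaneously forced through every selected hub. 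Saying the gadget's ``edges (but not vertices) depend on $r$ and the clause list'' is a specification of the goal, not a construction, and the equivalence proof cannot be called ``routine'' before this gadget exists. Your budget argument for $\ell$ is similarly unquantified: you need the local checker change between consecutive snapshots to cost strictly less than any hub swap, which requires an exact accounting you do not give.

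For comparison, the paper avoids this single-snapshot address-decoding problem entirely. It keeps $p\cdot N$ \emph{per-instance} variable vertices $v_i^q$ in the independent set (so they do not enter $\ug{\nu}$; only the $O(N)$ spine vertices $A\cup B\cup\{s,t\}$ form the cover), and spends $\log p$ \emph{extra snapshots} on instance selection: in snapshot $r\le\log p$ there are two parallel spines, and $v_i^q$ is attached to the $0$-spine or $1$-spine according to the $r$-th bit of $q-1$, so a path can only carry variable vertices from instances agreeing on that bit. The budget $\ell=2(N+1)$ equals exactly the unavoidable cost of the spine alternation between consecutive snapshots, leaving zero slack, which forces $V(P_r)\cap D$ to be identical in all snapshots and hence contained in a single $D^{q}$. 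The 1-in-3 check then needs no address decoding: in clause snapshot $\log p + r$, each clause-member vertex $v_i^q\in C_r^q$ is attached \emph{only} between $h_0(N,r)$ and $t$, so among the (fixed) selected variable vertices exactly one per clause can and must serve as the final hop to $t$. If you want to salvage your design, you would need to either concretely build the address-decoding checker (and re-verify $\ug{\nu}$ and $\ell$), or adopt the paper's trick of trading $\log p$ vertices for $\log p$ snapshots.
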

We use the following \cref{constr:orcrocoemps} to show \cref{prop:orcrocoemps}, see \cref{fig:orcrocoemps} for an illustration.
The basic idea of the construction is that 
the temporal graph has, among other vertices, a vertex set $D = \bigcup_{q=1}^p D^q$, 
where $D^q$ has one vertex for each variable in the \ith{$q$} input instance.
If we use a vertex from $D^q$ in the $s$-$t$ path, then we set the corresponding variable to true.
In the first $\log(p)$ snapshots,
we ensure that each $s$-$t$ path can only use vertices 
from $D$ which come from the same input instance.
The remainder of the snapshots ensures that the clauses are satisfied.
Here, the \ith{($\log(p) + r$)} snapshot ensures that the \ith{$r$} clause of some input instance is satisfied with exactly one variable (vertex).
Since we only use variables from one instance, \cref{prop:orcrocoemps} follows.
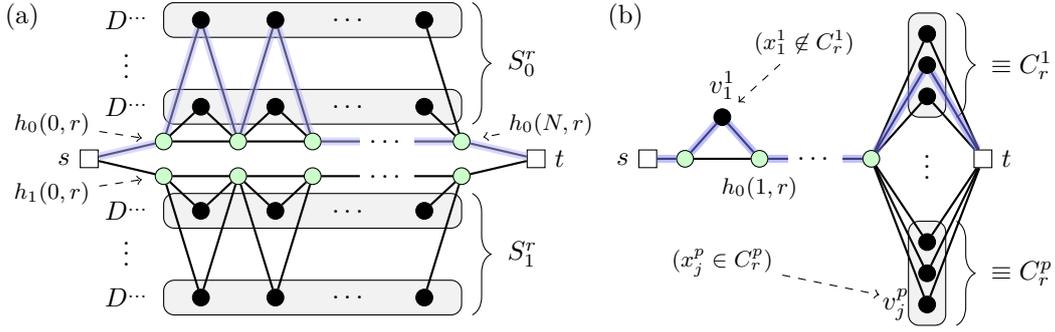
\begin{figure}
		\centering
		\begin{tikzpicture}[xscale=0.98,yscale=0.46]
    \tikzstyle{xnode}=[circle, draw,scale=2/3]
    \tikzstyle{xedge}=[thick]
	\begin{scope}[yshift=4cm,xshift=0.5cm]
			\draw[rounded corners, fill=gray!10] (0.5,-.5) -- (4.5,-.5) -- (4.5,.5) -- (0.5,.5) -- cycle;
			\node at (0,0) {$D^{\dots}$};
			\node at (1,0) (v11) [xnode,fill]{};
			\node at (2,0) (v21) [xnode,fill]{};
			\node at (3,0) {$\dots$};
			\node at (4,0) (v31) [xnode,fill]{};
			\node at (0,-1.1) {$\vdots$};
	\end{scope}
	\begin{scope}[yshift=1.5cm,xshift=0.5cm]
			\draw[rounded corners, fill=gray!10] (0.5,-.5) -- (4.5,-.5) -- (4.5,.5) -- (0.5,.5) -- cycle;
			\node at (0,0) {$D^{\dots}$};
			\node at (1,0) (v13) [xnode,fill]{};
			\node at (2,0) (v23)[xnode,fill]{};
			\node at (3,0) {$\dots$};
			\node at (4,0) (v33) [xnode,fill]{};
	\end{scope}
	\draw [decorate,decoration={brace,amplitude=7pt,mirror,raise=4pt},yshift=0pt]
			(5,1) -- (5,4.5) node [black,midway,xshift=0.8cm] {$S^r_0$}; 
	\draw [decorate,decoration={brace,amplitude=7pt,raise=4pt},yshift=0pt]
			(5,-1) -- (5,-4.5) node [black,midway,xshift=0.8cm] {$S^r_1$}; 
	\begin{scope}[yshift=0]
			\node at (-0.9,4.1) []{(a)};
			\node (s) at (0,0) (s) [draw,label=left:{$s$}]{};
			\node at (1,0.5) (a0) [xnode,fill=green!20]{};
			\node at (1,-0.5) (b0) [xnode,fill=green!20]{};
			\node at (2,0.5) (a1) [xnode,fill=green!20]{};
	\node at (2,-0.5) (b1) [xnode,fill=green!20]{};
	\node at (4,0.5) (d1) {$\dots$};
	\node at (4,-.5) (d2) {$\dots$};
	\node at (3,0.5) (a2)[xnode,fill=green!20]{};
	\node at (3,-0.5) (b2)[xnode,fill=green!20]{};
	\node at (5,0.5) (a3) [xnode,fill=green!20]{};
	\node at (5,-0.5) (b3) [xnode,fill=green!20]{};
	\node at (6,0) (z) [draw,label=right:{$t$}]{};
	\node at (-.5,1) (l1) {\footnotesize $h_0(0,r)$}; 
	\node at (-.5,-1) (l2) {\footnotesize $h_1(0,r)$}; 
	\node at (6.2,1) (l3) {\footnotesize $h_0(N,r)$}; 
	\draw[->,dashed,shorten >= 5pt] (l1) to (a0);
	\draw[->,dashed,shorten >= 5pt] (l2) to (b0);
	\draw[->,dashed,shorten >= 5pt] (l3) to (a3);
	\draw[xedge] (s) -- (a0) -- (a1) -- (a2) -- (d1) -- (a3) -- (z);
	\draw[xedge] (s) -- (b0) -- (b1) -- (b2) -- (d2) -- (b3) -- (z);
	\end{scope}

	\begin{scope}[yshift=-4cm,xshift=0.5cm]
			\draw[rounded corners, fill=gray!10] (0.5,-.5) -- (4.5,-.5) -- (4.5,.5) -- (0.5,.5) -- cycle;
			\node at (0,0) {$D^{\dots}$};
			\node at (1,0) (v12) [xnode,fill]{};
			\node at (2,0) (v22) [xnode,fill]{};
			\node at (3,0) {$\dots$};
			\node at (4,0) (v32) [xnode,fill]{};
			\node at (0,1.5) {$\vdots$};
	\end{scope}
	\begin{scope}[yshift=-1.5cm,xshift=0.5cm]
			\draw[rounded corners, fill=gray!10] (0.5,-.5) -- (4.5,-.5) -- (4.5,.5) -- (0.5,.5) -- cycle;
			\node at (0,0) {$D^{\dots}$};
			\node at (1,0) (v14) [xnode,fill]{};
			\node at (2,0) (v24) [xnode,fill]{};
			\node at (3,0) {$\dots$};
			\node at (4,0) (v34) [xnode,fill]{};
	\end{scope}
			\draw[xedge] (a0) -- (v13) -- (a1) -- (v23) -- (a2);
			\draw[xedge] (v33) -- (a3); 
			\draw[xedge] (a0) -- (v11) -- (a1) -- (v21) -- (a2);
			\draw[xedge] (v31) -- (a3); 

			\draw[xedge] (b0) -- (v12) -- (b1) -- (v22) -- (b2);
			\draw[xedge] (v32) -- (b3); 
			\draw[xedge] (b0) -- (v14) -- (b1) -- (v24) -- (b2);
			\draw[xedge] (v34) -- (b3); 

			\draw[-,line width=3pt,opacity=0.5,blue!30] (s) -- (a0) -- (v11) -- (a1) -- (v21) -- (a2) -- (d1) -- (a3) -- (z);
			\node at (7+0.2,4.1) []{(b)};
	\begin{scope}[xshift=7cm,yscale=1.2]
		\node (s) at (0.5,0) (s) [draw,label=left:{$s$}]{};
		\node at (1.5,1) (v) [xnode,fill,label={90:$v_1^1$}]{};
		\node at (1,0) (a0) [xnode,fill=green!20]{};
		\node at (2,0) (a1) [xnode,fill=green!20,label=below:{\footnotesize $h_0(1,r)$}]{};
		\node at (2.75,0) (d1) {$\dots$};
		\node at (3.5,0) (a2)[xnode,fill=green!20]{};
		\node at (4.25,0) {$\vdots$}{};
		\node at (5,0) (z) [draw,label=right:{$t$}]{};
		\draw[xedge] (a0) -- (v) -- (a1);

		\node at (2.6,2.8) (l3) {\footnotesize ($x^1_1 \not \in C^1_r$)};
		\node at (1.5,-2.5) (l4) {\footnotesize ($x^p_j \in C^p_r$)};
		\draw[->,dashed,shorten >= 5pt] (l3) to (v);
		\draw[xedge] (s) -- (a0) -- (a1) -- (d1) -- (a2); 
		\begin{scope}[xshift=4.25cm,yshift=1.5cm]
				\draw[rounded corners, fill=gray!10] (-.25,2) -- (.25,2) -- (.25,-.5) -- (-.25,-.5) -- cycle;
				\node at (0,0) (v1) [xnode,fill]{};
				\node at (0,0.75) (v2) [xnode,fill]{};
				\node at (0,1.5) (v3) [xnode,fill]{};
				\draw[xedge] (a2) -- (v1) -- (z); 
				\draw[xedge] (a2) -- (v2) -- (z); 
				\draw[xedge] (a2) -- (v3) -- (z); 
				\draw [decorate,decoration={brace,amplitude=7pt,mirror,raise=4pt},yshift=0pt]
				(0.25,-.5) -- (0.25,2) node [black,midway,xshift=1cm] {$\equiv C^1_r$}; 
		\end{scope}
		\draw[-,line width=3pt,opacity=0.5,blue!50] (s) -- (a0) -- (v) -- (a1) -- (d1) -- (a2) -- (v2) -- (z);
		\begin{scope}[xshift=4.25cm,yshift=-3.5cm]
				\draw[rounded corners, fill=gray!10] (-.25,2) -- (.25,2) -- (.25,-.5) -- (-.25,-.5) -- cycle;
				\node at (0,0) (v1) [xnode,fill,label={left:{$v_j^p$}}]{};
				\draw[->,dashed,shorten >= 15pt] (l4) to (v1);
				\node at (0,0.75) (v2) [xnode,fill]{};
				\node at (0,1.5) (v3) [xnode,fill]{};
				\draw[xedge] (a2) -- (v1) -- (z); 
				\draw[xedge] (a2) -- (v2) -- (z); 
				\draw[xedge] (a2) -- (v3) -- (z); 
				\draw [decorate,decoration={brace,amplitude=7pt,mirror,raise=4pt},yshift=0pt]
				(0.25,-.5) -- (0.25,2) node [black,midway,xshift=1cm] {$\equiv C^p_r$}; 
		\end{scope} 
	\end{scope}
	\end{tikzpicture}	
	\caption{Illustration of \cref{constr:orcrocoemps} with $p$ input instances. 
			(a) shows a snapshot $(V,E_r)$ with $r \leq \log(p)$.
	(b) shows a snapshot $(V,E_{\log(p)+r})$ for the \ith{$r$} clause of each input instance.
	Observe that the green (bright) vertices (including $s,t$) form a vertex cover of the underlying graph.}

	\label{fig:orcrocoemps}
\end{figure}

\begin{construction*}
 \label{constr:orcrocoemps}
 Let $I_1=(X_1,\calC_1),\dots,I_p=(X_p,\calC_p)$ be $p$, 
 where~$p$ is a power of two,
 $\calR$-equivalent instances of \prob{Positive 1-in-3 SAT}  
 where~$N=|X_i|$ and~$M=|\calC_i|$ for all $i \in \set{p}$.
 Let~$D^q=\{v_i^q\mid i\in\set{N}\}$ for all~$q\in\set{p}$,
 and~$D=\bigcup_{q\in\set{p}} D^q$.
 Let~$A=\{a_0^i,a_1^i\mid i\in\set[0]{N}\}$ and~$B=\{b_0^i,b_1^i\mid i\in\set[0]{N}\}$.
 Set~$V=\{s,t\}\cup D\cup A\cup B$.
 Define for each~$d\in\{0,1\}$ the auxiliary function
 \[h_d(i,r)\ceq \begin{cases} a_d^i,& \text{$r$ odd}\\ b_d^i,& \text{$r$ even}.\end{cases}\]
 We next describe the edge sets~$E_1,\ldots,E_{\tlog{p}}$ and~$E_{\tlog{p}+1},\ldots,E_{\tlog{p}+M}$.
 For edge set~$E_r$ with~$r\leq \tlog{p}$,
 let~$E_r$ contain the edges~$\{s,h_d(0,r)\},\{t,h_d(N,r)\}$ and the edge set~$\bigcup_{1\leq i\leq N}\{\{h_d(i-1,r),h_d(i,r)\}\}$ for each~$d\in\{0,1\}$.
 These sets form two~$s$-$t$ paths in~$(V,E_r)$.
 Finally,
 let~$S^r_0$ be the union of~$D^q$ with the~\ith{$r$} bit of the binary encoding of~$q-1$ being $0$,
 and~$S^r_1$ be the union of~$D^q$ with the~\ith{$r$} bit of the binary encoding of~$q-1$ being $1$.
 For~$v^q_i\in S^r_0$, add the edges~$\{h_0(i-1,r),v^q_i\}$ and~$\{h_0(i,r),v^q_i\}$.
 Similarly,
 for~$v^q_i\in S^r_1$, add the edges~$\{h_1(i-1,r),v^q_i\}$ and~$\{h_1(i,r),v^q_i\}$.
 For edge set~$E_{\tlog{p}+r}$ with $r\leq M$,
 let~$E_{\tlog{p}+r}$ contain the edge~$\{s,h_0(0,r)\}$ and the edge set~$\bigcup_{1\leq i\leq N}\{\{h_0(i-1,r),h_0(i,r)\}\}$.
 Consider the clauses~$C_r^1,\dots,C_r^p$.
 For each~$C_r^q$,
 if~$x_i^q\in C_r^q$,
 then add the edges~$\{h_0(N,r),v_i^q\},\{v_i^q,t\}$,
 and if~$x_i^q\not\in C_r^q$,
 then add the edges~$\{h_0(i-1,r),v_i^q\},\{h_0(i,r),v_i^q\}$.
 Set~$k=2N+3$ and~$\ell=2(N+1)$.
 This finishes the construction.
\end{construction*}

\appendixproof{prop:orcrocoemps}
{
\begin{observation}
 \label{lem:orcrocoempspropone}
 If~$(P_1,\ldots,P_{\tau})$ is a solution to~$I$ of \cref{constr:orcrocoemps},
 then for every~$r\in\set{\tau-1}$
 \begin{compactenum}[(i)]
  \item $|\symdif{V(P_r)}{V(P_{r+1})}|=\ell$, 
  \item $\symdif{V(P_r)}{V(P_{r+1})}\subseteq A\cup B$, and
  \item $V(P_r)\cap D=V(P_{r'})\cap D$ for all~$r'\in\set{\tau}$.
 \end{compactenum}
\end{observation}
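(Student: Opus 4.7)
The plan is to exploit the rigid ``backbone'' structure that every $s$-$t$ path in the constructed instance must follow. First I would characterize the $s$-$t$ paths of length at most $k-1 = 2N+2$ in each snapshot. In a selection snapshot $r\le \log p$ the vertex $s$ has exactly two neighbors $h_0(0,r)$ and $h_1(0,r)$, and every $v^q_i\in S^r_d$ is attached only to the $h_d$-chain; consequently the path picks a single direction $d_r\in\{0,1\}$ and traverses the entire chain $h_{d_r}(0,r),\ldots,h_{d_r}(N,r)$, with optional single-vertex shortcuts $v^q_i\in S^r_{d_r}$ inserted between consecutive chain vertices. In a clause snapshot $\log p+r$ the unique neighbor of $s$ forces the $h_0(\cdot,r)$-backbone, and $t$ is reached through a single vertex $v^q_i$ with $x^q_i\in C^q_r$. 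Counting against $k=2N+3$ bounds $|V(P_r)\cap D|\leq N$ in both cases.

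Second, I would compare consecutive snapshots. Since $h_d(\cdot,r)\subseteq A$ when $r$ is odd and $\subseteq B$ when $r$ is even, for every transition $r\to r+1$ whose $h$-arguments have opposite parity the two backbones lie in disjoint subsets of $A\cup B$ and therefore already contribute $2(N+1)=\ell$ to $|\symdif{V(P_r)}{V(P_{r+1})}|$. Combined with $\{s,t\}\subseteq V(P_r)\cap V(P_{r+1})$ and the budget $\ell$, this forces $\symdif{V(P_r)}{V(P_{r+1})}$ to be of size exactly $\ell$ and contained entirely in $A\cup B$, establishing (i) and (ii), and leaves no room for $D$-content to differ, giving $V(P_r)\cap D=V(P_{r+1})\cap D$. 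This directly covers every transition within the selection phase and every transition within the clause phase; composing these consecutive equalities transitively yields the all-pairs statement (iii) on each block separately.

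The hard part will be the single junction $\log p\to \log p+1$, where the $r$-argument of $h$ restarts at $1$ in the clause phase. When $\log p$ is even, or when $\log p$ is odd and the chosen direction satisfies $d_{\log p}=1$, the two backbones are still disjoint and the argument above applies unchanged. The only delicate case is $\log p$ odd with $d_{\log p}=0$, in which both backbones equal $\{a_0^0,\ldots,a_0^N\}$ and the parity argument collapses. To resolve it I would use the selection-phase equalities to pin the common value $T\ceq V(P_1)\cap D=\ldots=V(P_{\log p})\cap D$ inside the unique $D^q$ whose binary encoding matches $(d_1,\ldots,d_{\log p})$, and use the clause-phase equalities to pin $T'\ceq V(P_{\log p+1})\cap D=\ldots=V(P_\tau)\cap D$ similarly into a single instance's $D^{q'}$ via the role of the clause-ending vertex across varying clause snapshots. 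A cardinality count $|T|,|T'|\leq N$ against the at-most-$\ell$ symdif budget at the junction, together with the fact that any $v^{q'}_i\in T'\setminus T$ would have to be a shortcut on the shared $a_0$-backbone in snapshot $\log p+1$ while simultaneously being non-shortcut-attached in snapshot $\log p$, forces $q=q'$ and in fact $T=T'$, completing (iii) globally and yielding (i) and (ii) at this junction by exhausting the $\ell$ budget with the remaining overlap structure.
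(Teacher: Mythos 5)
Your core argument is exactly the paper's: in every snapshot the pairs $\{h_0(i,\cdot),h_1(i,\cdot)\}$ are $s$-$t$ separators, so each $P_r$ contains $N+1$ backbone vertices from $A\cup B$; whenever the backbones of $P_r$ and $P_{r+1}$ lie in disjoint subsets of $A\cup B$ (which the parity of the $h$-argument guarantees), this alone forces $|\symdif{V(P_r)}{V(P_{r+1})}|\geq 2(N+1)=\ell$, and the upper bound $\ell$ from the solution then yields (i), (ii), and — since no $D$-vertex can enter the symmetric difference — (iii) by transitivity. The paper applies this uniformly to \emph{every} consecutive pair (it simply asserts $h_d(i,r)\neq h_{d'}(i',r+1)$ for consecutive snapshot indices) and does not isolate the selection-to-clause junction as a special case; your observation that, with the construction's literal indexing $h_0(\cdot,1)$ in snapshot $\log(p)+1$, the parity argument can collapse there when $\log(p)$ is odd and $P_{\log(p)}$ uses the $a_0$-chain is a genuine subtlety that the paper's one-line disjointness claim glosses over.

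The problem is that your patch for that delicate case does not close it. The purported contradiction — that a vertex $v_i^{q'}\in T'\setminus T$ ``would have to be a shortcut on the shared $a_0$-backbone in snapshot $\log(p)+1$ while simultaneously being non-shortcut-attached in snapshot $\log(p)$'' — is not a contradiction: such vertices exist, for instance any $v_i^{q'}$ where the $\log(p)$-th bit of $q'-1$ equals $1$ (so it hangs off the $a_1$-chain in snapshot $\log(p)$) and $x_i^{q'}\notin C_1^{q'}$ (so it hangs off the $a_0$-chain in snapshot $\log(p)+1$). Likewise, the clause-phase equalities do not by themselves pin $T'$ inside a single $D^{q'}$: each clause snapshot forces only one clause-satisfying vertex from \emph{some} instance, its shortcut vertices may a priori come from arbitrary instances, and in the shared-backbone case $|\symdif{T}{T'}|\leq 2N<\ell$, so the distance constraint imposes nothing at the junction. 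Hence the argument for (iii) across the junction — which is precisely what \cref{lem:orcrocoempsproptwo} later relies on — is not established by your write-up in that case; you would need either to re-index the clause snapshots so that consecutive backbones are always parity-disjoint (which is how the paper's uniform claim should be read), or to supply a genuinely different argument for the junction.
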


\begin{proof}
 Let~$r\in\set{\tau-1}$.
 Note that in~$(V,E_r)$,
 $\{h_0(i,r),h_1(i,r)\}$ is an~$s$-$t$ separator for each~$0\leq i\leq N$.
 Hence,
 $P_r$ must contain for each~$0\leq i\leq N$ a vertex from~$\{h_0(i,r),h_1(i,r)\}$.
 The same holds for~$P_{r+1}$:
 $\{h_0(i,r+1),h_1(i,r+1)\}$ is an~$s$-$t$ separator for each~$0\leq i\leq N$,
 and hence
 $P_{r+1}$ must contain for each~$0\leq i\leq N$ a vertex from~$\{h_0(i,r+1),h_1(i,r+1)\}$.
 Since~$h_d(i,r)\neq h_{d'}(i',r+1)$ for all~$i,i'\in\set[0]{N}$ and~$d,d'\in\{0,1\}$,
 it follows that~$|\symdif{V(P_r)}{V(P_{r+1})}|\geq 2(N+1) = \ell$.
 Since~$(P_1,\ldots,P_{\tau})$ is a solution, 
 it also holds true that~$|\symdif{V(P_r)}{V(P_{r+1})}|\leq \ell$,
 and hence~$\symdif{V(P_r)}{V(P_{r+1})}\subseteq A\cup B$.
 This in turn implies that~$D\cap \symdif{V(P_r)}{V(P_{r+1})}=\emptyset$,
 and hence~$V(P_r)\cap D=V(P_{r'})\cap D$ for all~$r'\in\set{\tau}$.
\end{proof}

\begin{lemma}
 \label{lem:orcrocoempsproptwo}
 If~$(P_1,\ldots,P_{\tau})$ is a solution to~$I$ of \cref{constr:orcrocoemps},
 then for all~$r\in\set{\tau}$ 
 it holds true that~$\emptyset\neq V(P_r)\cap D\subseteq D^q$ for some~$q\in \set{p}$.
\end{lemma}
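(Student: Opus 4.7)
The plan is to combine the structural information provided by the two kinds of snapshots in~\cref{constr:orcrocoemps}. First, I would establish non-emptiness of~$V(P_r)\cap D$ by inspecting any clause-snapshot. In~$(V,E_{\tlog{p}+r'})$ for $r' \in \set{M}$, the only edges incident to~$t$ are those of the form~$\{v_i^q,t\}$ with $x_i^q \in C_{r'}^q$, so every $s$-$t$ path in that snapshot must use at least one vertex of~$D$. Combining this with~\cref{lem:orcrocoempspropone}(iii), which states that $V(P_r)\cap D$ is identical across all $r\in\set{\tau}$, lifts non-emptiness to every snapshot.

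For the containment in a single~$D^q$, I would exploit the first~$\tlog{p}$ snapshots. The key structural claim is that each snapshot~$(V,E_r)$ with $r \le \tlog{p}$ decomposes into two ``rails'' meeting only at~$s$ and~$t$: one rail on $\{h_0(i,r)\mid 0\leq i\leq N\}\cup S^r_0$, the other on $\{h_1(i,r)\mid 0\leq i\leq N\}\cup S^r_1$. This is because, by construction, every vertex in $S^r_d$ is adjacent only to $h_d$-vertices, no edge of $E_r$ connects an $h_0$-vertex to an $h_1$-vertex, and the only edges incident to $s$ (resp.\ $t$) are $\{s,h_0(0,r)\},\{s,h_1(0,r)\}$ (resp.\ $\{h_0(N,r),t\},\{h_1(N,r),t\}$). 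Hence any $s$-$t$ path $P_r$ is entirely contained in one rail, so $V(P_r)\cap D\subseteq S^r_{d(r)}$ for some $d(r)\in\{0,1\}$ that may depend on~$r$.

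Combining both observations, the common set $W\ceq V(P_1)\cap D = \dots = V(P_\tau)\cap D$ satisfies $W\subseteq S^r_{d(r)}$ for every $r\in\set{\tlog{p}}$. By the definition of $S^r_0$ and $S^r_1$ via the bits of the binary encoding of $q-1$, we have $\bigcap_{r=1}^{\tlog{p}} S^r_{d(r)} = D^q$, where $q\in\set{p}$ is the unique index such that the \ith{$r$} bit of the binary encoding of~$q-1$ equals~$d(r)$ for each~$r\in\set{\tlog p}$. Hence $W\subseteq D^q$, and since $W$ is non-empty, the lemma follows.

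The main obstacle will be to argue the ``rail'' decomposition of the snapshots $r\leq\tlog{p}$ rigorously, in particular verifying that vertices of $S^r_0$ and $S^r_1$ are attached only to their respective $h_d$-vertices and that no path can switch rails internally. Once this structural observation is pinned down, the remainder is an index-chasing argument on the binary encodings.
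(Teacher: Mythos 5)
Your proof is correct and follows essentially the same route as the paper's: non-emptiness via $D$ being an $s$-$t$ separator in the clause snapshots combined with \cref{lem:orcrocoempspropone}, and single-instance containment via the two-component (``rail'') structure of $G_r-\{s,t\}$ for $r\leq\tlog{p}$ together with the bit encoding of $q-1$. The only difference is presentational — you intersect the rails $S^r_{d(r)}$ directly, whereas the paper argues by contradiction from two vertices of distinct $D^q,D^{q'}$ at a distinguishing bit position — and the rail decomposition you flag as the remaining obstacle is exactly what the paper justifies by observing that $S^r_d$-vertices attach only to $h_d$-vertices.
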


\begin{proof} 
		Observe that for each $r \in \set{M}$, we have that $D$ is an $s$-$t$ separator in the snapshot~$(V,E_{\tlog{p}+r})$,
 and hence every~$s$-$t$ path must contain a vertex from~$D$.
 Due to~\cref{lem:orcrocoempspropone},
 we know that~$V(P_r)\cap D=V(P_{r'})\cap D$ for all~$r,r'\in\set{\tau}$.
 Suppose that each path from~$P_1,\ldots,P_{\tau}$ contains a vertex~$v\in D^q$ and a vertex~$v'\in D^{q'}$ for~$q\neq q'$ in~$V(P_r)$.
 Let~$r\leq \tlog{p}$ be such that the~\ith{$r$} bit of~$q$ is~$d$ and of~$q'$ is~$1-d$ with~$d\in\{0,1\}$ (that is, where their~\ith{$r$} bits differ).
 Since for~$G_r=(V,E_r)$ it holds by construction that~$G_r-\{s,t\}$ contains two connected components,
 one containing the vertex set~$\bigcup_{i=0}^N h_{d}(i,r)$, 
 and the other containing the vertex set~$\bigcup_{i=0}^N h_{1-d}(i,r)$.
 Note that in~$G_r$,
 $v\in D^q$ is only connected to two vertices from~$\bigcup_{i=0}^N h_{d}(i,r)$,
 and~$v'\in D^{q'}$ is only  connected to two vertices from~$\bigcup_{i=0}^N h_{1-d}(i,r)$.
 Hence,
 $P_r-\{s,t\}$ contains vertices from two connected components,
 contradicting the fact that~$P_r$ is an~$s$-$t$~path in~$G_r$.
\end{proof}

\begin{proof}[Proof of~\cref{prop:orcrocoemps}]
 Let $I_1=(X_1,\calC_1),\dots,I_p=(X_p,\calC_p)$ be $p$, $p$ being a power of two, $\calR$-equivalent instances of \prob{Positive 1-in-3 SAT}  where~$N=|X|$ and~$M=|\calC|$.
 Let~$I=(\TGfull,s,t,k,\ell)$ be the instance obtained by \cref{constr:orcrocoemps} from~$I_1,\ldots,I_p$.
 Observe that $A \cup B \cup \{ s,t \}$ is a vertex cover of the underlying graph of $\TG$.
 Hence, we have that $k+\tau+\ug{\nu} \leq 2N+3 + \log(p)+M + N+4$.

 We claim that~$I$ is a \yes-instance if and only if at least one of~$I_1,\ldots,I_p$ is a \yes-instance.
 
 \LD{}
 Let~$X\subseteq X_q$ be a solution to~$I_q$, for some $q \in \set{p}$.

 We construct a solution~$(P_1,\dots,P_\tau)$ to~$I$ as follows.
 Set for each $r \in \set{\log(p)}$,
 \begin{align*}
  V(P_r) &= \bigcup_{x_i^q\in X} \{v_i^q\} \cup \{s,t\} \cup \bigcup_{0\leq i\leq N} h_d(i,r) \text{ } \\
  E(P_r) &= \{\{s,h_d(0,r)\}\}\cup \{\{t,h_d(N,r)\}\} \cup \bigcup_{x_i^q\in X} \{\{h_d(i-1,r),v_i^q\},\{h_d(i,r),v_i^q\}\}  \\ &\qquad \cup \bigcup_{x_i^q\in X_q\setminus X} \{\{h_d(i-1,r),h_d(i,r)\}\},\\ 
 \end{align*}
 where~$d=0$ if the~\ith{$r$} bit of~$q-1$ is $0$, and $1$ otherwise.
 Moreover, for each $r \in \set{M}$ set
 \begin{align*}
  V(P_{\tlog{p}+r}) & =\bigcup_{x_i^q\in X} \{v_i^q\} \cup \{s,t\} \cup \bigcup_{0\leq i\leq N} h_0(i,r), \\
  E(P_{\tlog{p}+r}) &=\{\{s,h_0(0,r)\},\{h_0(j-1,r),h_0(j,r)\},\{h_0(N,r),v_j^q\},\{t,v_j^q\}\}\\ 
					&\qquad \cup \bigcup_{x_i^q\in X \setminus \{x_j^q\}} \{\{h_0(i-1,r),v_i^q\},\{h_0(i,r),v_i^q\}\}  \\ 
					&\qquad \cup \bigcup_{x_i^q\in X_q\setminus X} \{\{h_0(i-1,r),h_0(i,r)\}\} ,\\
 \end{align*}
 where $x_j^q \in X^q \cap C^q_r$.

 First observe that~$|V(P_r)|\leq N+2+N+1$, for all $r \in \set{\tau}$.
 Second,
 observe that~$|\symdif{V(P_r)}{V(P_{r+1})}|= \ell$, for all $r \in \set{\tau-1}$.
 Finally, 
 we claim that~$P_r$ is an~$s$-$t$ path in $(V,E_r)$ for each $r \in \set{\tau}$.
 For~$P_r$ with~$r\leq \tlog{p}$,
 this follows by construction.
 Consider~$P_{\tlog{p}+r}$ with~$1\leq r\leq M$.
 Note that~$X$ contains exactly one~$x_j^q$ with~$x_j^q\in C_r^q$ ($j \in \set{N}$)
 and hence the subpath~$(h_0(N,r),v_i^q,t)$ of $P_{\tlog{p}+r}$ exists in $(V,E_{\log(p)+r})$.
 By construction the subpath of $P_{\tlog{p}+r}$ from $s$ to $h_0(N,r)$ also exists. 
 
 \RD{}
 Let~$(P_1,\dots,P_\tau)$ be a solution to~$I$.
 Due to~\cref{lem:orcrocoempsproptwo},
 we know that for all~$r\in\set{\tau}$ 
 it holds true that~$\emptyset\neq V(P_r)\cap D\subseteq D^q$ for some~$q\in \set{p}$.
 Let~$X=\{x_i^q \mid v_i^q \in V(P_1)\}$.
 We claim that~$X$ is a solution to~$I_q$,
 that is,
 for every clause~$C_r^q$ there is an~$x\in X$ with~$x\in C_r^q$.
 Consider the snapshot~$G_{\tlog{p}+r}=(V,E_{\tlog{p}+r})$.
 Since~$P_{\tlog{p}+r}$ is an~$s$-$t$~path in~$G_{\tlog{p}+r}$
 and $D$ is an $s$-$t$ separator in $G_{\tlog{p}+r}$,
 there is exactly one $v\in D$ such that subpath~$(h_0(N,r),v,t)$ is a subpath of $P_{\tlog{p}+r}$.
 We know that~$v\in D^q$,
 and hence there is an~$x\in X$ such that~$x\in C_r^q$.
\end{proof}
}

\noindent
\cref{prop:orcrocoemps} describes an OR-cross-composition from an \NP-hard problem to \vmsp{} parameterized by~$\ug{\nu}+\tau$,
and hence \cref{thm:vmspnopkvctau} follows~\cite{BodlaenderJK14}.
We leave open whether~\emsp{} allows for a problem kernel of size polynomial in~$\ug{\nu}+\tau$.

\section{Conclusion}

On the one extreme, 
our hardness results exploit that the temporal graph can change dramatically from one time step to another. 
On the other extreme, 
the NP-hard (and typically parameterized hard) \textsc{Length-Bounded Disjoint Path} problem~\cite{GT11} easily reduces to all four \textsc{MstP} variants with each snapshot having the same edge set. 
This leads to the natural question for further islands of computational 
tractability between these two extremes.
Moreover, for the similarity case, we leave open whether working with edge 
distances decisively differs from working with vertex distances.

The models we introduced (and future, more refined models based upon these) 
may find several applications
as they naturally capture time-dependent route-querying tasks.
Besides resolving questions we explicitly stated as open throughout the text, 
future work could address generalizing the ``consecutiveness'' property by requiring that also short sequences 
(as in the time-window model of temporal graphs~\cite{LFZ19,LVM18}) 
of consecutive paths are (pairwise) similar or dissimilar.
Furthermore, with introducing the ``dissimilarity view'' we entered new territory 
in the context of multistage problems; it seems natural to 
also study it for other problems beyond \textsc{$s$-$t$~Path}.
Finally, to analyze \textsc{$s$-$t$~Path} in the \emph{global multistage}\footnote{That is, 
the total sum over all differences between consecutive paths 
in the solution is upper-bounded.} %
setting is well-motivated as well~\cite{HeegerHKNRS19}.

\bibliography{msp-bib}

\ifarxiv{}\else{}
  \newpage
  \appendix
  \section*{Appendix}
  \appendixProofText
\fi{}

\end{document}